\newtheorem{proposition}{Proposition}
\title{{Smiles all around:}\\ FX joint calibration in a multi-Heston model}
\author[1]{Alvise De Col}
\author[2]{Alessandro Gnoatto}
\author[3,4]{Martino Grasselli}
\affil[1]{\small UBS-AG, UBS, Europastrasse 1, 8152 Opfikon, (Switzerland)}
\affil[2]{\small Mathematisches Institut, LMU M\"unchen (Germany)}
\affil[3]{\small Dipartimento di Matematica, Padova (Italy)}
\affil[4]{\small D\'{e}partement Math\'ematiques et Ing\'enierie
Financi\`{e}re, ESILV, Paris La D\'efense (France)}
\begin{document}
\maketitle
\begin{abstract}
We introduce a novel multi-factor Heston-based stochastic
volatility model, which is able to reproduce consistently typical
multi-dimensional FX vanilla markets, while retaining the
(semi)-analytical tractability typical of affine models and
relying on a reasonable number of parameters. A successful joint
calibration to real market data is presented together with various
in- and out-of-sample calibration exercises to highlight the
robustness of the parameters estimation. The proposed model
preserves the natural inversion and triangulation symmetries of FX
spot rates and its functional form, irrespective of choice of the
risk-free currency. That is, all currencies are treated in the same way.
\\
\end{abstract}

\section{Introduction}

The FX OTC market keeps on growing at an unabated speed. In less
than ten years the volume of FX spot, forward and option
transactions has more than tripled, reaching almost 4 trillion USD
daily turnover in 2010, see \cite{BIS2010}. 
From a modeling perspective, capturing the global nature of the FX
option market, is a highly non trivial task. While the modeling of
a single FX spot underlying in isolation has been thoroughly
analyzed and poses challenges that are similar in nature
with other single-dimensional asset classes, like equities, the
simultaneous representation of multiple FX spot rates is by no
means a straightforward extension. Unlike other asset classes,
both inversions and appropriate multiplications/divisions of FX
rates are tradable FX cross rates (eg, EUR/JPY can be derived from EUR/USD
and USD/JPY). For cases in which options on the crosses are liquidly
traded, a consistent model of multiple FX rates must be able to
reprice vanilla options that are written on different FX
underlying rates.

Let $S^\mathrm{d,f}(t)$ be the spot exchange rate at time $t$ as the amount of domestic (d) currency for one unit of
foreign currency (f). A consistent multi-dimensional FX model must be symmetric with respect to inversion
and triangulation, that is

\begin{itemize}
\item[1.] the flipped process $1/S^{i,l}(t)$ in the foreign (i.e.,
$l$) risk neutral measure follows the same type of process as the
original $S^{i,l}(t)$;
\item[2.] the inferred cross rate
$S^{l,m}(t) = S^{i,m}(t)/S^{i,l}(t)$ follows the same type of
process as the original main currency pairs $S^{i,m}(t)$,
$S^{i,l}(t)$ in its domestic (i.e., $l$) risk neutral measure.
\end{itemize}

In pre-smile times, the standard Black-Scholes model (see
\cite{bla73}, \cite{GarmanKohl}) could be easily extended to
describe multiple FX rates in a consistent fashion
\citep{book_Lipton,bookWystup06,bookClark11}. It is indeed trivial
to show that the inverse of a geometric Brownian motion is still a
geometric Brownian motion and moreover, by choosing the
correlation between the stochastic drivers of the two mains to be
\begin{equation}
\rho_{im-il}(t) = \frac{ \sigma_{il}^2(t) + \sigma_{im}^2(t)- \sigma_{lm}^2(t)}
                                                  {2\sigma_{il}(t)\sigma_{im}(t)},
\end{equation}
the cross rate $S^{l,m}(t)$ follows a geometric Brownian motion
with implied volatility $\sigma_{lm}^2(t)$ ($\sigma_{il}^2(t)$ and
$\sigma_{im}^2(t)$ are the implied volatilities of the two mains).

If we include the effect of volatility smiles, the extension from a single-dimensional model to a
multi-dimensional one is way less trivial. Beside preserving the specific FX symmetries, a suitable
model must be able to fit jointly not just the market volatility smiles of the main currency pairs, say
$S^{i,m}(t)$ and $S^{i,l}(t)$, but also the smile of the cross rate $S^{l,m}(t) = S^{i,m}(t)/S^{i,l}(t)$.
In case more than three currencies are present, say $N$, the model should be able to fit simultaneously
the market volatility smiles of $N-1$ mains and $(N-1)(N-2)/2$ crosses,
a task which is increasingly demanding with growing $N$.

In this paper we introduce a  multi-dimensional stochastic
volatility model that is based on the \cite{Heston93} model and is
able to satisfy the inversion and triangulation symmetries, while
being able to produce a satisfactory joint calibration of main and
cross implied volatility smiles. The choice of a stochastic
volatility model is consistent with the persistency of the
volatility smile effect over different maturities, indicating that FX spot returns are not normally distributed, as
observed in time-series analysis \citep{carrwu07}. Analysis of
butterflies and risk reversals times series otherwise show a
stable behavior for the excess kurtosis of the implied
risk-neutral distribution across currencies and expiries, whereas
risk-reversals appear to vary considerably over time, so that even
sign changes are present. This translates into an empirical
distribution whose skewness changes significantly over time.

The inversion symmetry is generally not satisfied by stochastic volatility models. Consider for example the popular SABR model
\begin{flalign*}
    &dS(t) = S(t)(r_\mathrm{d} - r_\mathrm{f}) dt + \sigma (t) S^\beta (t) dW(t), \\
    &d\sigma (t) = \alpha \sigma(t) dZ(t), \quad\quad d\langle W, Z\rangle_t = \rho dt.
\end{flalign*}
It can be shown by straightforward calculations that the inverted SABR process ($\hat S(t) = 1/S(t)$) in the foreign risk-neutral measure $\mathbb{Q}_f$ reads
\begin{flalign*}
    &d\hat S(t) = \hat S(t)(r_\mathrm{f} - r_\mathrm{d}) dt + \sigma (t) \hat S^{2 - \beta}(t) dW^{\mathbb{Q}_f}(t), \\
    &d\sigma (t)= \rho \alpha \sigma^2(t) \hat S^{1-\beta}(t) dt + \alpha \sigma(t) dZ^{\mathbb{Q}_f}(t), \quad\quad d\langle W^{\mathbb{Q}_f}, Z^{\mathbb{Q}_f}\rangle_t = -\rho dt.
\end{flalign*}
The inverted process is not of SABR type. Even for $\beta = 1$
(log-normal case) the stochastic volatility process has an
addition drift term proportional to the spot-volatility
correlation, spoiling the inversion symmetry. Other examples that
do not satisfy the inversion symmetries are the GARCH (e.g.\ see
\cite{lewis2000}) and the \cite{scott_stcev} models (i.e., popular
stochastic volatility process whose instantaneous volatility is
proportional to the exponential of a Ornstein-Uhlenbeck process).
The Heston model, however, naturally satisfies this symmetry, see
also \cite{note_dbr}.

The preservation of the triangulation symmetry depends on the
specification of the intra-currency pair correlation structure. If
we consider the simplest and often standard choice of constant
correlation, it is easy to show that the FX cross rate implied by
the division of two SABR currency pairs follows a process that is
not of SABR type. As long as we keep a constant correlation
between the main currency pairs, also a Heston specification for the stochastic
volatility process of the main FX rates leads to the
definition of a cross currency pair which is functionally
different. Hence, in order to achieve the
triangulation symmetry, one needs to use a different paradigm in
the specification of the correlation.

The model we present in this paper is a multi-factor stochastic
volatility model of \cite{Heston93} type. Multi-factor stochastic
volatility models in the context of FX derivative pricing are
increasingly popular. An example is the Wishart-based approach
proposed by \cite{BrangerMu} that focuses on the pricing of quanto
options. The Heston dynamics leads to an affine model which is
known to retain analytical tractability. We will provide a
complete discussion concerning the set of risk neutral measures
and the relation among model parameters under different
probability measures. Rather remarkably, as a consequence of the
specific Heston-type dynamics, the model remains functionally
invariant, after parameter rescaling, when the risk-neutral
measure is changed. This is a key feature that allows obtaining a
calibration with reasonable computational effort. We will then
test the model on real market data and show how a joint
calibration of the volatility smiles of EUR/USD/JPY and
AUD/USD/JPY triangles is possible. In- and out-of-sample
calibration tests will be reported to comment on the robustness of
the parameter estimation.

Previous analyses of the multi-dimensional FX volatility smile
problem have used different approaches to recover the risk neutral
probability distribution of the cross exchange rate, either by
means of joint densities or copulas, see \cite{austing2011},
\cite{bennettKennedy2004}, \cite{salmonSchneider2006}, and
\cite{hurdSalmonSchleicher2005}. Such contributions may be seen as
a generalization to the multi-dimensional setting of the classical
idea of \cite{bre78}, see also \cite{blissPani2002} and the
Gram-Charlier based approach in \cite{sch12}. The shortcoming of
these techniques is that they provide only a distribution for the
cross rate and not an explicit specification of the dynamics. In
the context of stochastic volatility models, \cite{CarrVerma}
propose a model with a single joint stochastic factor, which
however limits the flexibility to achieve satisfactory joint
calibrations. Another approach, in the presence of a SABR
stochastic volatility specification, is studied in
\cite{shirayaTakahashi2012} where asymptotic formulae are
presented.

The approach in this paper is fundamentally different.
Instead of putting the currency pairs at the basis of our model, we start from the
observation that any exchange rate may be seen as a ratio between
two quantities, the value of the currencies with respect to some
universal num\'eraire, and include this feature in the
specification of the model. \cite{flehu00} introduced the idea of a "natural numeraire", the value of which can be
expressed in different currencies, thus leading to consistent expressions for the FX rates as ratios. This point of view was also followed in \cite{article_heplaten06} and
\cite{bookplaten06} under the Benchmark approach. In this way, our model does not change
qualitatively depending on which perspective is used and there is no intrinsic difference
between main and cross currency pairs.  Independently of our work, the recent article by
\cite{doust2012} provides a stochastic volatility model of SABR
type where triangular relations hold. The approach is based on the concept of
intrinsic currency, introduced in \cite{article_Doust}.

Possible applications of the FX model we propose
are the valuation and risk management of multi-dimensional FX derivative options
and the possibility to reconstruct/simulate time series of less liquid cross currency pairs
from liquid ones, see \cite{doust2012}.

The paper is organized as follows: we present the model in Sec.\
2, initially using the perspective given by some kind of universal
num\'eraire. We continue with the basic properties of the model,
such as the presence of stochastic skewness, before presenting the
invariance of the model and transformation rule of its parameters
when the risk neutral measure is changed in Sec.\ 3. The explicit
formulae for the characteristic function and option prices are
given in Sec.\ 4. 
Finally, the joint calibration to EUR/USD/JPY and AUD/USD/JPY
market volatility smiles is presented in Sec.\ 5, together with a
discussion of the procedure and the results, including the Feller
condition and moment explosion.

\section{A Multifactor Heston-based exchange model}

We consider a foreign exchange market in which $N$ currencies are
traded between each other via standard FX spot and FX vanilla
option transactions. Inspired by the work of
\cite{article_heplaten06}, we start by considering the value of
each of these currencies in units of an artificial currency that
can be viewed as a universal num\'eraire. We will see that the
discussion is independent on the exact specification of this
num\'eraire. Let us work in the risk neutral measure defined by
the artificial currency and call $S^{0,i}(t)$ the value at time
$t$ of one unit of the currency $i$ in terms of our artificial
currency (so that $S^{0,i}$ can itself be thought as an exchange
rate, between the artificial currency and the currency $i$). We
model each of the $S^{0,i}$ via a multi-variate \cite{Heston93}
stochastic volatility model  with $d$ independent
Cox-Ingersoll-Ross (CIR) components \citep{cox85},
$\mathbf{V}(t)\in\mathbb R^d$. The dimension $d$ can be chosen
according to the specific problem and may reflect a PCA-type
analysis. We further assume that these stochastic volatility
components are \textit{common} between the different $S^{0,i}$.
Formally, we write
\begin{align}
    \frac{dS^{0,i}(t)}{S^{0,i}(t)}  &= (r^0-r^i) dt-(\mathbf{a}^i)^\top
                \sqrt{\mathrm{Diag}(\mathbf{V}(t))}d\mathbf{Z}(t),\quad \ i=1,\dots,N; \label{rates}\\
    dV_k(t)                        &= \kappa_k(\theta_k -V_k(t))dt + \xi_k \sqrt{V_k(t)}dW_k(t),\quad \ k=1,\dots,d;
\end{align}
where $\kappa_k, \theta_k, \xi_k\in \mathbb R$ are standard parameters in a CIR dynamics.
$\sqrt{\mathrm{Diag}(\mathbf{V})}$ denotes the diagonal matrix with the square root of the elements of the
vector $\mathbf{V}$ in the principal diagonal, this term is multiplied with the linear vector
$\mathbf{a}^i\in\mathbb R^d$ ($i=1,\dots,N$); as a result, the dynamics of the exchange rate is
driven by a linear projection of the variance factor $\mathbf{V}$
along a direction parametrized by $\mathbf{a}^i$, namely the total instantaneous variance is
$(\mathbf{a}^i)^\top \mathrm{\mathrm{Diag}}(\mathbf{V}(t)) \mathbf{a}^i dt$. In each monetary area $i$, the
money-market account accrues interest based on the deterministic risk free rate $r^i$,
\begin{align}
    dB^i(t)=&r^iB^i(t)dt,\quad \ i=1,\dots,N;
\end{align}
in our universal num\'eraire analogy $r^0$ is the artificial
currency rate. Finally, we assume that there is (only) a correlation between the innovations to $V_k$ and the innovations in the price with volatility $V_k$:
\begin{align}
    d\langle Z_k, W_h \rangle_t =& \rho_k \delta_{kh}dt, \quad k,h=1,\dots,d,\label{corr_struct}
\end{align}
 together with
$d\langle Z_k , Z_h \rangle_t = \delta_{kh}dt$ and $d\langle W_k , W_h \rangle_t = \delta_{kh}dt$, where
\begin{align*}
    \delta_{kh} & =\left\{
        \begin{array}{cc}
        1 & k=h,\\
        0 & k\not = h.
        \end{array}
        \right.
\end{align*}
This concludes the description of our model.\\

The idea behind this approach is that each exchange rate  is
driven by several independent drivers $Z_k$ ($k=1,..,d$), each
with an independent stochastic variance factor $V_k$, to which
$Z_k$ is partially correlated via $\rho_k$. The vectors
$\mathbf{a}^i$ ($i=1,\dots,N$) describe by how much each of the
different volatilities contributes to the dynamics of $S^{0,i}$.
This correlation structure is responsible for the appearance of
non-standard effects in the model,
like a stochastic skewness, as we will show in the sequel.\\

All in all, we have introduced a total number of parameters equal to $N_\mathrm{P} = Nd + 5d$
($Nd$ from the vectors $\mathbf{a}^i$ and 5 for each CIR process,  $\kappa_k, \theta_k, \xi_k,
\rho_k$ and the initial value $V_k(0)$) to describe the volatility skew of $(N^2-N)/2$ currency pairs.
As rule of thumb, assuming that each currency pair can be approximately modelled by a standard
one-dimensional Heston model, which is described by 5 parameters, around $5(N^2-N)/2$
parameters are needed to fit all volatility surfaces; the value of $d$ should be chosen to produce
approximately this number of parameters, if not less, to avoid instabilities due to overfitting.\\

Let us now turn our attention to the exchange rate $S^{i,j}$ between two
different currencies, say $i$ and $j$. We set by definition  $S^{i,j} = S^{0,j}/S^{0,i}$.
By straightforward calculation, we obtain for $i,j=1,..,N$:
\begin{align}
    \frac{dS^{i,j}(t)}{S^{i,j}(t)}    &= (r^i-r^j)dt+(\mathbf{a}^i -\mathbf{a}^j)^\top\mathrm{\mathrm{Diag}}(\mathbf{V}(t))\mathbf{a}^i dt
                                    +(\mathbf{a}^i - \mathbf{a}^j)^\top\sqrt{\mathrm{Diag}(\mathbf{V}(t))}d\mathbf{Z}(t). \label{fxrates}
\end{align}
At this stage we are still working under the risk neutral measure defined by the universal num\'eraire. The additional
drift term in (\ref{fxrates}) can be understood as a quanto adjustment between the artificial currency 0 and $i$.
Note also that the model is functionally symmetric with respect to which FX pairs we choose
to be the main ones and which one the cross\footnote{Note that if we start with a different specification
of the model it may happen that taking the dynamics of the ratio of a pair breaks the structure of the dynamics,
namely the model would not be functionally symmetric. This is what happens for example in the SABR model as
illustrated in the introduction, or if we consider a correlated GARCH volatility model, see e.g. the model
specifications in \cite{lewis2000}
and \cite{hullWhite87}.}. \\

Let us now  analyze some additional properties of the model and
familiarize with the meaning of the different parameters, starting from $\mathbf{a}^i$.
 A rather natural choice would be to set $\mathbf{a}^i$ equal to the canonical basis
$\mathbf{e}^i$ (i.e. the $i$-th element of the canonical basis of $\mathbb R^N$, $e^i_l=\delta_{li}, \ i,l=1,..,N$),
then (\ref{fxrates}), for $i\neq j$, reads (note that equal indices are not summed)
\begin{align}
    \frac{dS^{i,j}(t)}{S^{i,j}(t)}&=(r^i-r^j)dt+V^i(t) dt+\sqrt{V_i(t)}dZ^i(t)-\sqrt{V_ j(t)}dZ^j(t),
\end{align}
which in the 3-currency case leads to the 3-factor Heston model.  The problem with this choice is that the covariances
(and thus the correlations) between different pairs are forced to be positive,
\begin{align}
d \left\langle S^{i,j},S^{i,l}\right\rangle_t = S^{i,j}(t)S^{i,l}(t)V^i(t)dt \geq 0.
\end{align}
There is no empirical evidence for this inequality to hold in general between FX rates, see \cite{carrwu07}.
The additional vectors $\mathbf{a}^i$ are needed to describe a multi-dimensional FX market where the correlation may change sign.

To shed some additional light on the meaning of the vectors $\mathbf{a}^i$ we calculate the
  infinitesimal correlation $\varsigma^{i,j}$  between the log
returns of $S^{i,j}$ and the squared volatility $\mathrm{Vol}^2(S^{i,j})$: the quantity $\varsigma^{i,j}$ is known
to be related to the skewness of the distribution of the log-returns of the spot, see e.g. \cite{carrwu07}.
\begin{align}
    \varsigma^{i,j} (t) = \frac{d\langle \ln S^{i,j}, \mathrm{Vol}^2(S^{i,j}) \rangle_t}
                  {\sqrt{d\langle \ln S^{i,j}\rangle_t} \sqrt{d\langle \mathrm{Vol}^2(S^{i,j})\rangle_t}}.
\end{align}
Differently from standard single factor models, multifactor Heston models produce  stochastic skewness.
In fact, by means of straightforward calculations we obtain
\begin{align}
 d\left\langle \ln S^{i,j},\mathrm{Vol}^2(S^{i,j}) \right\rangle_t\nonumber& =d\left\langle\int_0^\cdot\sum_{l=1}^{d}{\left(a^i_l-a^j_l\right)
 \sqrt{V_l}dZ_{l}},\int_0^\cdot\sum_{k=1}^{d}\left(a^i_k-a^j_k\right)^2\xi_k\sqrt{V_k}dW_{k} \right\rangle_t\nonumber\\
&=\sum_{k=1}^{d}\left(a^i_k-a^j_k\right)^3\xi_kV_k(t)\rho_kdt\qquad i,j=1,..,N.
\end{align}
Combining this term with
\begin{align*}
d    \left\langle \ln S^{i,j},\ln S^{i,j}\right\rangle_t
    &=
    \left(\mathbf{a}^i-\mathbf{a}^j\right)^\top\mathrm{Diag}(\mathbf{V}(t))\left(\mathbf{a}^i-\mathbf{a}^j\right)dt\\
&= \sum_{k=1}^{d}\left(a^i_k-a^j_k\right)^2V_k(t)dt\end{align*}
and
\begin{equation*}
d\left\langle \mathrm{Vol}^2(S^{i,j}), \mathrm{Vol}^2(S^{i,j})\right\rangle_t=\sum_{k=1}^{d}\left(a^i_k-a^j_k\right)^4\xi_k^2V_k(t)dt
\end{equation*}
gives
\begin{align}
\varsigma^{i,j} (t)=\frac{\sum_{k=1}^{d}\left(a^i_k-a^j_k\right)^3\xi_kV_k(t)\rho_k}{\sqrt{\sum_{k=1}^{d}\left(a^i_k-a^j_k\right)^4\xi_k^2V_k(t)}\sqrt{\sum_{k=1}^{d}\left(a^i_k-a^j_k\right)^2V_k(t)}}.
\end{align}

  The
vectors $\mathbf{a}^i$ are therefore directly related to the amount of skewness for each of the different exchange
rates.
This quantity is stochastic due to the presence of the variance factors $V_k$ (see also \cite{ChristHest} who found similar results in the equity market using a multi-Heston framework) and can assume positive as negative sign according to the relative importance of the coefficients $a^i_k,a^j_k$ in the summation, that is according to the relative importance of the volatility factor $V_k$ in each currency. The same argument applies to the instantaneous covariance between the assets that can be written as\footnote{Note that \begin{align*}
d\langle S^{ij}, S^{il}\rangle_t&= (X^{i,j})^\top X^{i,l} dt,
\end{align*}
where
 \begin{align*}
X^{i,j} &= S^{ij}(t)\sqrt{\mathrm{Diag}({\bf V}(t))} (\mathbf{a}^i-\mathbf{a}^j)\\
X^{i,l} &= S^{il}(t)\sqrt{\mathrm{Diag}({\bf V}(t))}
(\mathbf{a}^i-\mathbf{a}^l).
\end{align*}
Since $\mathbf{a}^i,\mathbf{a}^j,\mathbf{a}^l$ are arbitrary real vectors, the infinitesimal covariance can be any real number,
and the corresponding infinitesimal correlation spans the entire interval [-1,1].
}
\begin{align*}
d\langle S^{ij}, S^{il}\rangle_t&= S^{ij}(t)S^{il}(t)(a^i-a^j)^\top \mathrm{Diag}({\bf V}(t)) (a^i-a^l)dt\\
                     &= S^{ij}(t)S^{il}(t)\sum_{k=1}^d (a_k^i-a_k^j)(a_k^i-a_k^l)V_k(t) dt.
\end{align*}

\section{Num\'eraire invariance}

Up to now we have worked under the risk neutral measure defined by our (rather unspecified) artificial currency.
In practical pricing applications, it is more convenient to change the num\'eraire to any of the currencies included in our
FX multi-dimensional system. Without loss of generality, let us consider the risk neutral measure defined by the $i$-th
money market account $B^i$ and derive the dynamical equations for the standard FX rate $S^{i,j}$.\\

Under the assumptions of the fundamental theorem of asset pricing (cfr. e.g. \cite{bjork}, chapters 13 and 14), investing into the foreign money market account gives a traded asset with value $S^{i,j}B^j$, whose discounted value has to be a $\mathbb{Q}^i$-martingale.
Hence,
\begin{align}
   d\left(\displaystyle \frac{S^{i,j}(t)B^j(t)}{B^i(t)}\right)  & = \frac{S^{i,j}(t)B^j(t)}{B^i(t)}
                     \left((\mathbf{a}^i - \mathbf{a}^j)^\top\mathrm{Diag}(\mathbf{V}(t))
                       \mathbf{a}^i dt+(\mathbf{a}^i - \mathbf{a}^j)^\top
                   \sqrt{\mathrm{Diag}(\mathbf{V}(t))}d\mathbf{Z}(t)\right)\nonumber\\
                     & = \frac{S^{i,j}(t)B^j(t)}{B^i(t)}(\mathbf{a}^i - \mathbf{a}^j)^\top\sqrt{\mathrm{Diag}(\mathbf{V}(t))}
                    d\mathbf{Z}^{\mathbb{Q}^i}(t).
\end{align}
In the last line we implicitly defined the new Brownian motion vector
$\mathbf{Z}^{\mathbb{Q}^i}$ under the measure $\mathbb{Q}^i$ by imposing the
$\mathbb{Q}^i$-local martingale property and by Girsanov theorem
\begin{align}
    d\mathbf{Z}(t)^{\mathbb{Q}^i}=d\mathbf{Z}(t)+\sqrt{\mathrm{Diag}(\mathbf{V}(t))}\mathbf{a}^i dt,\quad i=1,..,N.
\label{girsanovQi}\end{align}
%
%
The $\mathbb Q^i$ risk neutral dynamics of the exchange rate $S^{i,j}$ becomes
\begin{align} \label{fxrateij_qi}
    dS^{i,j}(t) & = S^{i,j}(t)\left((r^i-r^j)dt+(\mathbf{a}^i  -
    \mathbf{a}^j)^\top\sqrt{\mathrm{Diag}(\mathbf{V}(t))}d\mathbf{Z}^{\mathbb{Q}^{i}}(t)\right),
\end{align}
as desired.

The measure change has also an impact on the variance processes, via the correlations $\rho_k, k=1,..,d$,
\begin{align}
    dW_k^{\mathbb{Q}^i}(t) & = dW_k(t) + \rho_k \left(\mathbf{e}^k\right)^\top \sqrt{\mathrm{Diag}(\mathbf{V}(t))}\mathbf{a}^i dt.
\end{align}
%
We finally obtain the dynamic equations under the new measure. With an appropriate redefinition of the CIR parameters
\begin{align}
    \rho_k^{\mathbb{Q}^i}    = & \rho_k, \nonumber \\
    \kappa_k^{\mathbb{Q}^i}  = & \kappa_{k}+\xi_k\rho_k a^i_k, \nonumber \\
    \theta_k^{\mathbb{Q}^i}  = & \theta_k\frac{\kappa_k}{\kappa_k^{\mathbb{Q}^i}},\nonumber
\end{align}
we can recast the variance SDE in its original form
\begin{equation}
    dV_{k}(t)= \kappa_k^{\mathbb{Q}^i}(\theta_k^{\mathbb{Q}^i} - V_k(t)) dt + \xi_k \sqrt{V_k(t)}dW_k^{\mathbb{Q}^i}(t).\label{VunderQi}
\end{equation}
Financially, it makes sense to
enforce mean reversion of the variance, other than mean explosion, yielding a condition on $\kappa_k^{\mathbb{Q}^i} > 0$
or conversely on the original model parameters $\kappa_{k}$, $\xi_k$, $\rho_k$, and
$a^i_k$.\\
By applying
Girsanov theorem again, this time switching to the $\mathbb{Q}^j$ risk neutral measure, the CIR parameters become
\begin{align}
    \kappa_k^{\mathbb{Q}^j} & = \kappa_k^{\mathbb{Q}^i} +\rho_k \xi_k (a^j_k-a^i_k),\nonumber\\
    \theta_k^{\mathbb{Q}^j} & = \theta_k^{\mathbb{Q}^i}\frac{\kappa_k^{\mathbb{Q}^i}}{\kappa_k^{\mathbb{Q}^j}},\label{cirparams}
\end{align}
together with the invariant $\rho_k^{\mathbb{Q}^j}  = \rho_k^{\mathbb{Q}^i}$ and $\xi_k^{\mathbb{Q}^j}  =
\xi_k^{\mathbb{Q}^i}$. These are the fundamental transformation rules for the model parameters.
The invariance of the functional form of the model under measure change is an appealing feature of our model; other
specifications of the stochastic volatility might break this symmetry.

\section{Option pricing}

Together with the symmetry of the model specification with respect to the num\'eraire choice, a second central
feature of the model is the availability of a (semi)-analytical solution for {\it all} vanilla option prices.
The pricing formula itself is symmetric with respect to the choice of the option underlying, once we work under the risk neutral measure
associated with one of the currencies involved in the option and the parameters are transformed via (\ref{cirparams}).\\

Let us consider a call option $C(S^{i,j}(t),K^{i,j},\tau), i,j=1,..,N,i\not=j,$ on a generic FX rate $S^{i,j}(t) = \exp(x^{i,j}(t))$ with
strike $K^{i,j}$, maturity $T$ ($\tau = T - t$ is the time to maturity)  and face equal to one unit of the foreign currency.
We write for the CIR parameters $\kappa_k = \kappa_k^{\mathbb{Q}^i}, \theta_k=\theta_k^{\mathbb{Q}^i}$ and so on, implicitly assuming that
they have  been transformed via (\ref{cirparams}) in the $i$-th risk neutral measure ${\mathbb{Q}^i}$. Being an affine model,
the (generalized) characteristic function conditioned on the initial values
\begin{align}
    \phi^{i,j}(\omega,t, \tau, x ,\mathbf{V}) =  \mathbb{E}^{\mathbb{Q}^i}_t[e^{\mathtt{i}\omega x^{i,j}(T)}|x^{i,j}(t) = x,\mathbf{V}(t) = \mathbf{V}  ]
\end{align}
can be derived analytically (here $\mathtt{i}=\sqrt{-1}$).  Standard numerical integration methods can then be used to invert
the Fourier transform to obtain the probability density at $T$ or the vanilla price via integration
against the payoff, with overall limited computational effort.
By applying standard arguments (see e.g. \cite{lewis2000}, \cite{article_Lipton}, \cite{sepp03}) the value of a call option
can be expressed in terms of the integral of the product of the Fourier transform of the payoff and the generalized characteristic
function of the log-asset price\footnote{Here we adopt the pricing method of \cite{lewis2000} who uses the characteristic function computed with a complex argument, also called generalized characteristic function. The complex argument $\omega$ belongs to a strip of regularity for  the function $\phi^{i,j}$ in order to be able to integrate the payoff function. On the other hand, this method generalizes the methodology introduced by \cite{article_Carr99} which involves the introduction of the so-called damping integrating factor.}:
\begin{align}
    C(S^{i,j}(t),K^{i,j},\tau)& =e^{-r^i\tau}\frac{1}{ 2\pi }\int_{\mathcal{Z}}\phi^{i,j}(-\lambda,t, \tau, x ,\mathbf{V}){\Phi}(\lambda)d\lambda,  \label{price}
\end{align}%
where
\begin{equation*}
    \Phi(\lambda)=\int_{\mathcal{Z}}e^{\mathtt{i} \lambda x}\left( e^{x}-K^{i,j}\right)^+dx
\end{equation*}
is the Fourier transform of the payoff function and $\mathcal{Z}$
denotes the strip of regularity of the payoff, that is the
admissible domain where the integral in (\ref{price}) is well
defined. In other words, the pricing problem is essentially solved
once the (conditional) characteristic function of the log-exchange
rate is known. In what follows we calculate the moment generating
function $G^{i,j}(\omega,t, \tau, x ,\mathbf{V})$ (Laplace
tranform) from which the characteristic function is easily derived
via a rotation in the complex plane $\phi^{i,j}(\omega,t, \tau, x
,\mathbf{V})=G^{i,j}(\mathtt{i}\omega,t, \tau, x ,\mathbf{V})$.
The conditional Laplace transform is of a particularly simple form, which is exponentially affine in the initial state of the process
\begin{align}
    G^{i,j}(\omega,t, \tau, x ,\mathbf{V})&=\exp\left[\omega x+\left(r^i-r^j\right)\omega(\tau)+\sum_{k=1}^{d}{\left(A_{k}^{i,j}(\tau)+B^{i,j}_{k}(\tau)V_k\right)}\right],
\end{align}\label{LTMH1}
where for $k=1,..,d$:
\begin{align}
    A^{i,j}_k(\tau)&=\frac{2\kappa_k\theta_k}{\xi_k^{2}}\log{\frac{\lambda_k^+-\lambda_k^-}{\lambda_k^+e^{\lambda_k^-(\tau)}-\lambda_k^-e^{\lambda_k^+(\tau)}}};\label{LTMH3}\\
    B^{i,j}_k(\tau)&=\frac{\left(\omega^2-\omega\right)}{2}\left(a^i_k-a^j_k\right)^2\frac{1-e^{-\sqrt{\Delta_k}\tau}}{\lambda_k^+e^{-\sqrt{\Delta_k}\tau}-\lambda_k^-};\label{LTMH2}\\
    \Delta_k&=\left(-\kappa_k+\omega\left(a^i_k-a^j_k\right)\rho_k\xi_k\right)^2-\xi_k^2\left(\omega^2-\omega\right)\left(a^i_k-a^j_k\right)^2;\\
    \lambda_k^\pm&=\frac{\left(-\kappa_k+\omega\left(a^i_k-a^j_k\right)\rho_k\xi_k\right)\pm\sqrt{\Delta_k}}{2}.
\end{align}
The derivation of this formula can be found in the Appendix.


\section{Simultaneous calibration of FX triangles}\label{sec:calib}

\subsection{Setup}

In this section we show an example of simultaneous calibration to
three market volatility surfaces of options. We consider two
currency triangles: EUR/USD/JPY as it appeared on the day
23/7/2010 and AUD/USD/JPY on 2/11/2012. Differences between the
two sets of market data are due both to the different currency
pair involved, e.g. replacing the EUR with the larger yield
carry-trade AUD currency, and the different time-stamp. Note in
particular the pronounced skew in the EURUSD volatility (bid for
the EURUSD puts, see Fig.\ \ref{fig:USDEUR}) during the European
debt crisis in mid 2010 and the almost symmetric shape of the
USDJPY volatilities, usually bid for the USDJPY puts, prior to BOJ
currency easing efforts and the then ongoing USD rally in late
2012, see Fig.\ \ref{fig:USDJPY6}. For each of the two case
studies we consider the implied volatility surfaces for each pair
in the triangle, eg. for EUR/USD/JPY, the pairs USD/EUR, USD/JPY
and EUR/JPY, that is $N=3$ with $i = \mathrm{USD};\ \mathrm{EUR};\
\mathrm{JPY}$. The volatility sample includes expiry dates ranging
from 3 days to 5 years. The quotes follow the standard Delta
quoting conversion in the FX option market, we have quotes on DN,
25 Delta, 15 Delta, and 10 Delta\footnote{\label{mktconv}It is important to
stress that in the forex market implied volatilities surfaces are
expressed in terms of maturity and Delta (see e.g.
\cite{Wystup10}, \cite{bookClark11}): the market practice is to
quote volatilities for strangles and risk reversals which can then
be employed to reconstruct a whole surface of implied volatilities
via an interpolation method (see e.g. \cite{Wystup10},
\cite{bookWystup06}, \cite{bookClark11}). Once we have the quotes
in terms of Delta, to perform the calibration we have to convert
Deltas into strike prices. The procedure can be found
e.g. in \cite{BenHuiz03}.}.\\

Let us concentrate here on the EUR/USD/JPY example, the other case follows the same procedure.
We try to fit simultaneously the three volatility surfaces using two stochastic
drivers, $d = 2$. This choice yields a total number of parameters $N_P = 16$,
comparable to the number of parameters in 3 independent Heston models (15
parameters). This choice should not lead to overfitting instabilities. We work under
the USD risk neutral measure to derive the option prices of the pairs EUR/USD
and USD/JPY and the EUR measure for the EUR/JPY options, using (\ref{price}). We
calibrate the CIR parameters in the USD measure $\kappa_k^{\scriptscriptstyle\mathrm{USD}},
\theta_k^{\scriptscriptstyle\mathrm{USD}},\xi_k^{\scriptscriptstyle\mathrm{USD}},\rho_k^{\scriptscriptstyle\mathrm{USD}},\ k=1,2$. The
parameters for the EUR/JPY are transformed to the EUR measure through
Eqs.\ (\ref{cirparams}) and the invariance property of correlation and vol-of-vol parameters.\\

The calibration is done via a standard non-linear least-squares optimizer that
minimizes the total calibration error in terms of the difference between calibrated
and target implied volatities $\sum_{n}(\sigma_{n,\mathrm{market}}^\mathrm{imp}-\sigma_{n,\mathrm{model}}^\mathrm{imp})^2$.
The use of a norm in price should be avoided as the numerical range for option prices may be large,
thus introducing a bias in the optimization. In fact, a norm in price penalizes greatly high prices,
so that the fit for short maturity options (which are cheaper) is quite poor. For a more detailed discussion
on the impact of the penalizing function on the calibrated parameters we refer to \cite{ChristJacobs04} and \cite{dafgra11}.

\subsection{Calibration results}
In Figs.\ \ref{fig:USDEUR}, \ref{fig:USDJPY} and \ref{fig:EURJPY}
we plot the market implied volatilities against those produced by
the model. The plots refer to the largest sample in Table\
\ref{tab:params}. Market volatilities are denoted by crosses,
model volatilities are denoted by circles. The quality of the fit
is comparable if not superior with respect to what is usually
achieved by means of the standard Heston model for a single
currency pair. The plots for the calibration on the sub-samples
are completely analogous.

In Table \ref{tab:params} we report the result of the calibration of the model for the EUR/USD/JPY triangle, whereas Fig.\ \ref{fig:SE_1},
reports the squared error in volatility for each moneyness/maturity. We have performed the
optimization considering different sets of expiries. The expiries considered in the largest sample are the following:
1, 2, 3, 6, 9 months and 1 year. The result for this particular choice of expiries is reported in the first column on the left.
Then we have repeated the experiment by excluding the largest expiry, 1 year. The result is reported in the second column. We
proceed in this way by excluding more and more expiries. The smallest sample is reported in the last column and considers
only options expiring in 1 and 2 months. In-sample squared errors in implied volatilities are visualized in Figs.\ \ref{fig:SE_2},
\ref{fig:SE_3}, \ref{fig:SE_4} \ref{fig:SE_5}.

For the AUD/USD/JPY case, we limit ourselves to report in Figs.\ \ref{fig:USDJPY6},
\ref{fig:AUDUSD6} and \ref{fig:AUDJPY6} the result of the fit on the largest sample, consisting of implied
volatilities at 1, 2, 3, 6, 9 months and 1 year. In Fig.\ \ref{fig:SE_11} we show the squared errors in implied
volatilities for each point of the surface that we are considering. We report in Table\ \ref{tab:params_AUDUSDJPY} the
calibrated parameters. Also in this case the calibration yields
a satisfactory fit to the market data.

\subsection{Parameters stability tests}

In this subsection we comment on the stability of the parameters via two different
types of analysis. We first measure the impact on the parameters resulting from the calibration procedure.
Secondly, we fit the model parameters to a certain sample and then use these parameters to price an option
which is not included in the sample. If the out-of-sample prices are close to the market, the model gives
a reasonable description of the joint underlying FX rates dynamics. Moreover, the calibration can be done
on a limited set of expiries, reducing the computation effort of the optimizer.\\

As far as the first analysis is concerned, we show in Table\ \ref{tab:variation} the relative variations
computed with respect to the largest sample. With the exception of $\kappa_1,\kappa_2$ we can see that there is a
good degree of stability of the parameters across the sub-samples. Consequently, we perform also a second calibration experiment, where we fix $\kappa_k=1,k=1,2$.
The results of this experiment are outlined in Table\ \ref{tab:params2}. The relative variation of the parameters
can be found in Table\ \ref{tab:variation2}. We notice that with this choice we get a good degree of stability,
the most relevant fluctuation is now around 20\% for $\theta_1$\footnote{We do
not report, for the sake of brevity, the volatility surfaces arising from this last
experiment, but the quality of the fit is the same as before.}.  \\

Let us now turn our attention to the out-of-sample exercise. In Tables\ \ref{tab:OOS1}, \ref{tab:OOS2}, \ref{tab:OOS3} and
\ref{tab:OOS4} we show the difference between the market and the out-of-sample volatility for all sub-samples. The differences
are always well below one volatility point.

\subsection{Moment explosion}

In this subsection we discuss some caveats coming from the parameters we obtained through the calibration procedure.
It is known that for square root processes $0$ represents
an attainable state when the Feller condition is not satisfied, that is when $2\kappa_k\theta_k<\xi_k^2$.
In our modelling framework we have two volatility factors, hence we can perform the check for each factor. In
Table\ \ref{tab:Feller} we report the quantity $F_k=2\kappa_k\theta_k-\xi_k^2, k=1,2$. We observe violations of the
Feller condition, which constitutes a well-known fact in the FX derivative practice, shared with
the standard one-dimensional Heston model, see\ \cite{bookClark11}. This phenomenon is strictly related to another
established fact in stochastic volatility models, namely the pathological moment explosions which might often
impact the stability of the pricing tools, see e.g.\ \cite{article_AndPit}, \cite{article_KRME} and \cite{article_GlassKim}.
The model dynamics might lead to the explosion of moments, which become infinite in finite time. This fact might
lead to complications/instabilities in standard numerical pricing routines mostly for large maturities.
We can calculate the time of moment explosion for all currency pairs, see \cite{article_AndPit}. In Table\ \ref{tab:expl}
we consider moments up to order 5.\\

\section{Conclusions}

We have introduced a new multi-factor stochastic volatility Heston-based model that can
provide an accurate joint description of multiple FX vanilla options across different currency pairs. The emphasis in the model specification
has been in the preservation of the specific symmetries of FX markets. Differently from other asset classes, appropriate
multiplications/divisions and inversions of FX rates are still FX rates.
The choice of our simple CIR-based dynamics for the stochastic variance is instrumental in achieving this symmetry. 
We have indeed proven that our model is invariant with respect to the choice of the num\'eraire once the model parameters
are appropriately transformed. The model is always of affine-type independently of which currency is used as risk free, leading
to semi-analytical expression for all vanilla options between {\it any} of two currencies. This property is crucial when
it comes to calibrating the model. In a standard global optimization algorithm we can consider together vanilla options in
all currency pairs and achieve a simultaneous fit to the different volatility surfaces with reasonable computational effort.

The model shares naturally several stylized facts with the Heston
model. The Feller condition is often violated when fitting the
model to FX volatility surfaces, a common observation in the
practice. Moreover, higher moments of the spot distribution
explode at finite time; a property that might lead to
complications/instabilities in standard numerical pricing routines
mostly when maturities are large. Finally, like any pure
stochastic volatility model, our model cannot be expected to
deliver a perfect calibration of the vanilla surfaces across all
Deltas and tenors, especially in the short end.

Having said that, the main result of the paper is a promising joint calibration of the model to the implied volatilities smiles of the EUR/USD/JPY
and AUD/JPY/USD FX triangles. The fit remains satisfactory across the currency pairs, Deltas and tenors which were considered.
Several in- and out-of-sample calibration studies in fact have proven the robustness of the calibration, especially once the mean reversion speed
$\kappa$ has been fixed. Asymptotic expansions of the implied volatility surface are also included in the Appendix as they shed light
on the meaning of the different model parameters and can help speeding up the calibration procedure by giving an educated guess
for the initial parameters in the optimization procedure.

The price to pay in order to obtain a consistent simultaneous calibration to all volatilities surfaces is that the instantaneous volatilities of the
currency pairs do not have single dedicated drivers. Their dynamics is rather brought about by a linear combination of several hidden stochastic
factors. As in any principal component analysis, it is not easy to assign a financial meaning to each model parameter. As this study
has shown, this appealing feature has most likely to be traded away in order to capture the complex phenomenology of the present global and widely
interconnected FX markets.

\section{Acknowledgements}

We are grateful to Jan Baldeaux, Damiano Brigo, Imran Hafeez,
Patrick Kuppinger, Eckhard Platen, Wolfgang Runggaldier and an
anonymous referee for useful comments.

\newpage

\section{Images and Tables}
\subsection{Calibration of EUR/USD/JPY}
\begin{table}[H]
    \centering
        \begin{tabular}{ccccccc}
&6&5&4&3&2\\
\hline\\
$V_1$& 0.0137  &  0.0137 &   0.0136 &   0.0137 &   0.0135\\
$V_2$& 0.0391  &  0.0365 &   0.0278 &   0.0293 &   0.0273\\
$a^{\scriptscriptstyle\mathrm{USD}}_1$& 0.6650  &  0.6713 &   0.6165 &   0.6371 &   0.6518\\
$a^{\scriptscriptstyle\mathrm{USD}}_2$& 1.0985  &  1.0531 &   0.9700 &   0.9795 &   0.9514\\
$a^{\scriptscriptstyle\mathrm{EUR}}_1$& 1.6177  &  1.6222 &   1.5648 &   1.5804 &   1.6061\\
$a^{\scriptscriptstyle\mathrm{EUR}}_2$& 1.3588  &  1.3208 &   1.2746 &   1.2797 &   1.2737\\
$a^{\scriptscriptstyle\mathrm{JPY}}_1$ & 0.2995  &  0.3151 &   0.2732 &   0.3035 &   0.3116\\
$a^{\scriptscriptstyle\mathrm{JPY}}_2$ & 1.6214  &  1.5922 &   1.5882 &   1.5858 &   1.5816\\
$\kappa_1$& 0.9418  &  1.1432 &   1.5138 &   1.7349 &   1.8685\\
$\kappa_2$& 1.7909  &  1.9998 &   1.9014 &   0.7142 &   0.7210\\
$\theta_1$& 0.0370  &  0.0349 &   0.0329 &   0.0329 &   0.0297\\
$\theta_2$& 0.0909  &  0.0839 &   0.0670 &   0.1236 &   0.1091\\
$\xi_1$& 0.4912  &  0.5138 &   0.5542 &   0.5847 &   0.5962\\
$\xi_2$& 1.0000  &  0.9997 &   0.8736 &   0.8318 &   0.8568\\
$\rho_1$& 0.5231 &   0.5118  &  0.4916   & 0.4727   & 0.4567\\
$\rho_2$& -0.3980&   -0.3956 &  -0.3943  & -0.3902  & -0.3728\\
Res. norm.& 4.6996e-004&3.4244e-004&1.8618e-004&1.1145e-004&5.2514e-005\\
\hline
        \end{tabular}
    \caption{This table reports the results of the calibration of the model.
    We concentrate on the two factor case. For each column, a different number of expiries,
    ranging from 6 to 2, is chosen. More specifically, 6 means that the following expiries are
    considered: 1, 2, 3, 6, 9 months and 1 year, whereas 5 means that the longest maturity, i.e. 1
    year is excluded from the sample. We proceed analogously in the subsequent columns by excluding
    the longest expiry date up to the point where we perform the calibration on the 2-sample, where
    we fit the smile at 1 and 2 months. We consider market data as of 23rd July 2010. The reference
    exchange rates are $S^{\scriptscriptstyle\mathrm{JPY},\mathrm{EUR}}(0)=112.29$,
    $S^{\scriptscriptstyle\mathrm{USD},\mathrm{EUR}}(0)=1.2921$ and $S^{\scriptscriptstyle\mathrm{JPY},\mathrm{USD}}(0)=86.90$.
    Res. norm. is the residual of
    the objective function for the given set of parameters.}
    \label{tab:params}
\end{table}

\begin{table}[H]
    \centering
        \begin{tabular}{ccccccc}
&5&4&3&2\\
\hline\\
$V_1$   &   0.1244\% &  -0.2866\% &    0.0960\% &   -1.1068\%\\
$V_2$&  -6.5645\% &  -28.9269\% &  -25.0960\% &  -30.0900\%\\
$a^{\scriptscriptstyle\mathrm{USD}}_1$& 0.9368\% &   -7.3035\% &   -4.1928\% &   -1.9883\%\\
$a^{\scriptscriptstyle\mathrm{USD}}_2$& -4.1309\% &  -11.6957\% &  -10.8309\% &  -13.3918\%\\
$a^{\scriptscriptstyle\mathrm{EUR}}_1$&  0.2745\% &   -3.2714\% &   -2.3082\% &   -0.7190\%\\
$a^{\scriptscriptstyle\mathrm{EUR}}_2$&  -2.7989\% &   -6.1962\% &   -5.8255\% &   -6.2652\%\\
$a^{\scriptscriptstyle\mathrm{JPY}}_1$ & 5.1809\% &   -8.8010\% &    1.3206\% &    4.0245\%\\
$a^{\scriptscriptstyle\mathrm{JPY}}_2$ &  -1.7985\% &   -2.0460\% &   -2.1910\% &   -2.4522\%\\
$\kappa_1$&  21.3845\% &   60.7349\% &   84.2055\% &   98.3943\%\\
$\kappa_2$& 11.6649\% &    6.1715\% &  -60.1213\% &  -59.7402\%\\
$\theta_1$& -5.6226\% &  -11.1784\% &  -11.0318\% &  -19.7810\%\\
$\theta_2$& -7.7145\% &  -26.3082\% &   36.0430\% &   20.0453\%\\
$\xi_1$& 4.6020\% &   12.8359\% &   19.0424\% &   21.3784\%\\
$\xi_2$&  -0.0244\%&  -12.6344\%&  -16.8201\%&  -14.3193\%\\
$\rho_1$& -2.1702\% &   -6.0305\% &   -9.6495\% &  -12.7003\%\\
$\rho_2$& -0.6031\% &   -0.9375\% &   -1.9522\% &   -6.3351\%\\
\hline
        \end{tabular}
    \caption{In this table we consider the calibration on the largest sample as a basic case.
    We report the percentage difference between the model parameters resulting from the subsamples.}
    \label{tab:variation}
\end{table}

\begin{table}[H]
    \centering
        \begin{tabular}{ccccccc}
&6&5&4&3&2\\
\hline\\
$V_1$&  0.0438  &  0.0430  &  0.0405  &  0.0421  &  0.0412\\
$V_2$&  0.0465  &  0.0450  &  0.0408  &  0.0370  &  0.0335\\
$a^{\scriptscriptstyle\mathrm{USD}}_1$&0.7201  &  0.7165 &   0.7086 &   0.7099 &   0.7082\\
$a^{\scriptscriptstyle\mathrm{USD}}_2$&1.0211 &   1.0182  &  1.0095 &   0.9915 &   0.9685\\
$a^{\scriptscriptstyle\mathrm{EUR}}_1$&1.2517 &   1.2534  &  1.2603 &   1.2477 &   1.2538\\
$a^{\scriptscriptstyle\mathrm{EUR}}_2$&1.2624 &   1.2616  &  1.2619 &   1.2575 &   1.2589\\
$a^{\scriptscriptstyle\mathrm{JPY}}_1$ &0.5159 &   0.5155  &  0.5093 &   0.5206 &   0.5142\\
$a^{\scriptscriptstyle\mathrm{JPY}}_2$ &1.5053 &   1.5083  &  1.5223 &   1.5307 &   1.5372\\
$\theta_1$&0.1154 &   0.1169  &  0.1203 &   0.1391 &   0.1300\\
$\theta_2$&0.1344 &   0.1377  &  0.1350 &   0.1253 &   0.1081\\
$\xi_1$&0.8892 &   0.8898  &  0.8992 &   0.9700 &   0.9925\\
$\xi_2$&0.9338 &   0.9450  &  0.9458 &   0.9616 &   0.9659\\
$\rho_1$& 0.5226  &  0.5132   & 0.4950  &  0.4756  &  0.4591\\
$\rho_2$& -0.4042 &  -0.4030  & -0.4004 &  -0.3887 &  -0.3721\\
Res. Norm.&  0.0013& 4.7824e-04& 1.9968e-04& 2.6412e-04&4.7716e-04\\
\hline
        \end{tabular}
    \caption{This table reports the results of the calibration of the model. In this case we are
    assuming $\kappa_k=1,k=1,2$. For each column, a different number of expiries, ranging from 6 to 2,
    is chosen. Res. norm. is the residual of the objective function for the given set of parameters.}
    \label{tab:params2}
\end{table}

\begin{table}[H]
    \centering
        \begin{tabular}{ccccccc}
&5&4&3&2\\
\hline\\
$V_1$   &-1.8915\%&   -7.6930\%&   -3.8971\%&   -5.9284\%\\
$V_2$&  -3.1003\%&  -12.1687\%&  -20.3324\%&  -28.0033\%\\
$a^{\scriptscriptstyle\mathrm{USD}}_1$&-0.5056\%&   -1.6003\%&   -1.4171\%&   -1.6497\%\\
$a^{\scriptscriptstyle\mathrm{USD}}_2$&-0.2832\%&   -1.1348\%&   -2.9033\%&   -5.1535\%\\
$a^{\scriptscriptstyle\mathrm{EUR}}_1$& 0.1322\%&    0.6825\%&   -0.3164\%&    0.1703\%\\
$a^{\scriptscriptstyle\mathrm{EUR}}_2$&-0.0691\%&   -0.0438\%&   -0.3903\%&   -0.2826\%\\
$a^{\scriptscriptstyle\mathrm{JPY}}_1$ &-0.0857\%&   -1.2718\%&    0.9087\%&   -0.3396\%\\
$a^{\scriptscriptstyle\mathrm{JPY}}_2$ & 0.1994\%&    1.1235\%&    1.6872\%&    2.1131\%\\
$\theta_1$& 1.3740\%&    4.2649\%&   20.5911\%&   12.6966\%\\
$\theta_2$& 2.4412\%&    0.4181\%&   -6.8073\%&  -19.5908\%\\
$\xi_1$& 0.0622\%&    1.1246\%&    9.0864\%&   11.6096\%\\
$\xi_2$& 1.1985\%&    1.2881\%&    2.9747\%&    3.4434\%\\
$\rho_1$& -1.7994\%&   -5.2886\%&   -8.9844\%&  -12.1501\%\\
$\rho_2$& -0.2922\%&   -0.9392\%&   -3.8297\%&   -7.9304\%\\
\hline
        \end{tabular}
    \caption{In this table we consider the calibration on the largest sample as a basic case,
    when $\kappa_k=1,k=1,2$. We report the percentage difference between the model parameters
    resulting from the subsamples.}
    \label{tab:variation2}
\end{table}

\begin{table}[H]
    \centering
        \begin{tabular}{cccc}
        &USD/EUR&USD/JPY&EUR/JPY\\
        \hline\\
10DC&-0.0006& -0.0002&-0.0027\\
15DC&       & 0.0003&-0.0017\\
25DC&-0.0012& 0.0005&-0.0005\\
0   &-0.0022& 0.0009&0.0021\\
25DP&-0.0008& 0.0012&0.0042\\
15DP&       & 0.0004&0.0031\\
10DP& 0.0009& -0.0001&0.0011\\
\hline
        \end{tabular}
    \caption{Out-of-sample performance. This table reports the raw difference between the market implied volatility and the model generated 
    implied volatility for 1 year, when we calibrate the model to the previous 5 expiries. Moneyness levels follow the standard Delta
quoting convention in the FX option market, see Footnote \ref{mktconv}. DC and DP stand for "delta call" and "delta put" respectively. Blanks
    on the first column reflect missing market data for 15DC and 15DP.}
    \label{tab:OOS1}
\end{table}

\begin{table}[H]
    \centering
        \begin{tabular}{ccccccc}
        &USD/EUR&USD/EUR&USD/JPY&USD/JPY&EUR/JPY&EUR/JPY\\
        &9m&1y&9m&1y&9m&1y\\
        \hline\\
10DC&-0.0031 &   0.0003 &  -0.0013 &  -0.0001 &  -0.0006 &  -0.0019\\
15DC&       &         & -0.0007  &  0.0002  &  0.0008  & -0.0012\\
25DC&-0.0023 &  -0.0007 &   0.0004 &   0.0004 &   0.0021 &  -0.0006\\
0   & -0.0021&   -0.0021&    0.0020&    0.0006&    0.0036&    0.0012\\
25DP& -0.0010&   -0.0007&    0.0011&    0.0010&    0.0028&    0.0033\\
15DP&       &         &   -0.0008&    0.0003&    0.0004&    0.0025\\
10DP&-0.0006 &   0.0015 &  -0.0014 &  -0.0000 &  -0.0026 &   0.0007\\
\hline
        \end{tabular}
    \caption{Out-of-sample performance. This table reports the raw difference between the market implied volatility and the model generated implied volatility for 1 year and 9 months, when we calibrate the model to the previous 4 expiries. Moneyness levels follow the standard Delta
quoting convention in the FX option market, see Footnote \ref{mktconv}. DC and DP stand for "delta call" and "delta put" respectively. Blanks on the first two columns reflect missing market data for 15DC and 15DP.}
    \label{tab:OOS2}
\end{table}

\begin{table}[H]
    \centering
        \begin{tabular}{cccc}
        &  &USD/EUR&\\
        &6m&9m&1y\\
        \hline\\
10DC&-0.0060 &  -0.0023 &   0.0015\\
25DC&-0.0034 &  -0.0019 &   0.0002\\
0   &-0.0018 &  -0.0020 &  -0.0017\\
25DP&-0.0006 &  -0.0010 &  -0.0005\\
10DP& 0.0002 &  -0.0002 &   0.0019\\
\hline\\
        &  &USD/JPY&\\
        &6m&9m&1y\\
        \hline\\
10DC&-0.0058 &  -0.0011 &   0.0000\\
15DC&-0.0042 &  -0.0005 &   0.0002\\
25DC&-0.0009 &   0.0005 &  -0.0000\\
0   &0.0016  &  0.0020  &  0.0000\\
25DP&0.0004  &  0.0009  &  0.0007\\
15DP&-0.0019 &  -0.0012 &   0.0002\\
10DP&-0.0037 &  -0.0020 &  -0.0002\\
\hline\\
        &  &EUR/JPY&\\
        &6m&9m&1y\\
        \hline\\
10DC&-0.0008 &  -0.0003 &  -0.0009\\
15DC& 0.0011 &   0.0007 &  -0.0006\\
25DC&0.0031  &  0.0015  & -0.0005\\
0   &0.0041  &  0.0025  &  0.0006\\
25DP&0.0014  &  0.0018  &  0.0026\\
15DP&-0.0022 &  -0.0004 &   0.0019\\
10DP&-0.0053 &  -0.0032 &   0.0003\\
\hline
        \end{tabular}
    \caption{Out-of-sample performance. This table reports the raw difference between the market implied volatility and the model generated implied volatility for 1 year, 9 and 6 months,  when we calibrate the model to the previous 3 expiries. Moneyness levels follow the standard Delta
quoting convention in the FX option market, see Footnote \ref{mktconv}. DC and DP stand for "delta call" and "delta put" respectively.}
    \label{tab:OOS3}
\end{table}

\begin{table}[H]
    \centering
        \begin{tabular}{ccccc}
        &  &USD/EUR&&\\
        &3m&6m&9m&1y\\
        \hline\\
10DC&-0.0062  & -0.0043 &  -0.0006  &  0.0031\\
25DC&-0.0027  & -0.0022 &  -0.0007  &  0.0010\\
0   &-0.0013  & -0.0009 &  -0.0014  & -0.0015\\
25DP&-0.0006  &  0.0001 &  -0.0005  & -0.0005\\
10DP& 0.0008  &  0.0009 &   0.0003  &  0.0020\\
\hline\\
        &  &USD/JPY&&\\
        &3m&6m&9m&1y\\
        \hline\\
10DC& -0.0090 &  -0.0052 &  -0.0004  &  0.0007\\
15DC& -0.0062 &  -0.0038 &  -0.0002  &  0.0003\\
25DC& -0.0037 &  -0.0009 &   0.0003  & -0.0006\\
0   & 0.0010  &  0.0014  &  0.0015   & -0.0011\\
25DP&-0.0004  &  0.0006  &  0.0009   & 0.0002\\
15DP&-0.0022  & -0.0014  & -0.0008   & 0.0002\\
10DP&-0.0039  & -0.0029  & -0.0013   & 0.0003\\
\hline\\
        &  &EUR/JPY&&\\
        &3m&6m&9m&1y\\
        \hline\\
10DC&-0.0045  & -0.0003  &  0.0002  & -0.0006\\
15DC&-0.0021  &  0.0014  &  0.0009  & -0.0007\\
25DC& 0.0010  &  0.0030  &  0.0012  & -0.0013\\
0   & 0.0030  &  0.0038  &  0.0019  & -0.0008\\
25DP&  0.0009 &   0.0014 &   0.0014 &   0.0016\\
15DP&-0.0026  & -0.0020  & -0.0005  &  0.0013\\
10DP&-0.0057  & -0.0050  & -0.0030  & -0.0000\\
\hline
        \end{tabular}
    \caption{Out-of-sample performance. This table reports the raw difference between the market implied volatility and the model generated implied volatility for 1 year, 9, 6 and 3 months,  when we calibrate the model to the previous 2 expiries. Moneyness levels follow the standard Delta
quoting convention in the FX option market, see Footnote \ref{mktconv}. DC and DP stand for "delta call" and "delta put" respectively.}
    \label{tab:OOS4}
\end{table}

\begin{table}[H]
    \centering
        \begin{tabular}{cccc}
        &USD/EUR&USD/JPY&EUR/JPY\\
        \hline\\
10DC&0.0058 &   0.0030&    0.0046\\
15DC&       &   0.0022&    0.0035\\
25DC&0.0058 &   0.0009&    0.0041\\
0   &0.0035 &   0.0007&    0.0019\\
25DP&0.0029 &   0.0024&    0.0016\\
15DP&       &   0.0027&   -0.0012\\
10DP&0.0042 &   0.0029&   -0.0044\\
\hline
        \end{tabular}
    \caption{Out-of-sample performance. This table reports the raw difference between the market implied volatility and the model generated implied volatility for 1 year, when we calibrate the model to the previous 5 expiries and $\kappa_k=1,k=1,2$. Moneyness levels follow the standard Delta
quoting convention in the FX option market, see Footnote \ref{mktconv}. DC and DP stand for "delta call" and "delta put" respectively. Blanks on the first column reflect missing market data for 15DC and 15DP.}
    \label{tab:OOS1B}
\end{table}

\begin{table}[H]
    \centering
        \begin{tabular}{ccccccc}
        &USD/EUR&USD/EUR&USD/JPY&USD/JPY&EUR/JPY&EUR/JPY\\
        &9m&1y&9m&1y&9m&1y\\
        \hline\\
10DC&    0.0083&    0.0092&    0.0034&    0.0052&    0.0073&    0.0090\\
15DC&          &          &    0.0021&    0.0042&    0.0061&    0.0078\\
25DC&    0.0060&    0.0092&    0.0005&    0.0029&    0.0053&    0.0079\\
0   &    0.0029&    0.0066&   -0.0000&    0.0025&    0.0024&    0.0050\\
25DP&    0.0033&    0.0056&    0.0026&    0.0044&    0.0025&    0.0041\\
15DP&          &          &    0.0034&    0.0049&    0.0006&    0.0010\\
10DP&    0.0062&    0.0067&    0.0041&    0.0054&   -0.0027&   -0.0024\\
\hline
        \end{tabular}
    \caption{Out-of-sample performance. This table reports the raw difference between the market implied volatility and the model generated implied volatility for 1 year and 9 months, when we calibrate the model to the previous 4 expiries and $\kappa_k=1,k=1,2$. Moneyness levels follow the standard Delta
quoting convention in the FX option market, see Footnote \ref{mktconv}. DC and DP stand for "delta call" and "delta put" respectively. Blanks on the first two columns reflect missing market data for 15DC and 15DP.}
    \label{tab:OOS2B}
\end{table}

\begin{table}[H]
    \centering
        \begin{tabular}{cccc}
        &  &USD/EUR&\\
        &6m&9m&1y\\
        \hline\\
10DC&0.0073  &  0.0116  &  0.0131\\
25DC&0.0042  &  0.0091  &  0.0128\\
0   &0.0010  &  0.0056  &  0.0100\\
25DP&0.0020  &  0.0058  &  0.0087\\
10DP&0.0056  &  0.0088  &  0.0099\\
\hline \\
        &  &USD/JPY&\\
        &6m&9m&1y\\
        \hline\\
10DC&0.0025  &  0.0048  &  0.0067\\
15DC&0.0009  &  0.0030  &  0.0053\\
25DC&-0.0013 &   0.0008 &   0.0034\\
0   &-0.0028 &  -0.0004 &   0.0023\\
25DP&0.0002  &  0.0027  &  0.0048\\
15DP&0.0013  &  0.0041  &  0.0057\\
10DP&0.0022  &  0.0053  &  0.0067\\
\hline \\
        &  &EUR/JPY&\\
        &6m&9m&1y\\
        \hline\\
10DC&0.0051 &   0.0114  &  0.0138\\
15DC&0.0036 &   0.0099  &  0.0123\\
25DC&0.0022 &   0.0085  &  0.0120\\
0   &-0.0001&    0.0048 &   0.0083\\
25DP&0.0017 &   0.0047  &  0.0071\\
15DP&0.0010 &   0.0028  &  0.0040\\
10DP&-0.0016&   -0.0003 &   0.0006\\
\hline
        \end{tabular}
    \caption{Out-of-sample performance. This table reports the raw difference between the market implied volatility and the model generated implied volatility for 1 year, 9 and 6 months,  when we calibrate the model to the previous 3 expiries and $\kappa_k=1,k=1,2$. Moneyness levels follow the standard Delta
quoting convention in the FX option market, see Footnote \ref{mktconv}. DC and DP stand for "delta call" and "delta put" respectively.}
    \label{tab:OOS3B}
\end{table}

\begin{table}[H]
    \centering
        \begin{tabular}{ccccc}
        &  &USD/EUR&&\\
        &3m&6m&9m&1y\\
        \hline\\
10DC&0.0057 &   0.0085  &  0.0125&    0.0136\\
25DC&0.0023 &   0.0042  &  0.0086&    0.0121\\
0   &-0.0018&   -0.0000 &   0.0041&    0.0082\\
25DP&-0.0008&    0.0010 &   0.0043&    0.0068\\
10DP&0.0025 &   0.0050  &  0.0077 &   0.0086\\
\hline \\
        &  &USD/JPY&&\\
        &3m&6m&9m&1y\\
        \hline\\
10DC&0.0013  &  0.0026 &   0.0046  &  0.0063\\
15DC&0.0004  &  0.0003 &   0.0020  &  0.0040\\
25DC&-0.0015 &  -0.0028&   -0.0012 &   0.0010\\
0   &-0.0033 &  -0.0051&   -0.0033 &  -0.0010   \\
25DP&-0.0010 &  -0.0015&    0.0005 &   0.0022\\
15DP&-0.0004 &   0.0002&    0.0025 &   0.0038\\
10DP&-0.0002 &   0.0017&    0.0043 &   0.0053\\
\hline \\
        &  &EUR/JPY&&\\
        &3m&6m&9m&1y\\
        \hline\\
10DC&0.0002  &  0.0048  &  0.0106  &  0.0127\\
15DC&-0.0009 &   0.0025 &   0.0082 &   0.0101\\
25DC& -0.0026&   -0.0001&    0.0056&    0.0086\\
0   & -0.0038&   -0.0035&    0.0008&    0.0037\\
25DP&-0.0008 &  -0.0007 &   0.0016 &   0.0034\\
15DP&-0.0010 &  -0.0007 &   0.0005 &   0.0012\\
10DP&-0.0021 &  -0.0026 &  -0.0020 &  -0.0015\\
\hline
        \end{tabular}
    \caption{Out-of-sample performance. This table reports the raw difference between the market implied volatility and the model generated implied volatility for 1 year, 9, 6 and 3 months,  when we calibrate the model to the previous 2 expiries and $\kappa_k=1,k=1,2$. Moneyness levels follow the standard Delta
quoting convention in the FX option market, see Footnote \ref{mktconv}. DC and DP stand for "delta call" and "delta put" respectively.}
    \label{tab:OOS4B}
\end{table}

\begin{table}[H]
    \centering
        \begin{tabular}{cccccc}
&6&5&4&3&2\\
\hline\\
$k=1$   &-0.1715  & -0.1841 &  -0.2076 &  -0.2276 &  -0.2445\\
$k=2$ &-0.6745  & -0.6640 &  -0.5086 &  -0.5153 &  -0.5768\\
\hline
        \end{tabular}
            \caption{For all $k=1,2$ and for each sample we report the quantity $2\kappa_k\theta_k-\xi_k^{2}$.
            In all cases the quantity is negative and its absolute value is a measure of the violation of the Feller condition.}
    \label{tab:Feller}
\end{table}

\begin{table}[H]
    \centering
        \begin{tabular}{cccc}
Order&$S^{\scriptscriptstyle\mathrm{USD,EUR}}$&$S^{\scriptscriptstyle\mathrm{JPY,USD}}$&$S^{\scriptscriptstyle\mathrm{JPY,EUR}}$\\
\hline
\hline
1& $+\infty$ & $+\infty$ & $+\infty$\\
2& $+\infty$ &  12.1962  &  5.1612\\
3& $+\infty$ &   2.9537  &  2.0580\\
4&    3.3968 &   1.7990  &  1.3763\\
5&    2.0070 &   2.0819  &  1.0614\\
\hline
        \end{tabular}
    \caption{Times of moment explosions for moments up to order 5 for the three currency pairs. First moments are always finite.}
    \label{tab:expl}
\end{table}

\begin{figure}[htbp]
    \centering
        \includegraphics[scale=0.13]{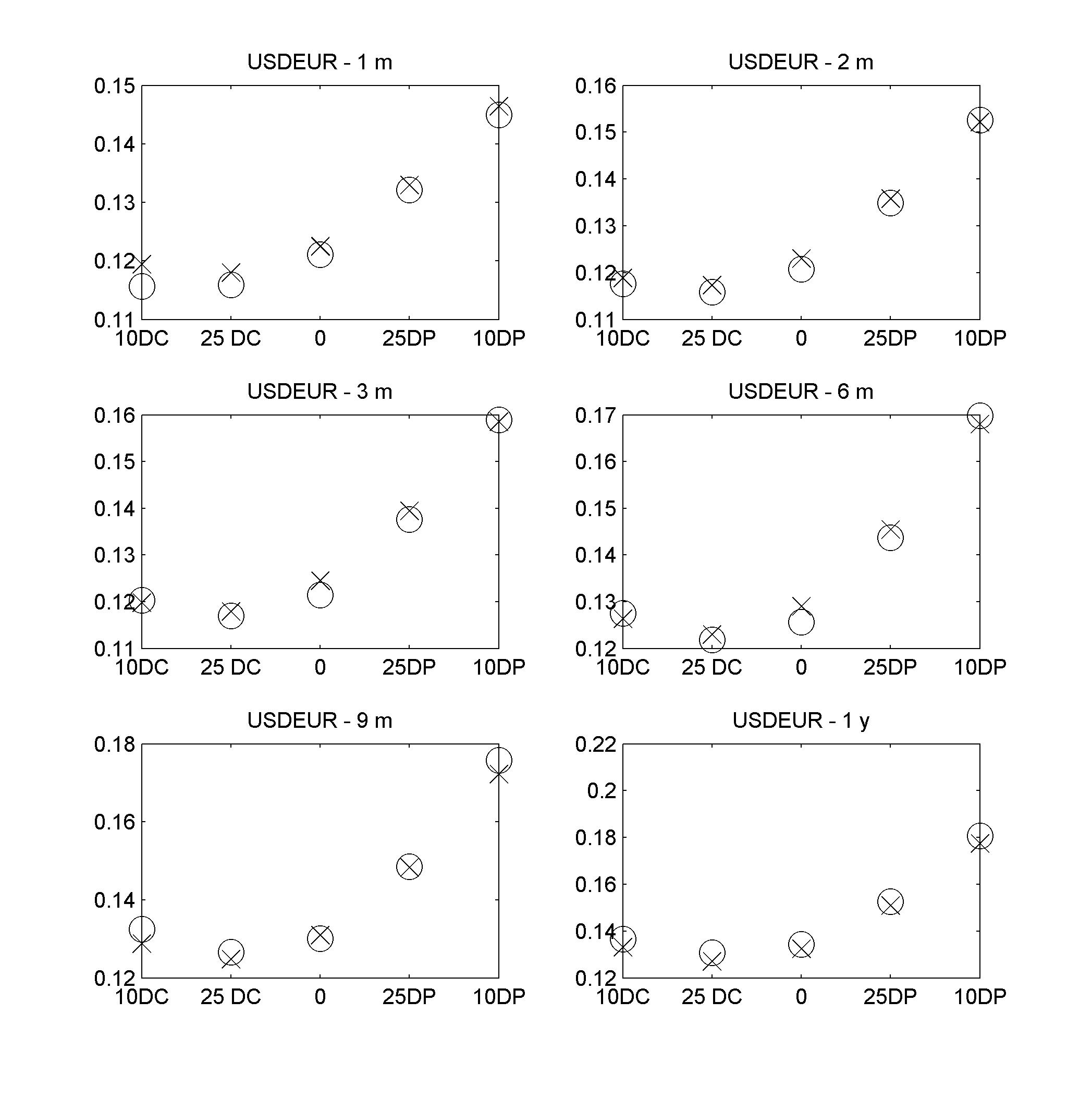}
    \caption{Calibration of the USD/EUR implied volatility surface. Market data as of 23/7/2010. Market volatilities are denoted by crosses,
model volatilities are denoted by circles. Moneyness levels follow the standard Delta
quoting convention in the FX option market, see Footnote \ref{mktconv}. DC and DP stand for "delta call" and "delta put" respectively.}
    \label{fig:USDEUR}
\end{figure}

\begin{figure}[htbp]
    \centering
        \includegraphics[scale=0.13]{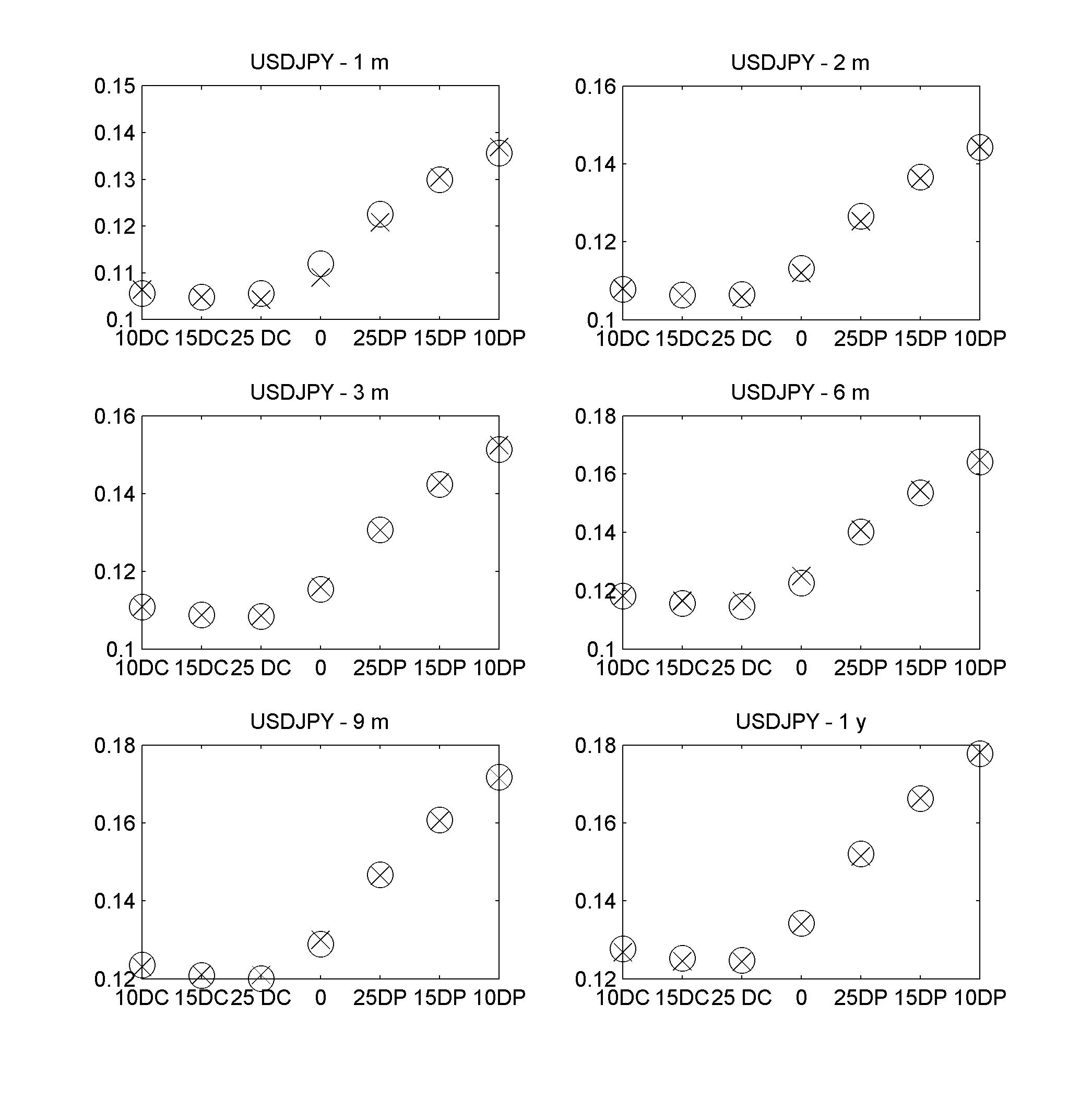}
    \caption{Calibration of the USD/JPY implied volatility surface. Market data as of 23/7/2010.  Market volatilities are denoted by crosses,
model volatilities are denoted by circles. Moneyness levels follow the standard Delta
quoting convention in the FX option market, see Footnote \ref{mktconv}. DC and DP stand for "delta call" and "delta put" respectively.}
    \label{fig:USDJPY}
\end{figure}

\begin{figure}[htbp]
    \centering
        \includegraphics[scale=0.13]{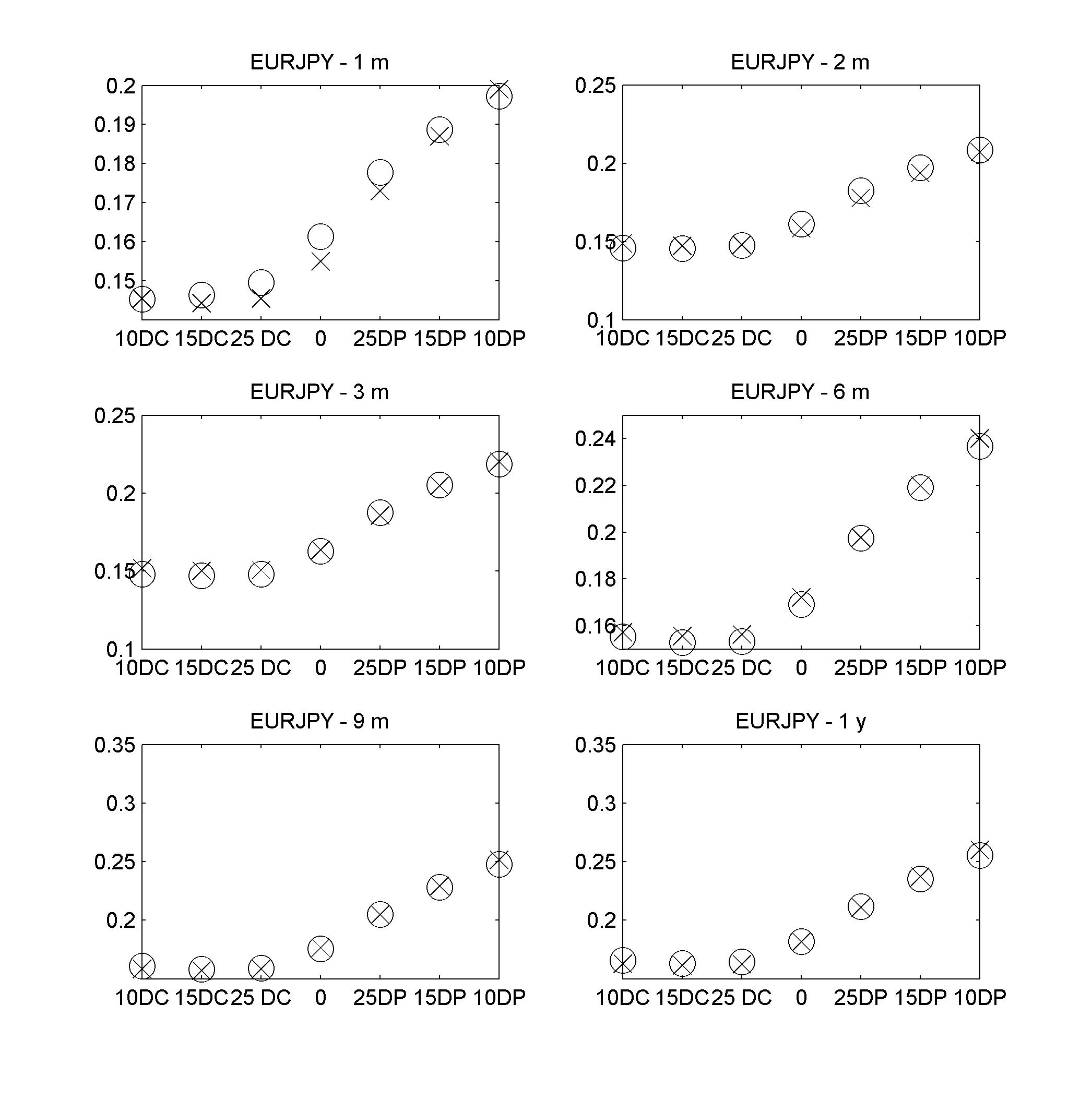}
    \caption{Calibration of the EUR/JPY implied volatility surface. Market data as of 23/7/2010.  Market volatilities are denoted by crosses,
model volatilities are denoted by circles. Moneyness levels follow the standard Delta
quoting convention in the FX option market, see Footnote \ref{mktconv}. DC and DP stand for "delta call" and "delta put" respectively.}
    \label{fig:EURJPY}
\end{figure}

\begin{figure}[htbp]
\centering
\subfloat{\label{fig:11}\includegraphics[scale=0.09]{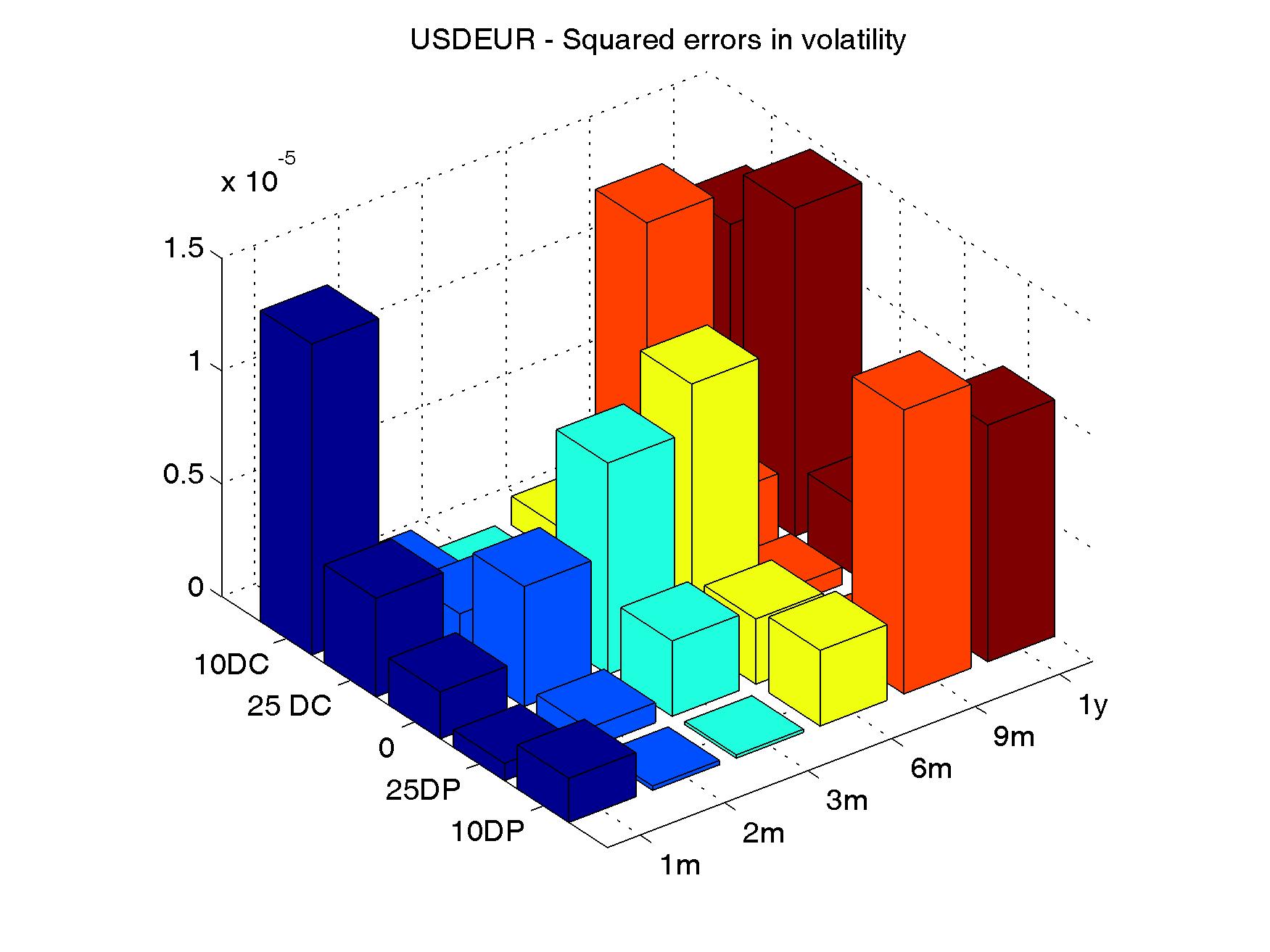}}\\
\subfloat{\label{fig:12}\includegraphics[scale=0.09]{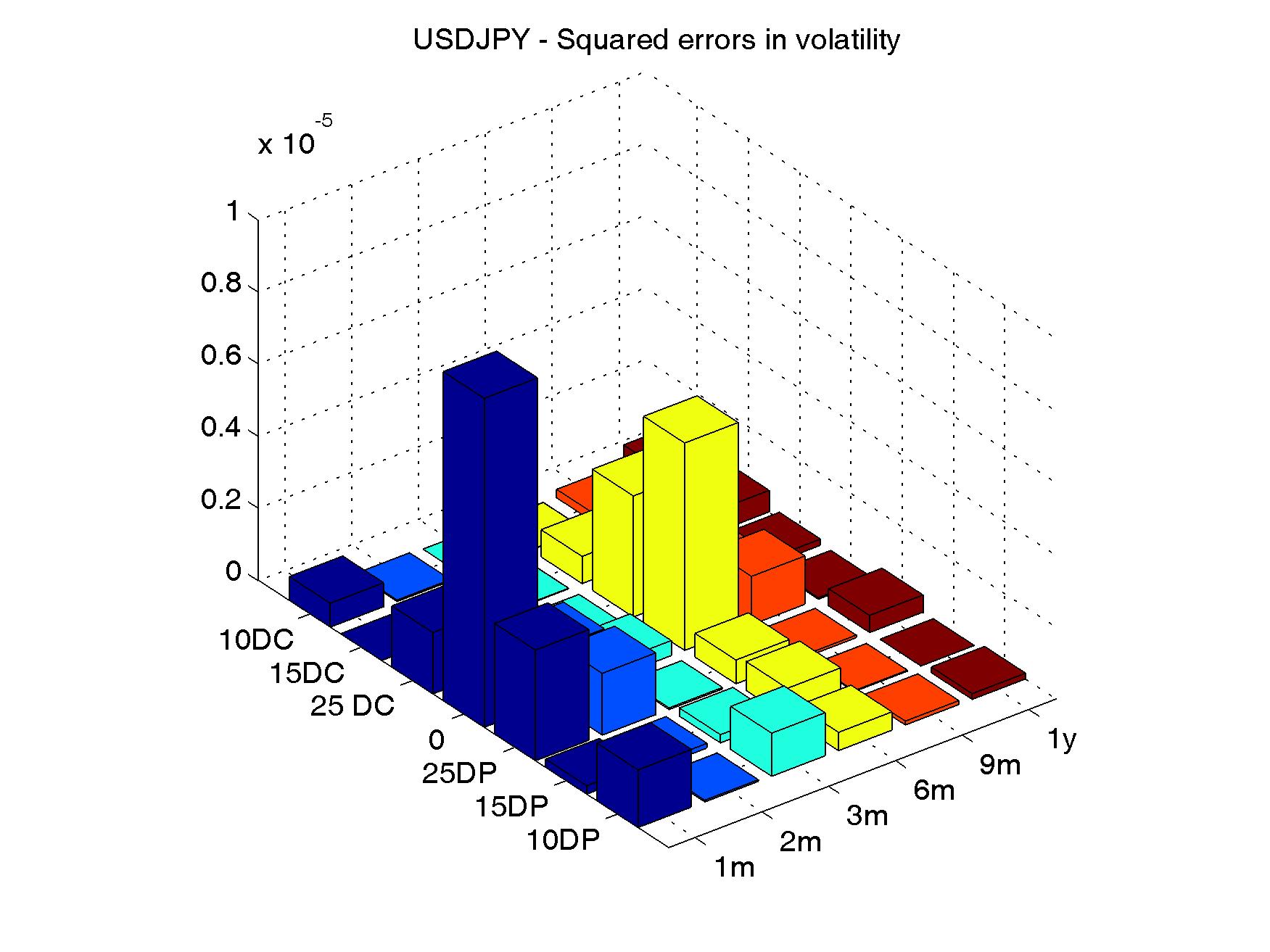}}
\subfloat{\label{fig:13}\includegraphics[scale=0.09]{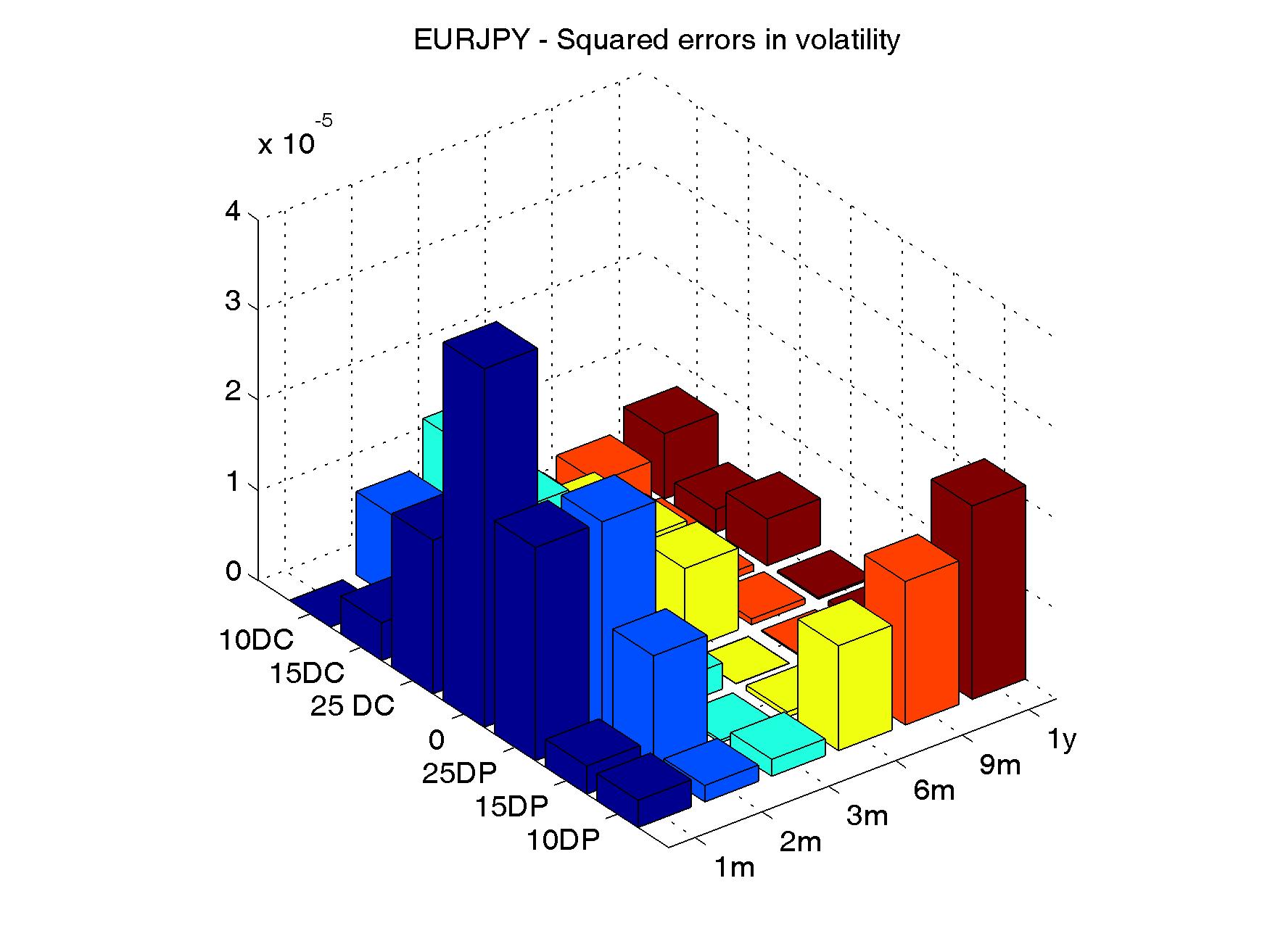}}
\caption{In-sample implied volatilities squared error for the joint calibration on 1m, 2m, 3m, 6m, 9m and 1y. The associated model parameters may be found in Table \ref{tab:params}, column "6".}
\label{fig:SE_1}
\end{figure}

\begin{figure}[htbp]
\centering
\subfloat{\label{fig:21}\includegraphics[scale=0.09]{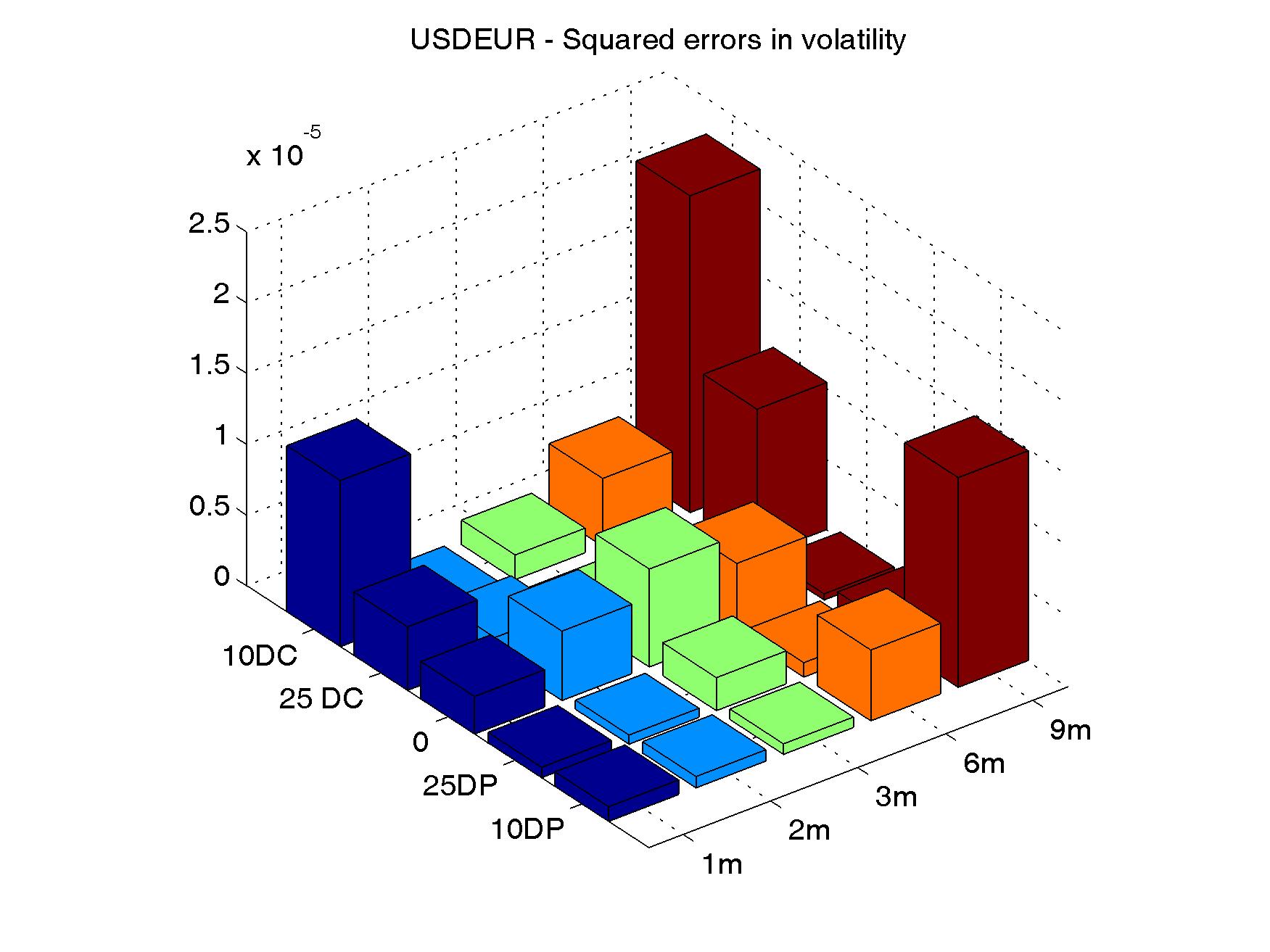}}\\
\subfloat{\label{fig:22}\includegraphics[scale=0.09]{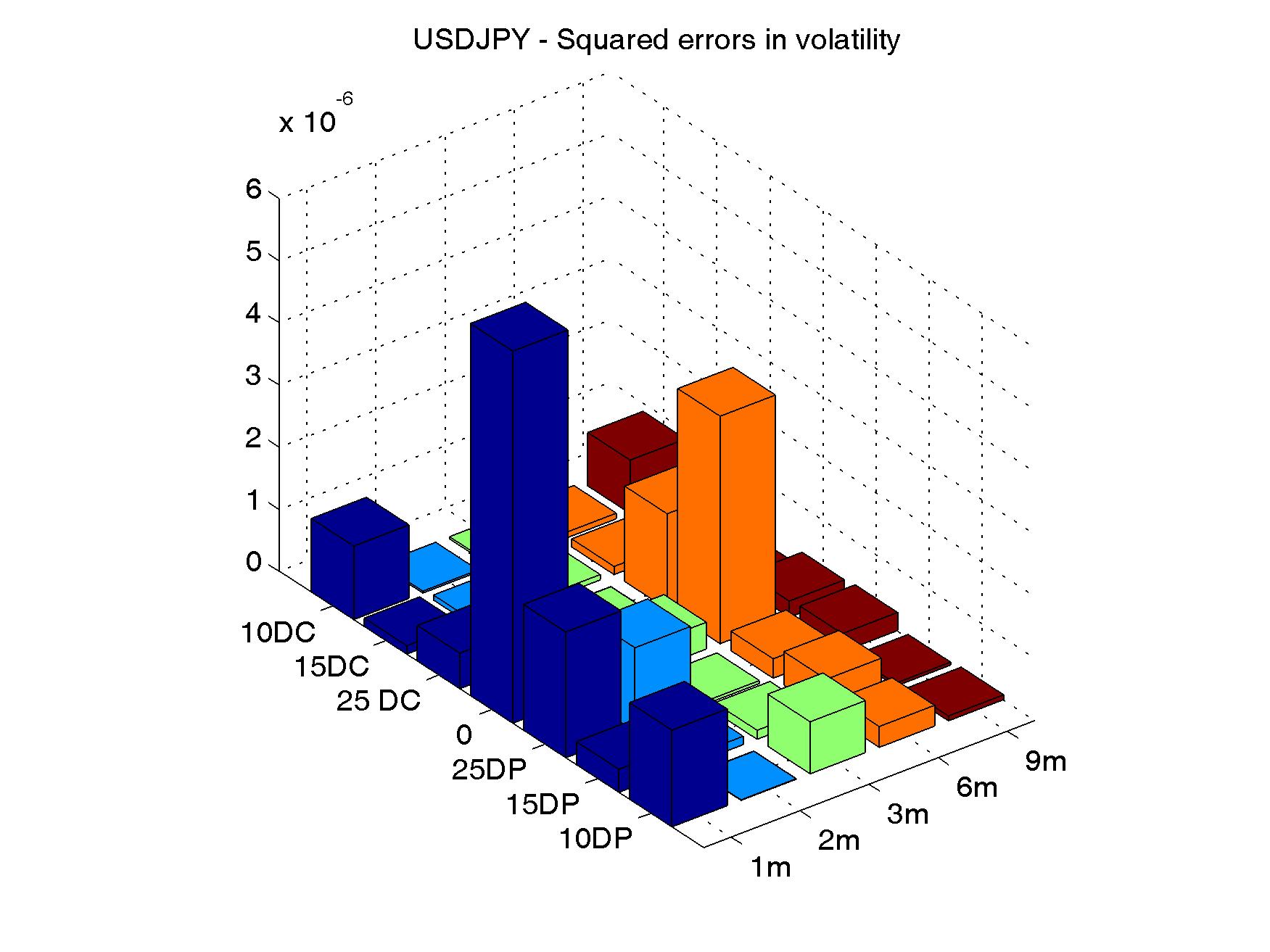}}
\subfloat{\label{fig:23}\includegraphics[scale=0.09]{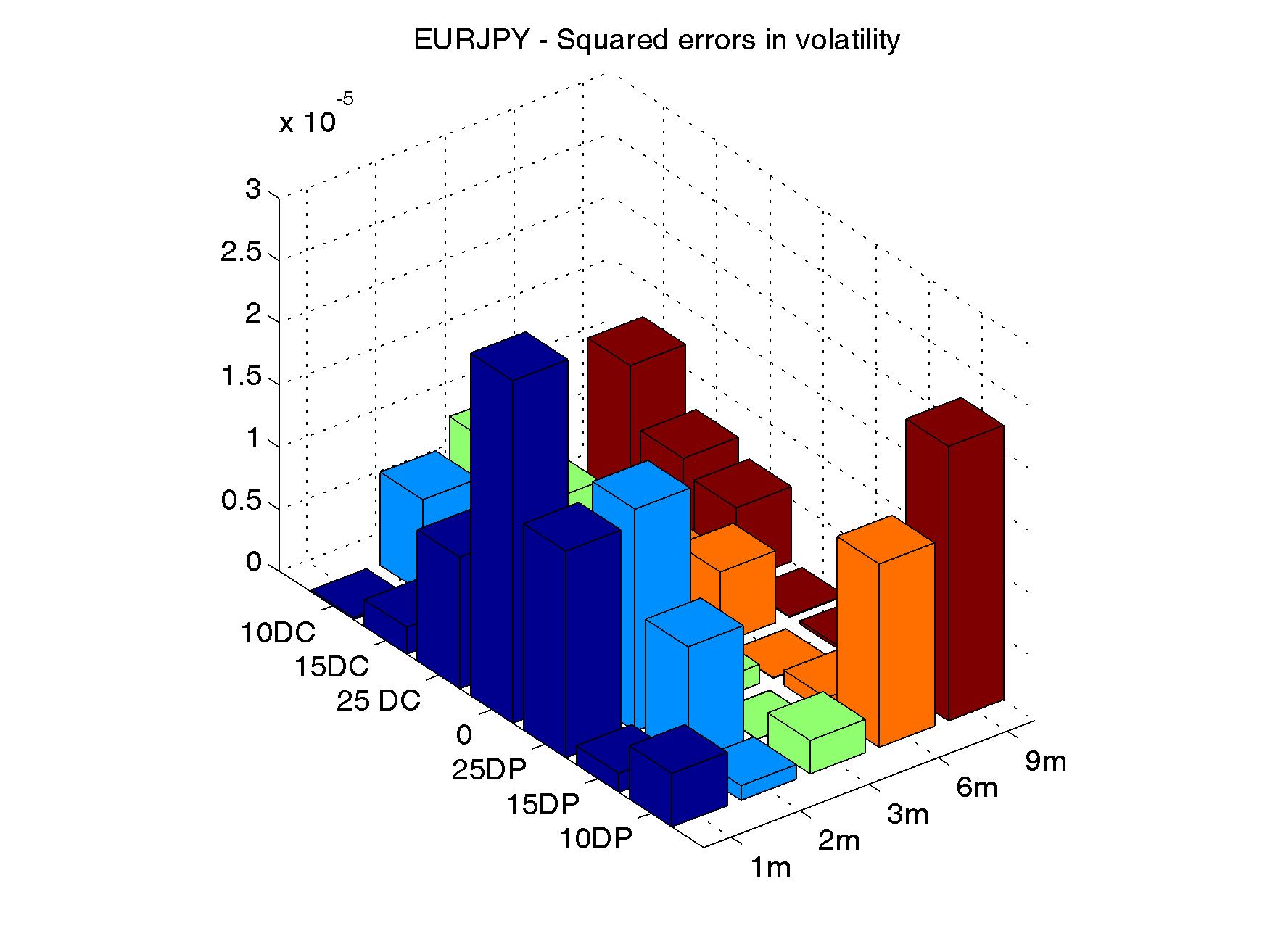}}
\caption{In-sample implied volatilities squared error for the joint calibration on 1m, 2m, 3m, 6m, and 9m. The associated model parameters may be found in Table \ref{tab:params}, column "5".}
\label{fig:SE_2}
\end{figure}

\begin{figure}[htbp]
\centering
\subfloat{\label{fig:31}\includegraphics[scale=0.09]{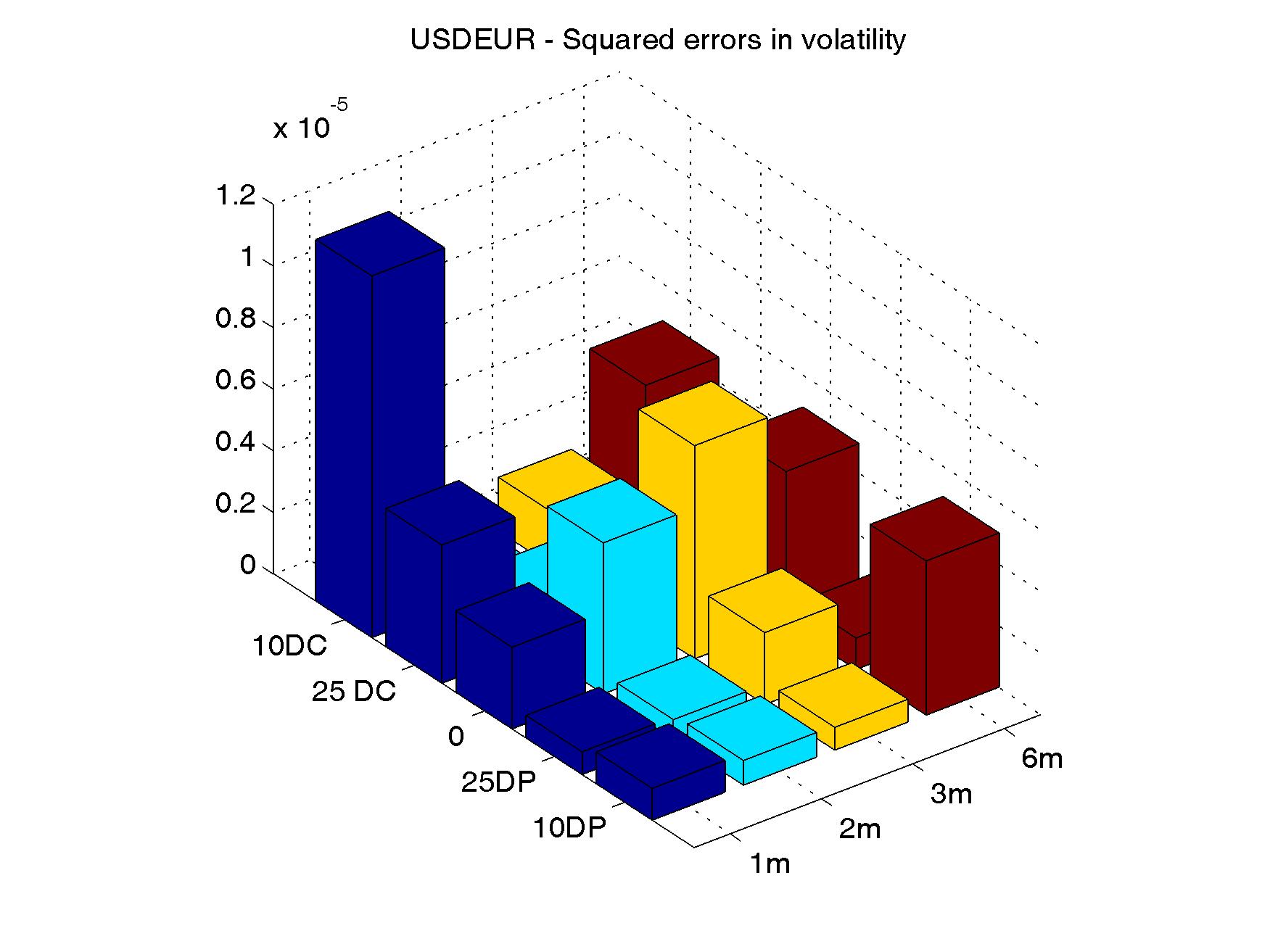}}\\
\subfloat{\label{fig:32}\includegraphics[scale=0.09]{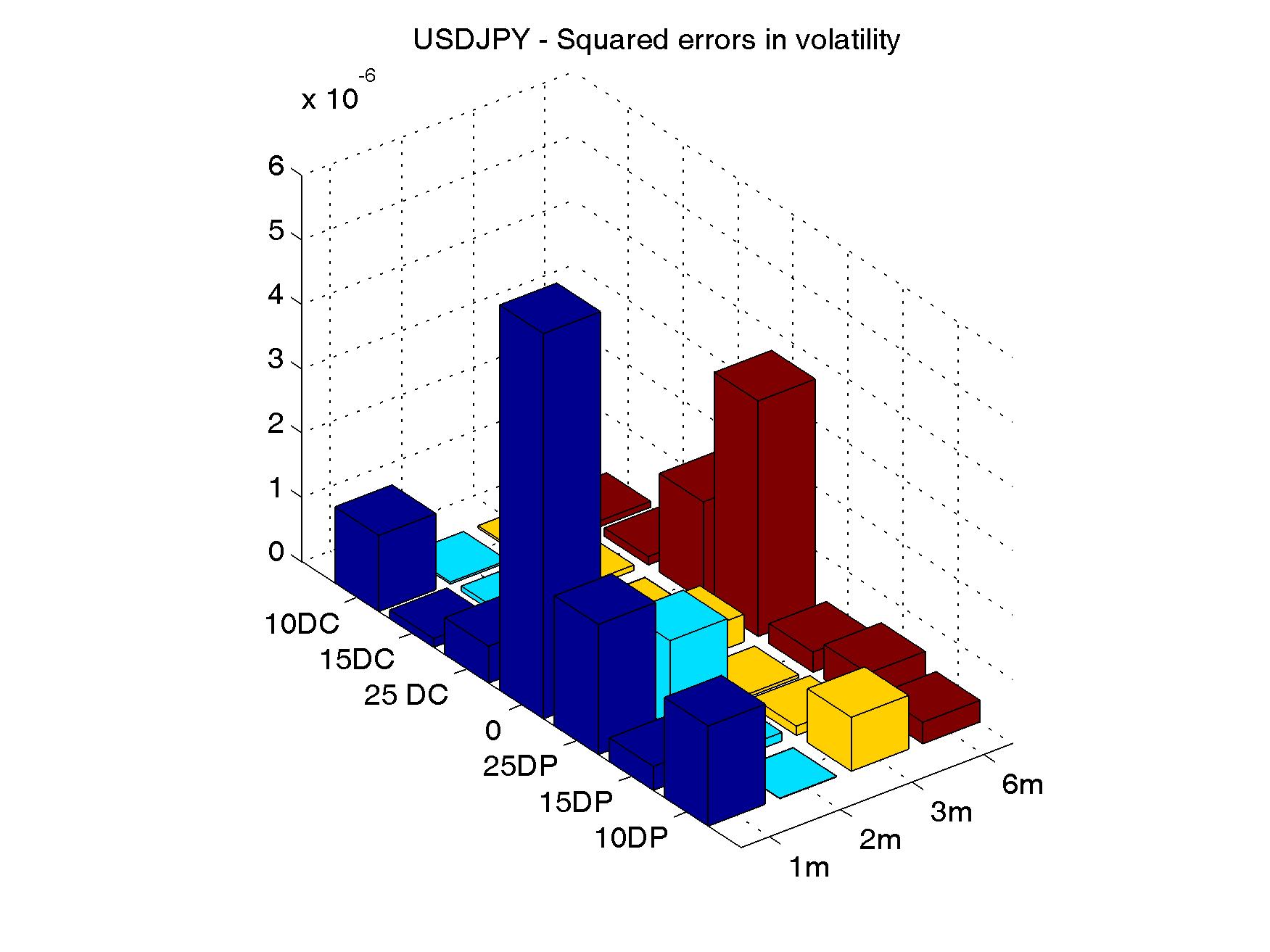}}
\subfloat{\label{fig:33}\includegraphics[scale=0.09]{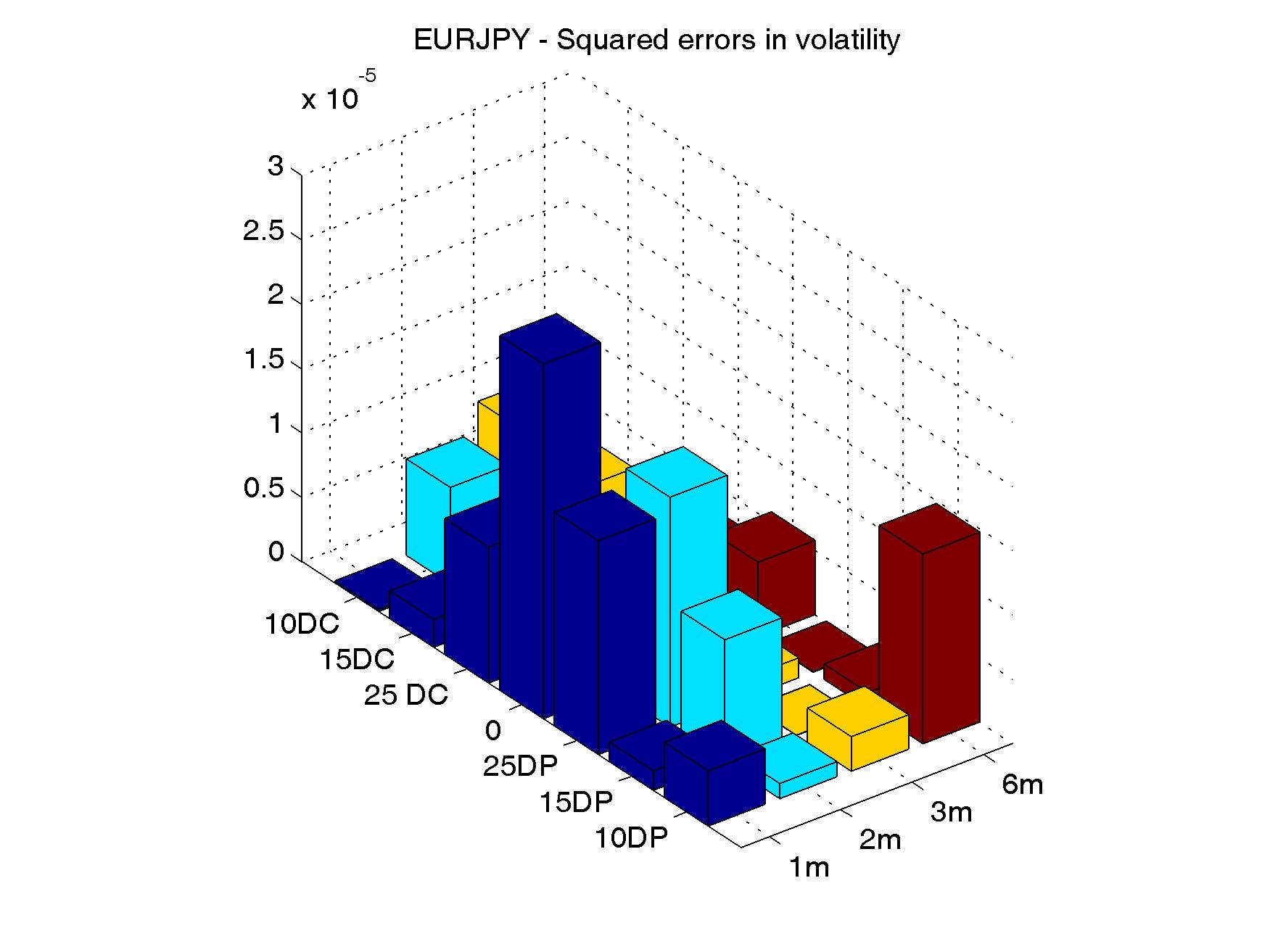}}
\caption{In-sample implied volatilities squared error for the joint calibration on 1m, 2m, 3m, and 6m. The associated model parameters may be found in Table \ref{tab:params}, column "4".}
\label{fig:SE_3}
\end{figure}

\begin{figure}[htbp]
\centering
\subfloat{\label{fig:41}\includegraphics[scale=0.09]{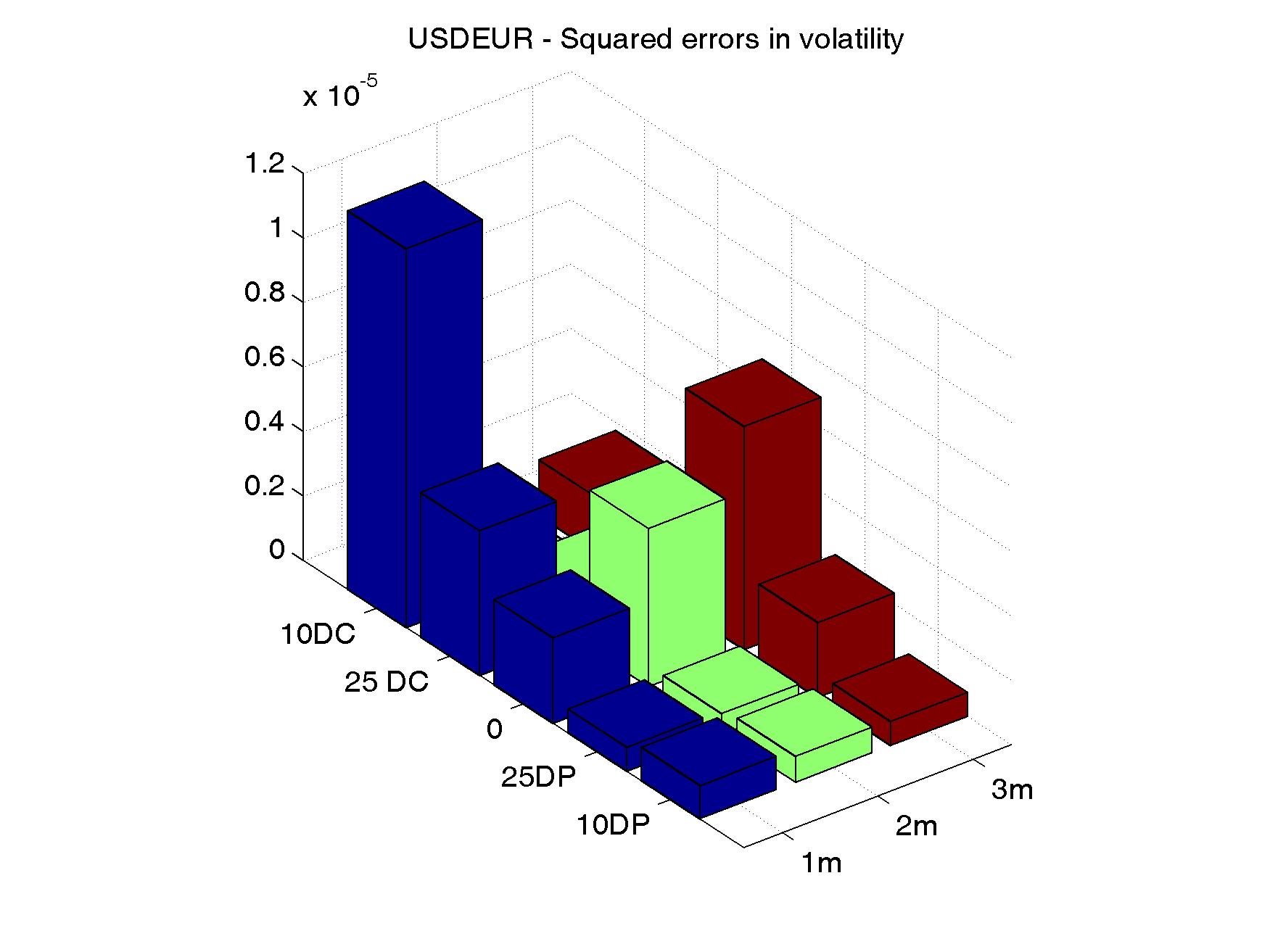}}\\
\subfloat{\label{fig:42}\includegraphics[scale=0.09]{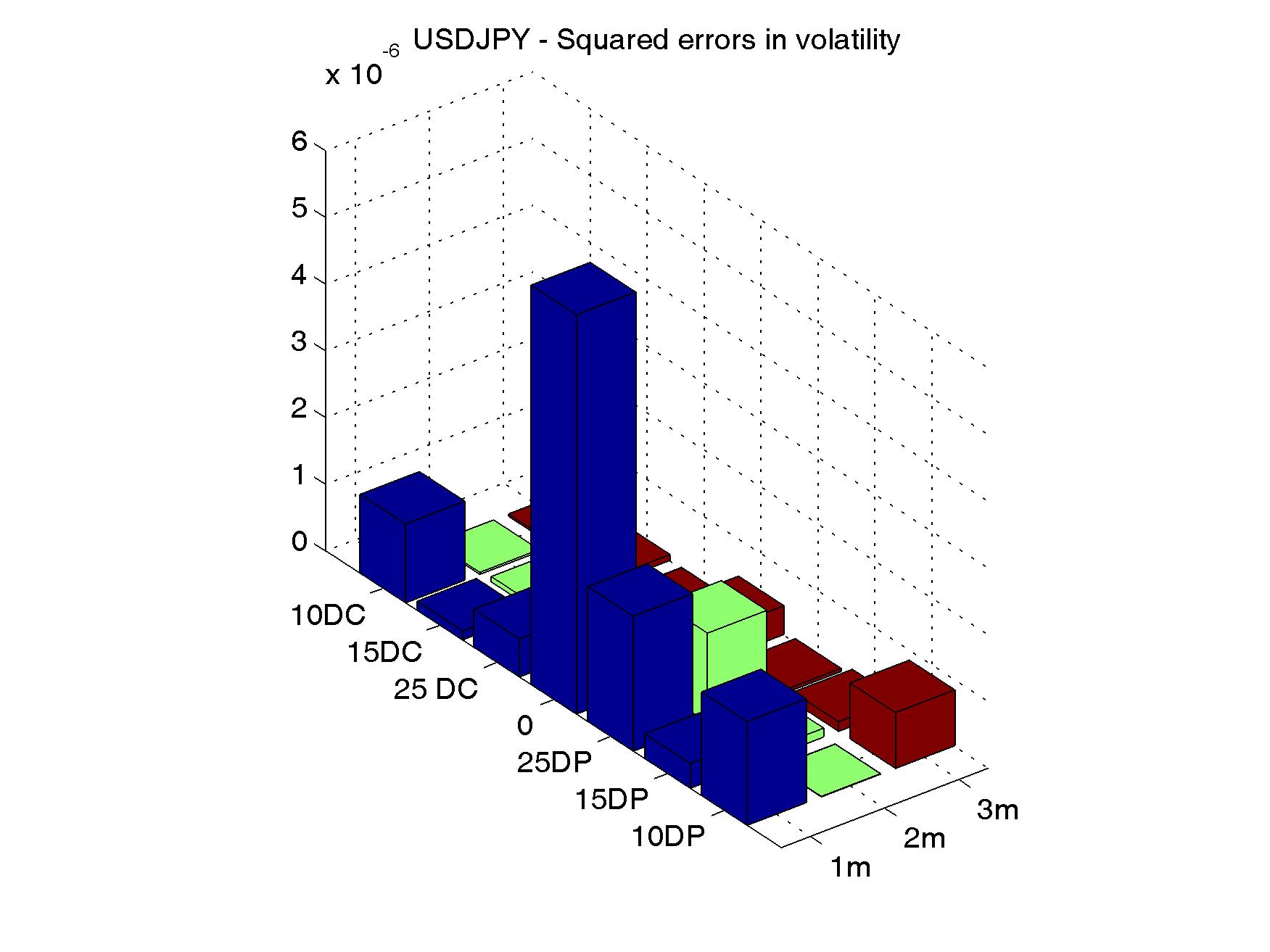}}
\subfloat{\label{fig:43}\includegraphics[scale=0.09]{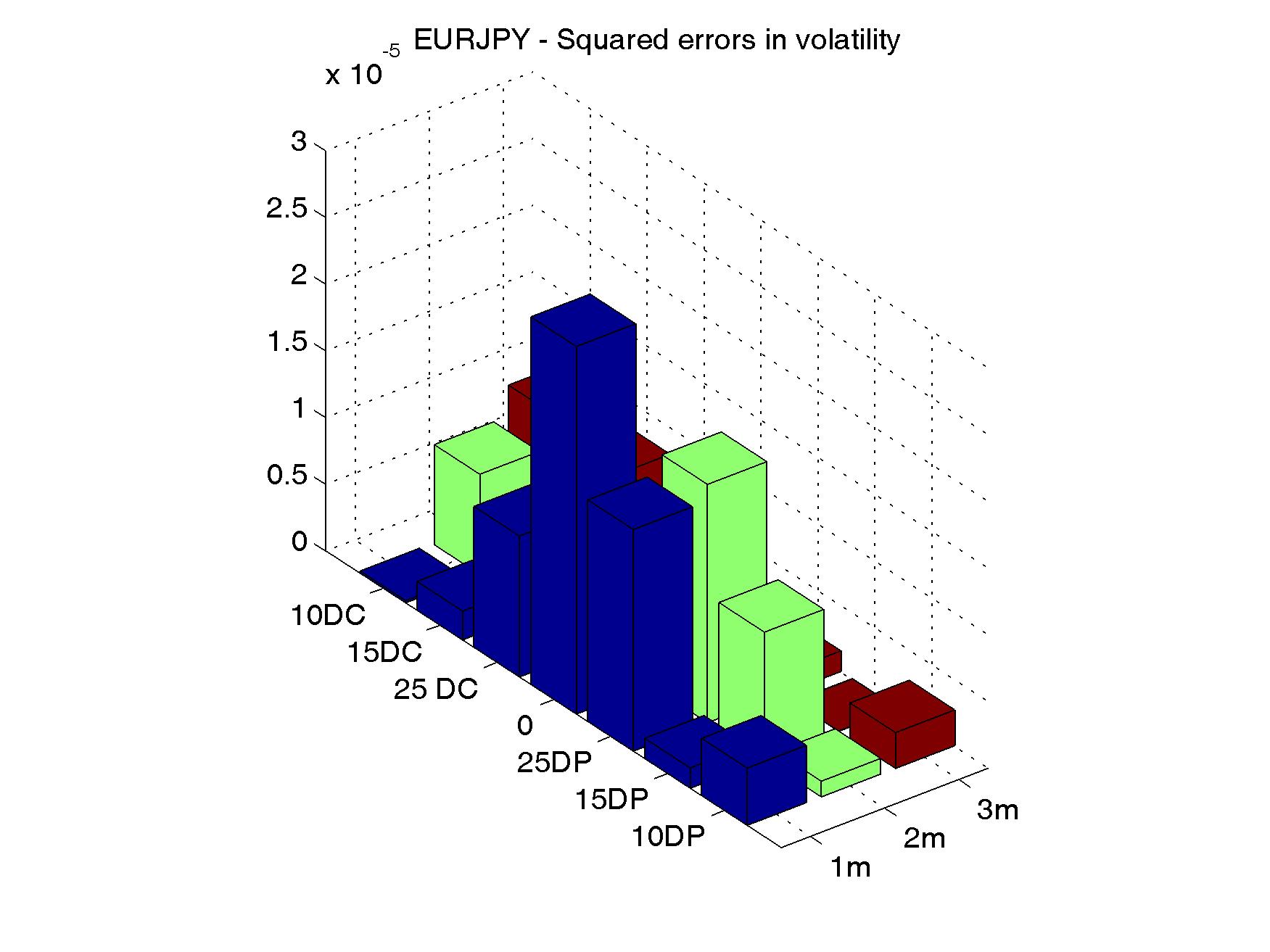}}
\caption{In-sample implied volatilities squared error for the joint calibration on 1m, 2m, and 3m. The associated model parameters may be found in Table \ref{tab:params}, column "3".}
\label{fig:SE_4}
\end{figure}

\begin{figure}[htbp]
\centering
\subfloat{\label{fig:51}\includegraphics[scale=0.09]{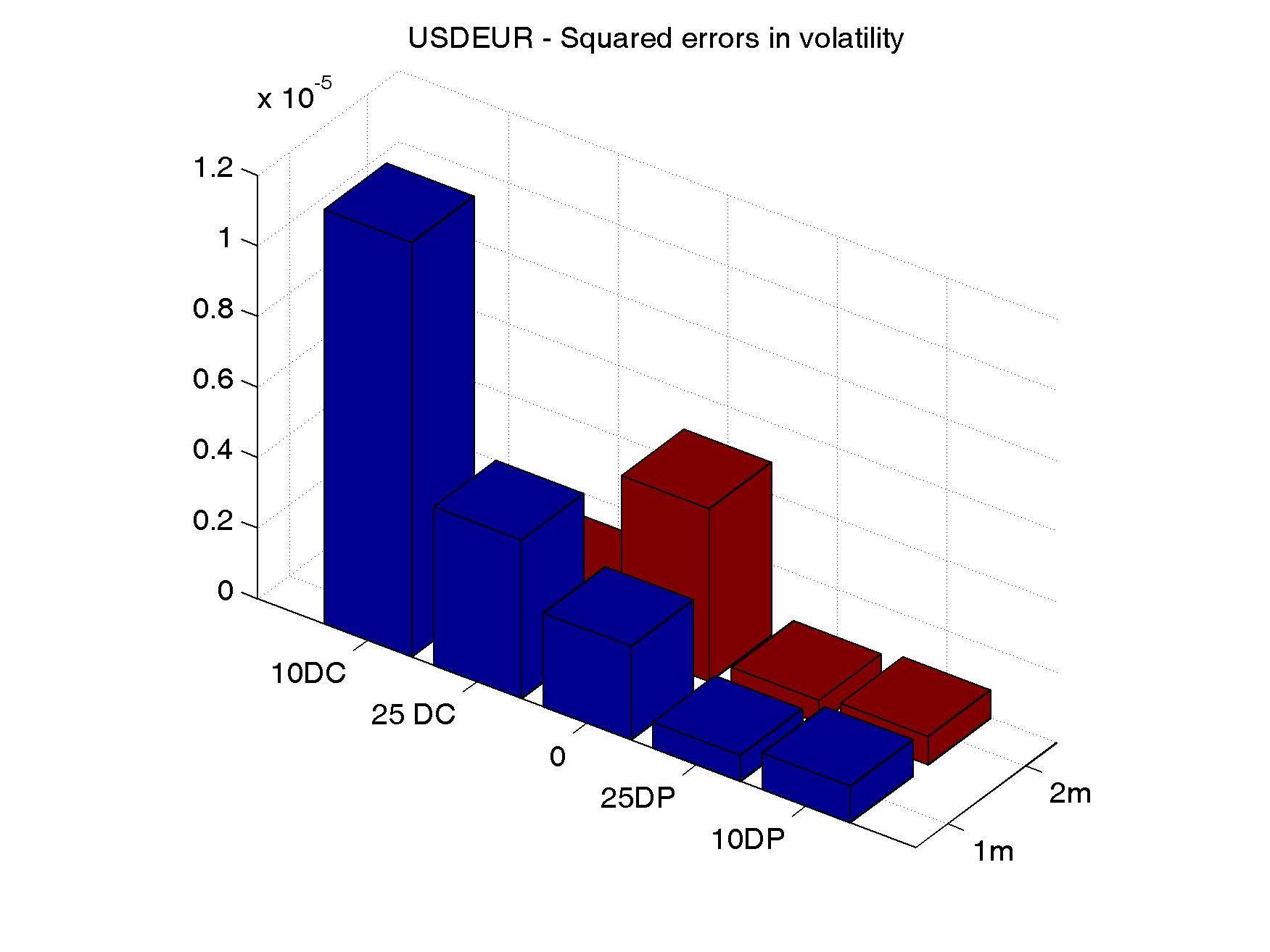}}\\
\subfloat{\label{fig:52}\includegraphics[scale=0.09]{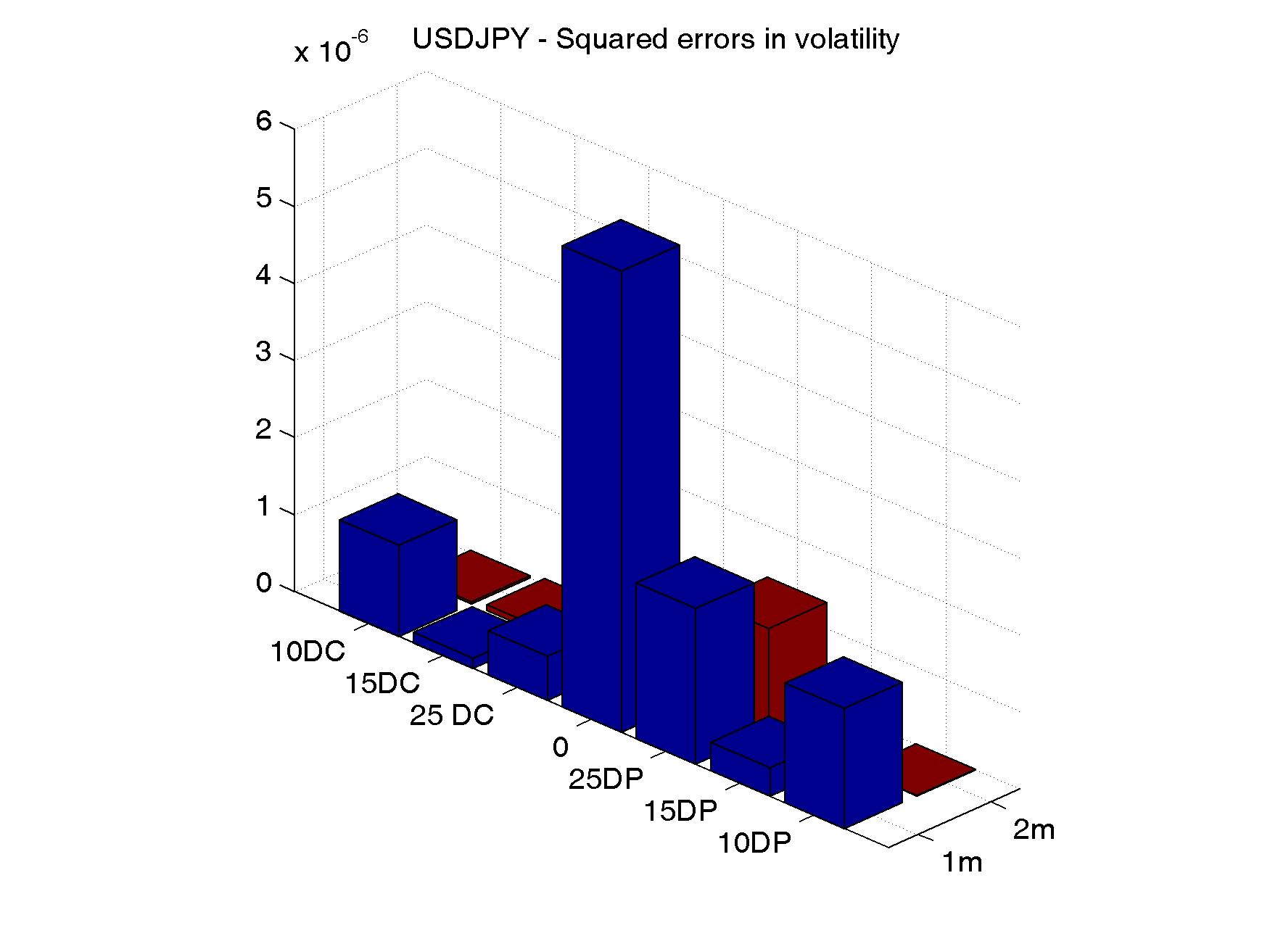}}
\subfloat{\label{fig:53}\includegraphics[scale=0.09]{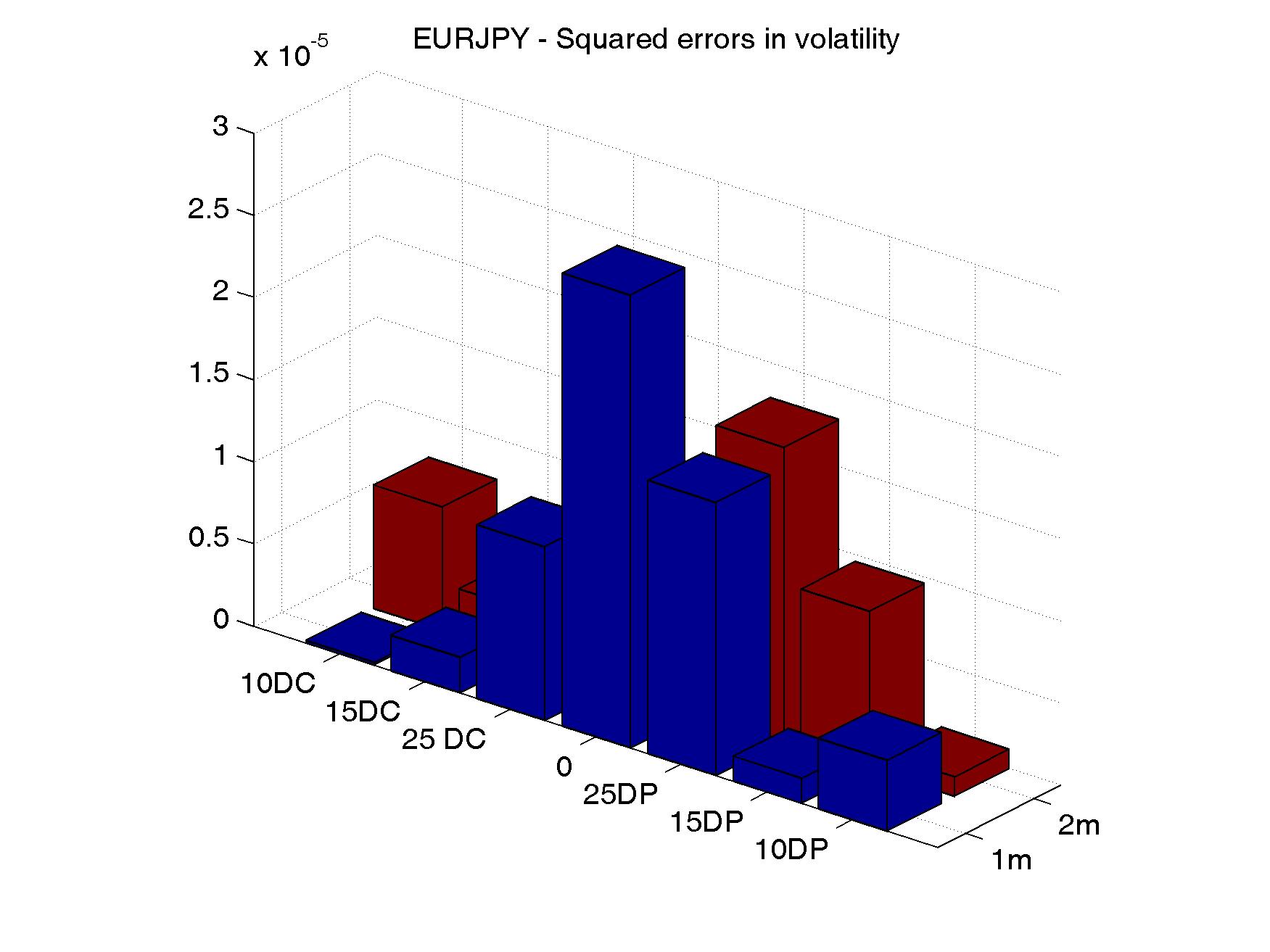}}
\caption{In-sample implied volatilities squared error for the joint calibration on 1m, and 2m. The associated model parameters may be found in Table \ref{tab:params}, column "2".}
\label{fig:SE_5}
\end{figure}

\newpage
\subsection{Calibration of AUD/USD/JPY}
\begin{table}[H]
    \centering
        \begin{tabular}{ccccccc}
&6&5&4&3&2\\
\hline\\
$V_1$& 0.0115&    0.0117&    0.0109&    0.0124&    0.0133\\
$V_2$& 0.0176&    0.0159&    0.0100&    0.0196&    0.0171\\
$a^{\scriptscriptstyle\mathrm{USD}}_1$& 0.5142&    0.5204&    0.5097&    0.5266&    0.5184\\
$a^{\scriptscriptstyle\mathrm{USD}}_2$& 1.3692&    1.3757&    1.4358&    1.4075&    1.3909\\
$a^{\scriptscriptstyle\mathrm{AUD}}_1$& 0.8361&    0.8179&    0.7874&    0.7681&    0.7516\\
$a^{\scriptscriptstyle\mathrm{AUD}}_2$& 0.7999&    0.7789&    0.6823&    0.8455&    0.7657\\
$a^{\scriptscriptstyle\mathrm{JPY}}_1$& 1.2029&    1.2047&    1.2211&    1.2159&    1.2104\\
$a^{\scriptscriptstyle\mathrm{JPY}}_2$& 1.5020&    1.4957&    1.5534&    1.4756&    1.4596\\
$\kappa_1$& 1.9865&    1.2988&    0.2464&    2.0000&    2.0000\\
$\kappa_2$& 0.8134&    0.2298&    0.0647&    1.2270&    1.8346\\
$\theta_1$& 0.0314&    0.0398&    0.1351&    0.0249&    0.0182\\
$\theta_2$& 0.0823&    0.2332&    0.5798&    0.0608&    0.0327\\
$\xi_1$& 0.7196&    0.6358&    0.5264&    0.5793&    0.6086\\
$\xi_2$& 1.0000&    0.8956&    0.7317&    1.0000&    1.0000\\
$\rho_1$& 0.3337&    0.3384&    0.3408&    0.3225&    0.3283\\
$\rho_2$& -0.4451&   -0.4455&   -0.4416&   -0.3910&   -0.3834\\
Res. Norm. &  8.2178e-04&7.6368e-04& 0.0016& 0.0010& 5.4606e-05\\
\hline
        \end{tabular}
    \caption{This table reports the results of the calibration of the model.
    We concentrate on the two factor case. For each column, a different number of expiries,
    ranging from 6 to 2, is chosen. More specifically, 6 means that the following expiries are
    considered: 1, 2, 3, 6, 9 months and 1 year, whereas 5 means that the longest maturity, i.e. 1
    year is excluded from the sample. We proceed analogously in the subsequent columns by excluding
    the longest expiry date up to the point where we perform the calibration on the 2-sample, where
    we fit the smile at 1 and 2 months. We consider market data as of 2nd November 2012. The reference
    exchange rates are $S^{\scriptscriptstyle\mathrm{JPY},\mathrm{AUD}}(0)=83.29$,
    $S^{\scriptscriptstyle\mathrm{USD},\mathrm{AUD}}(0)=1.0375$ and $S^{\scriptscriptstyle\mathrm{JPY},\mathrm{USD}}(0)=80.28$.}
    \label{tab:params_AUDUSDJPY}
\end{table}
\newpage

\begin{figure}[H]
    \centering
        \includegraphics[scale=0.12]{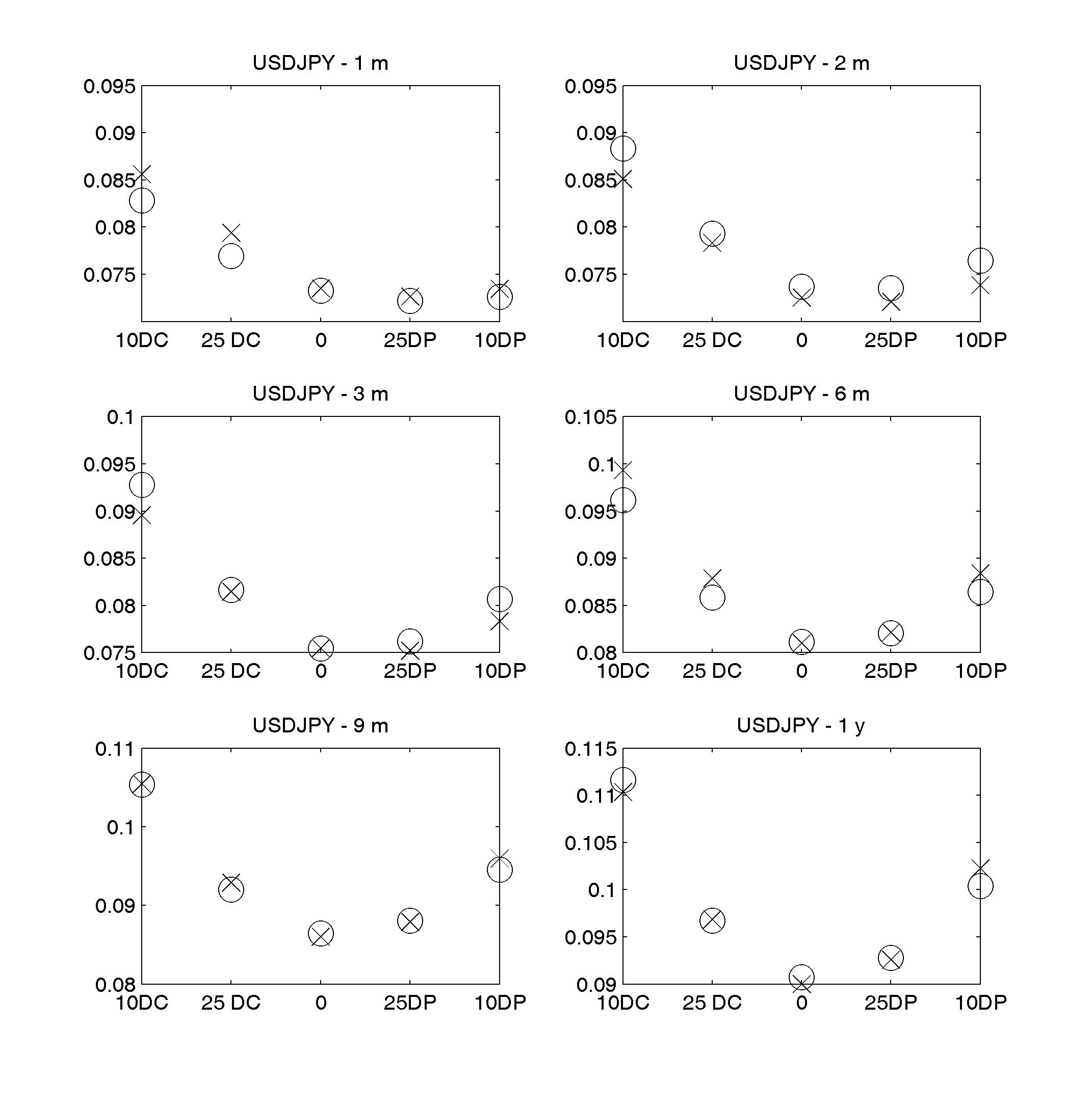}
    \caption{Calibration of the USD/JPY implied volatility surface. Market data as of 2/11/2012. Market volatilities are denoted by crosses,
model volatilities are denoted by circles.  Moneyness levels follow the standard Delta
quoting convention in the FX option market, see Footnote \ref{mktconv}. DC and DP stand for "delta call" and "delta put" respectively.}
    \label{fig:USDJPY6}
\end{figure}

\begin{figure}[H]
    \centering
        \includegraphics[scale=0.12]{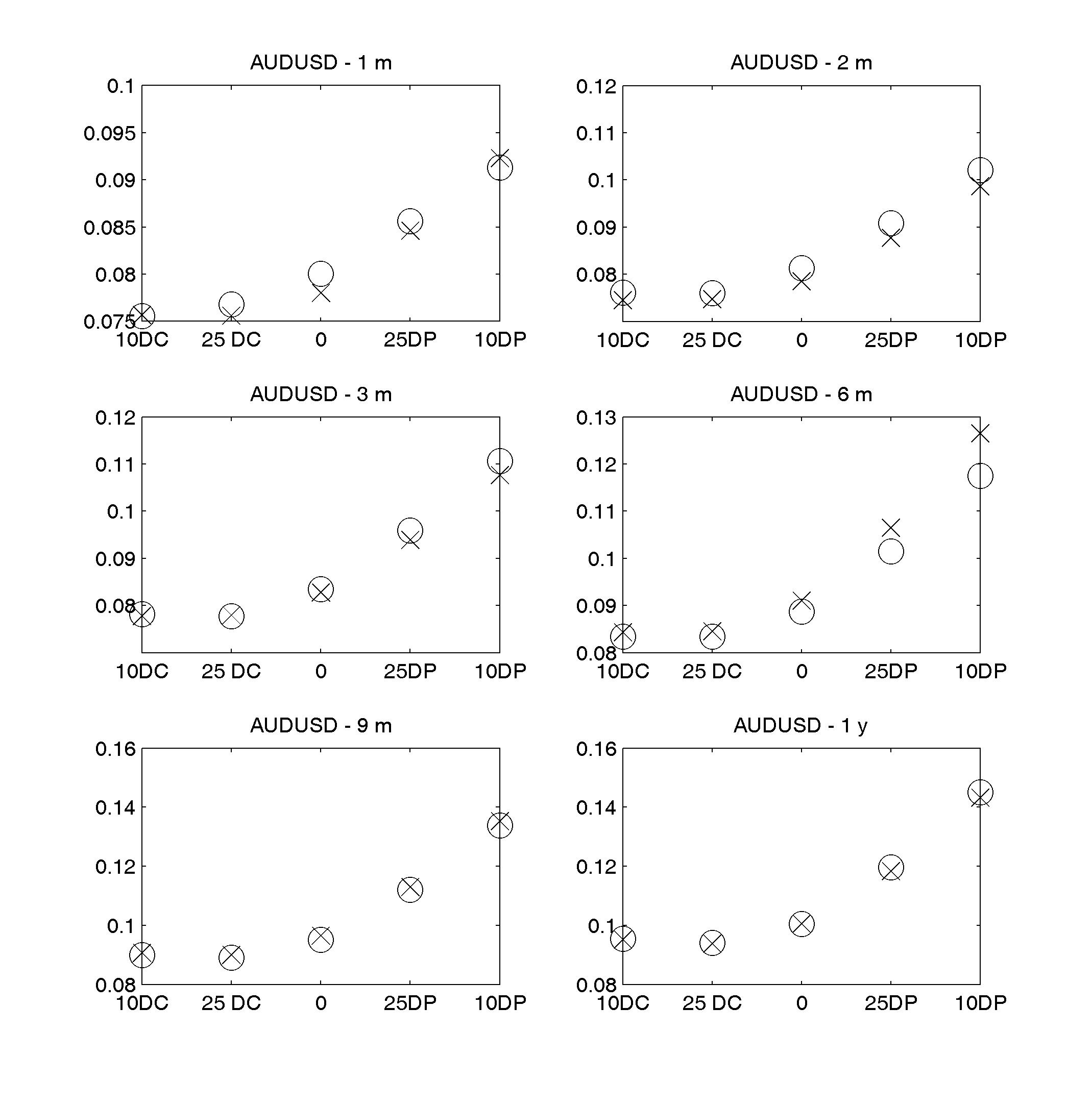}
    \caption{Calibration of the AUD/USD implied volatility surface. Market data as of 2/11/2012. Market volatilities are denoted by crosses,
model volatilities are denoted by circles. Moneyness levels follow the standard Delta
quoting convention in the FX option market, see Footnote \ref{mktconv}. DC and DP stand for "delta call" and "delta put" respectively.}
    \label{fig:AUDUSD6}
\end{figure}

\begin{figure}[H]
    \centering
        \includegraphics[scale=0.12]{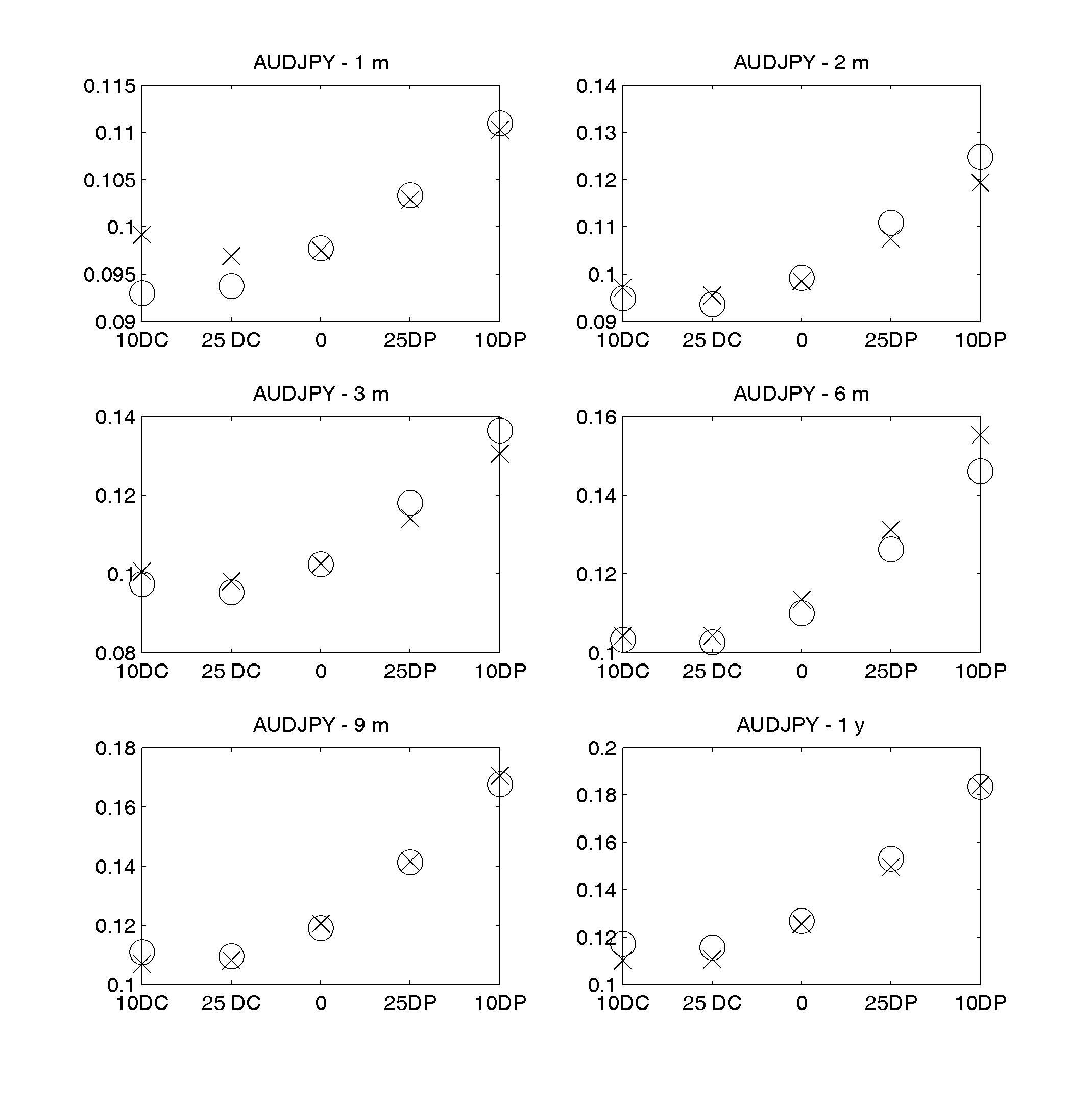}
    \caption{Calibration of the AUD/JPY implied volatility surface. Market data as of 2/11/2012. Market volatilities are denoted by crosses,
model volatilities are denoted by circles. Moneyness levels follow the standard Delta
quoting convention in the FX option market, see Footnote \ref{mktconv}. DC and DP stand for "delta call" and "delta put" respectively.}
    \label{fig:AUDJPY6}
\end{figure}

\begin{figure}[H]
\centering
\subfloat{\label{fig:111}\includegraphics[scale=0.08]{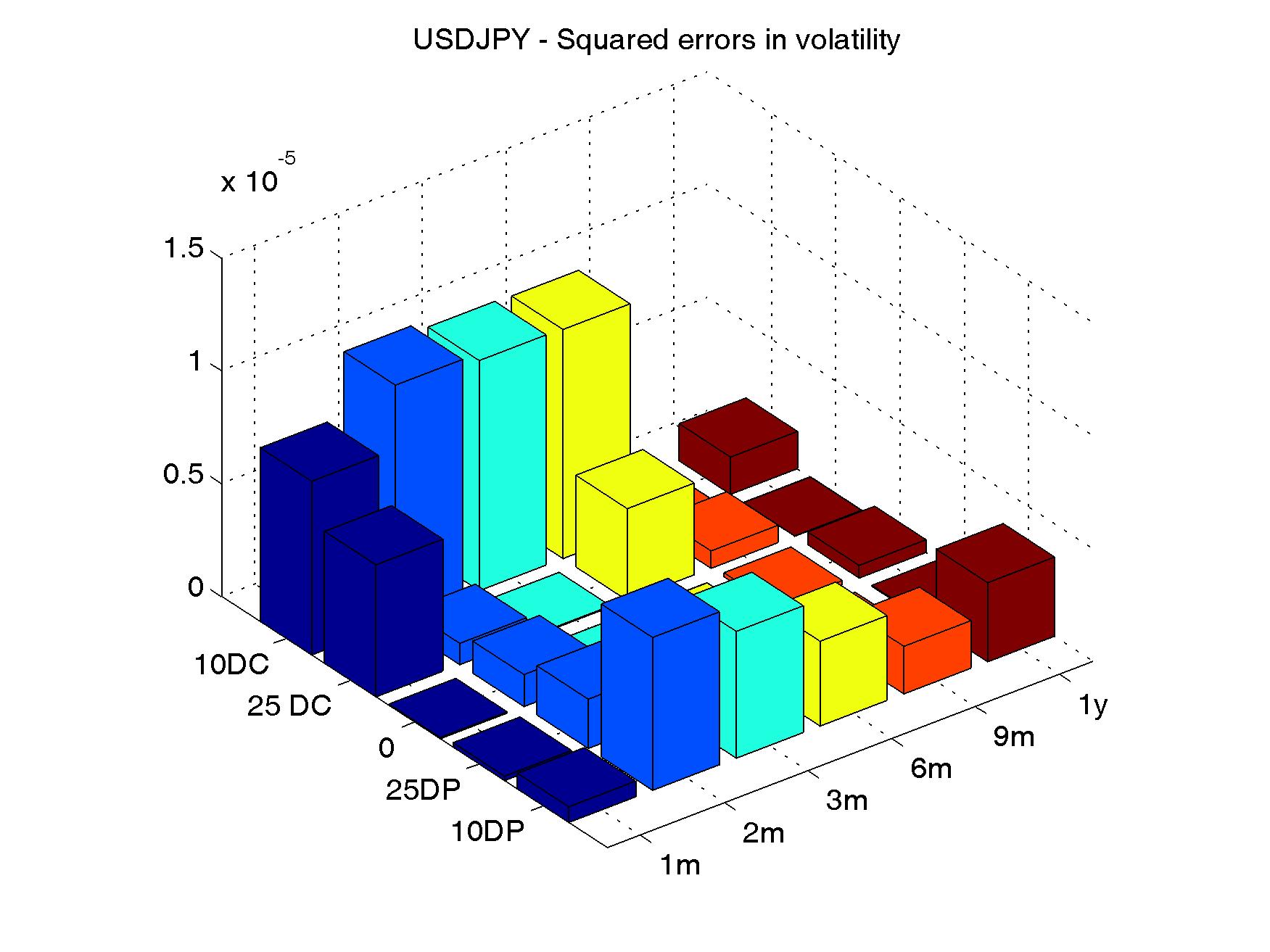}}\\
\subfloat{\label{fig:112}\includegraphics[scale=0.08]{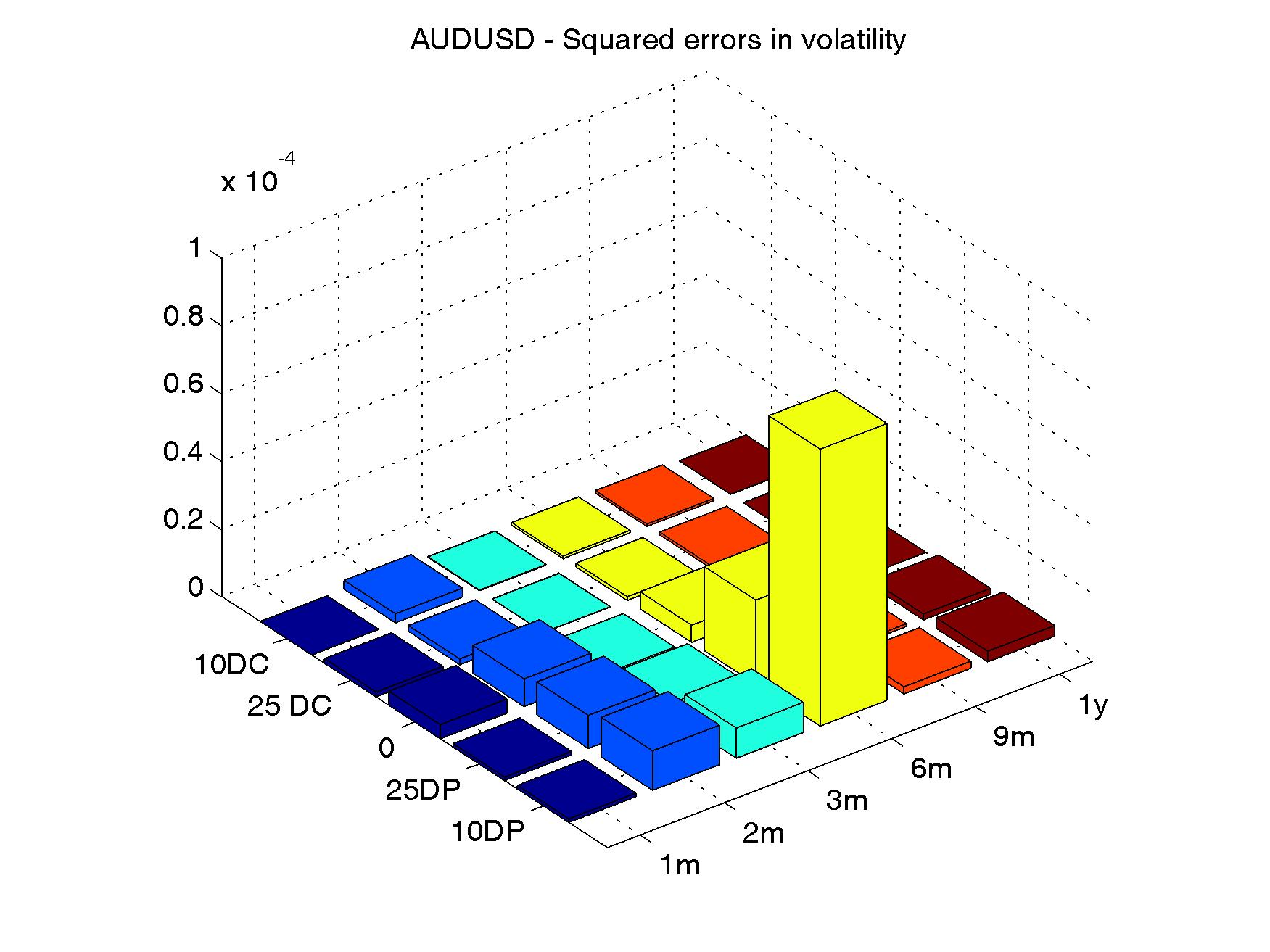}}
\subfloat{\label{fig:113}\includegraphics[scale=0.08]{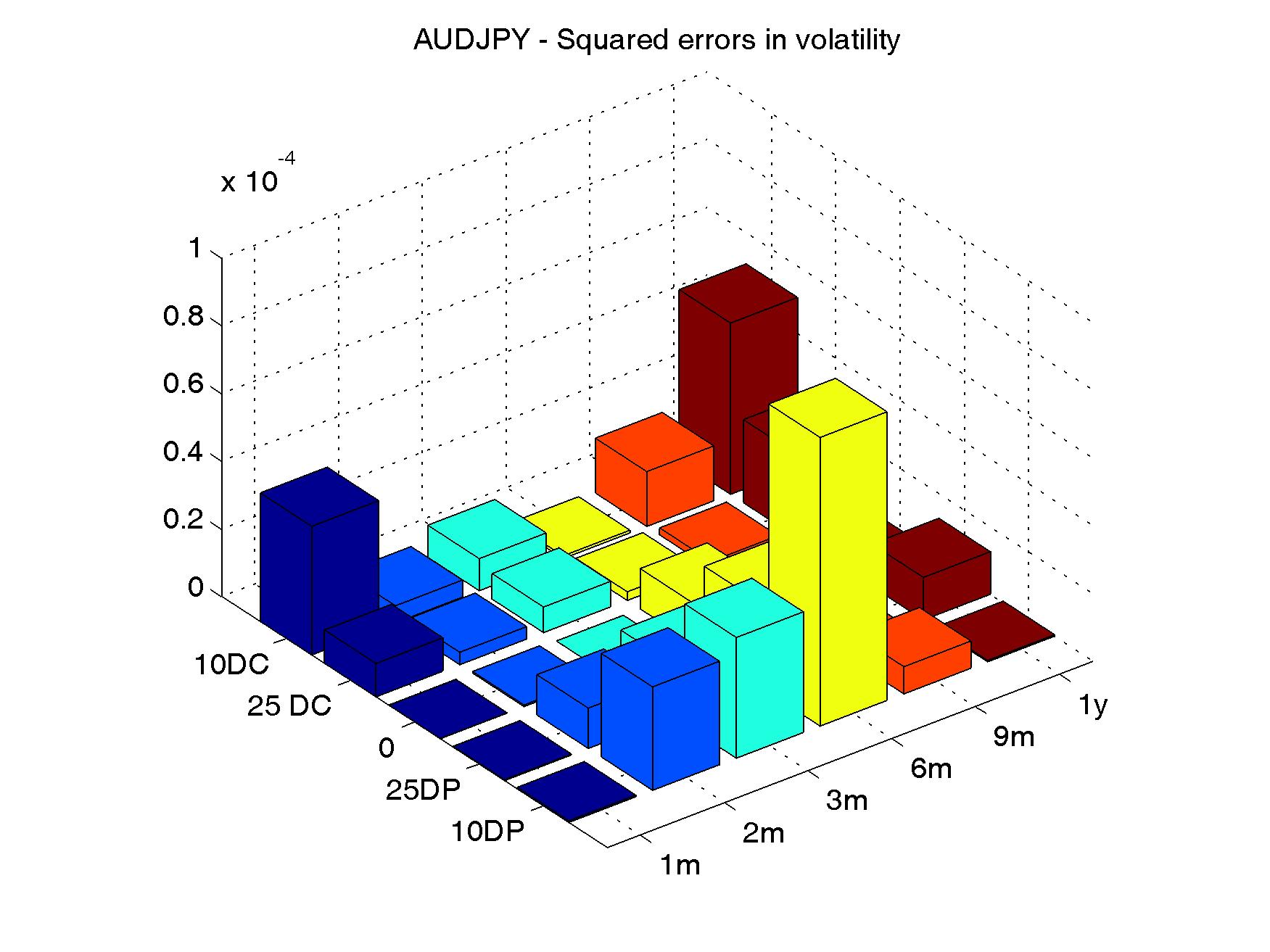}}
\caption{In-sample implied volatilities squared error for the joint calibration on 1m, 2m, 3m, 6m, 9m and 1y. The associated model parameters may be found in Table \ref{tab:params_AUDUSDJPY}, column "6".}
\label{fig:SE_11}
\end{figure}

\newpage

\section{Appendix A: The conditional Laplace transform}

Recall that $\phi\left(\omega,t,\tau,x,\mathbf{V}\right)=G(\mathtt{i}\omega,t,\tau,x,\mathbf{V})$, with $x^{i,j}(t)=\log{S^{i,j}(t)}$. The functions $\phi, G$ represent resp. the characteristic function and the moment generating function of the log-exchange rate. In order to determine these quantities, we first need to write the PDE satisfied by $G$. First of all we write down the dynamics of $x=x^{i,j}$:
\begin{align}
dx(t)&=\left(\left(r^i-r^j\right)-\frac{1}{2}(\mathbf{a}^i-\mathbf{a}^j)^\top\mathrm{Diag}\left(\mathbf{V}(t)\right)\left(\mathbf{a}^i-\mathbf{a}^j\right)\right)dt\nonumber\\
&+\left(\mathbf{a}^i-\mathbf{a}^j\right)^\top\sqrt{\mathrm{Diag}\left(\mathbf{V}(t)\right)}d\mathbf{Z}^{\mathbb{Q}^i}(t).
\end{align}
We also compute the following covariation terms for $k=1,..,d$:
\begin{align}
d\left\langle x,V_{k} \right\rangle_t &=d\left\langle\int_0^.\left(\mathbf{a}^i-\mathbf{a}^j\right)^\top\sqrt{\mathrm{Diag}\left(\mathbf{V}(u)\right)}d\mathbf{Z}^{\mathbb{Q}^i}(u),\int_0^.\xi_k\sqrt{V_k(u)}\rho_kdZ_k^{\mathbb{Q}^i}(u) \right\rangle_t\nonumber\\
&=d\left\langle \int_0^.\left(a^i_k-a^j_k\right)\sqrt{V_k(u)}dZ_k^{\mathbb{Q}^i}(u),\int_0^.\xi_k\sqrt{V_k(u)}\rho_kdZ_k^{\mathbb{Q}^i}(u) \right\rangle_t\nonumber\\
&=\left(a^i_k-a^j_k\right)V_k(t)\xi_k\rho_kdt.
\end{align}
The Laplace transform $G$ solves the following backward Kolmogorov equation \cite{ks91}:
\begin{align}
&-\frac{\partial G}{\partial t}=\frac{1}{2}\frac{\partial^2 G}{\partial x^2}\left(\mathbf{a}^i-\mathbf{a}^j\right)^\top\mathrm{Diag}\left(\mathbf{V}\right)\left(\mathbf{a}^i-\mathbf{a}^j\right)\nonumber\\
&+\sum_{k=1}^{d}{\frac{\partial^2 G}{\partial x\partial V_k}\left(a^i_k-a^j_k\right)V_k\xi_k\rho_k}+\frac{1}{2}\sum_{k=1}^{d}{\frac{\partial^2 G}{\partial V_k^2}\xi_k^2V_k}\nonumber\\
&+\left(\left(r^i-r^j\right)-\frac{1}{2}\left(\mathbf{a}^i-\mathbf{a}^j\right)^\top\mathrm{Diag}\left(\mathbf{V}\right)\left(\mathbf{a}^i-\mathbf{a}^j\right)\right)\frac{\partial G}{\partial x}\nonumber\\
&+\sum_{k=1}^{d}{\frac{\partial G}{\partial V_k}\kappa_k\left(\theta_k-V_k\right)}
\end{align}
with terminal condition $G\left(\omega,T,0,x,\mathbf{V}\right)=e^{\omega x}$ with $\omega\in\mathbb{R}$. In order to solve this problem we look for an exponential affine solution of the form:
\begin{align}
G\left(\omega,t,\tau,x,\mathbf{V}\right)=\exp\left(A(t,T)+\sum_{k=1}^{d}{B_{k}(t,T)V_k}+C(t,T)x\right),
\end{align}
for some deterministic functions $A,B_k,C$ that may depend on both $t,T$. Upon substitution of the guess and recognition of the terms we obtain the following system of $d+2$ ODE's:
\begin{align}
&\frac{\partial A}{\partial t}+\sum_{k=1}^{d}{B_k(t,T)\kappa_k\theta_k}+\left(r^i-r^j\right)C(t,T)=0;\\
&\frac{\partial B_k}{\partial t}+\frac{1}{2}C^2(t,T)\left(a^i_k-a^j_k\right)^2+C(t,T)B_k(t,T)\left(a^i_k-a^j_k\right)\rho_k\xi_k\nonumber\\
&+\frac{1}{2}B_k^2(t,T)\xi_k^{2}-\frac{1}{2}\left(a^i_k-a^j_k\right)^2C(t,T)-B_k(t,T)\kappa_k=0;\\
&\frac{\partial C}{\partial t}=0\label{der1},
\end{align}
with terminal conditions: $ A(T,T)=0, \quad B_k(T,T)=0, \quad C(T,T)=\omega $ for $k=1,..,d$. From \eqref{der1} and its terminal condition, we deduce that $C(t,T)=\omega$ for $t\in[0,T]$, so we can rewrite the system as follows:
\begin{align}
&\frac{\partial A}{\partial t}+\sum_{k=1}^{d}{\kappa_k\theta_kB_k(t,T)}+\left(r^i-r^j\right)\omega=0;\label{eq_A}\\
&\frac{\partial B_k}{\partial t}+\frac{1}{2}B_k^2(t,T)\xi_k^{2}+\left(-\kappa_k+\omega\left(a^i_k-a^j_k\right)\rho_k\xi_k\right)B_k(t,T)\nonumber\\
&+\frac{\omega^2-\omega}{2}\left(a^i_k-a^j_k\right)^2=0,\quad k=1,..,d.\label{Riccati1}
\end{align}
Now for $k=1,..,d$ we assume that $B_k(t,T)$ can be written by means of a function $E_k(t,T)$  and set:
\begin{align}
B_k(t,T)=\frac{\frac{\partial}{\partial t}E_k(t,T)}{\frac{\xi_k^{2}}{2}E_k(t,T)}\label{E1},
\end{align}
then the solution for \eqref{Riccati1} is:
\begin{align}
B_k(t,T)=\frac{\left(\omega^2-\omega\right)}{2}\left(a^i_k-a^j_k\right)^2\frac{1-e^{-\sqrt{\Delta_k}(T-t)}}{\lambda_k^+e^{-\sqrt{\Delta_k}(T-t)}-\lambda_k^-},
\end{align}
with
\begin{align}
\Delta_k&=\left(-\kappa_k+\omega\left(a^i_k-a^j_k\right)\rho_k\xi_k\right)^2-\xi_k^{2}\left(\omega^2-\omega\right)\left(a^i_k-a^j_k\right)^2\\
\lambda_k^\pm&=\frac{\left(-\kappa_k+\omega\left(a^i_k-a^j_k\right)\rho_k\xi_k\right)\pm\sqrt{\Delta_k}}{2}.
\end{align}
Equipped with the solution for $B_k(t,T)$ we can now compute $A(t,T)$ as follows:
\begin{align}
A(T,T)-A(t,T)&=\int_{t}^{T}{\frac{\partial}{\partial u}A(u,T)du}\nonumber\\
A(t,T)&=\int_{t}^{T}{\sum_{k=1}^{d}{\kappa_k\theta_kB_k(u,T)}+\left(r^i-r^j\right)\omega du}\nonumber\\
&=\left(r^i-r^j\right)\omega(T-t)+\sum_{k=1}^{d}{\kappa_k\theta_k\int_{t}^{T}{B_k(u,T)du}}\nonumber\\
&=\left(r^i-r^j\right)\omega(T-t)+\sum_{k=1}^{d}{\frac{2\kappa_k\theta_k}{\xi_k^{2}}\int_{t}^{T}{\frac{\frac{\partial}{\partial t}E_k(t,T)}{E_k(t,T)}du}},
\end{align}
which implies that the solution for $A(t,T)$ is
\begin{align}
A(t,T)&=\left(r^i-r^j\right)\omega(T-t)+\sum_{k=1}^{d}{\frac{2\kappa_k\theta_k}{\xi_k^{2}}\log{\frac{\lambda_k^+-\lambda_k^-}{\lambda_k^+e^{\lambda_k^-(T-t)}-\lambda_k^-e^{\lambda_k^+(T-t)}}}}\nonumber\\
&=\left(r^i-r^j\right)\omega(T-t)+\sum_{k=1}^{d}{A_k(t,T)},
\end{align}
where the functions $A_k(t,T)$ are implicitly defined by the last equality for $k=1,..,d$.
Now we obtain the statement of the proposition once we replace $B^{i,j}_k(\tau)=B_k(t,T),A^{i,j}_k(\tau)=A_k(t,T)$ with $\tau=T-t$.

\section{Appendix B: Expansions}

The calibrations that were presented in Sec.\ \ref{sec:calib} were performed using a deterministic gradient-based optimizers of
the squared distance between the model implied volatilities and market ones.
Model implied volatilities are extracted from the prices produced by the FFT routine. The success of the optimization routine
might be jeopardized by the likely existence of multiple local minima. Hence, it is crucial to start with appropriate initial
guess for the model parameters that are as close as possible to the global minimum. To this aim, approximate solutions for the
option prices are very useful in providing good initial guesses for the parameters.

We present here an approximate expression for the option prices under the multi-Heston model which is asymptotically valid
for small vol-of-vol parameters. The derivation of this formula, which is reported in the next Appendix, relies on arguments which
may be found in \cite{lewis2000} and \cite{dafgra11} (we drop all currency indices, it is intended that we are considering the
$(i,j)$ FX pair).

\begin{proposition}\label{prop_exp1}Assume that all vol-of-vol parameters $\xi_k,k=1,..,d$ have been scaled by the same factor $\alpha>0$.
Then the call price $C(S(t),K,\tau)$ in the Multifactor Heston-based exchange model can be approximated in terms of the scale factor $\alpha$
by differentiating the Black Scholes formula $C_{\scriptscriptstyle\mathrm{BS}}\left(S(t),K,\sigma, \tau\right)$ with respect to the log exchange rate $x(t)=\ln S(t)$
and the integrated variance $v=\sigma^2\tau$:
\begin{align}\label{eq:expansionprice}
        C(S(t),K,\tau)&\approx C_{\scriptscriptstyle\mathrm{BS}}\left(S(t), K, \sigma,\tau\right)\nonumber\\
        &+\alpha\sum_{k=1}^{d}\left({\mathcal{A}_k^{(1)}(\tau)+\mathcal{B}_k^{(1)}(\tau)V_k}\right)
                \partial^2_{xv}C_{\scriptscriptstyle\mathrm{BS}}\left(S(t), K, \sigma,\tau\right)\nonumber\\
        &+\alpha^2\sum_{k=1}^{d}\left({\mathcal{A}_k^{(2)}(\tau)+\mathcal{B}_k^{(2)}(\tau)V_k}\right)
                \partial^2_{vv}C_{\scriptscriptstyle\mathrm{BS}}\left(S(t), K, \sigma,\tau\right)\nonumber\\
        &+\alpha^2\sum_{k=1}^{d}\left({\mathcal{A}_k^{(3)}(\tau)+\mathcal{B}_k^{(3)}(\tau)V_k}\right)
                \partial^3_{xxv}C_{\scriptscriptstyle\mathrm{BS}}\left(S(t), K, \sigma,\tau\right)\nonumber\\
        &+\frac{\alpha^2}{2}\left[\sum_{k=1}^{d}\left({\mathcal{A}_k^{(1)}(\tau)+\mathcal{B}_k^{(1)}(\tau)V_k}\right)\right]^2
                \partial^4_{xxvv}C_{\scriptscriptstyle\mathrm{BS}}\left(S(t), K, \sigma,\tau\right).
\end{align}
We have defined $\tau=T-t$ and the auxiliary real deterministic
functions $\mathcal{B}_k^{(0)},\mathcal{B}_k^{(1)},
\mathcal{B}_k^{(2)},\mathcal{B}_k^{(3)},k=1,..,d$ as
\begin{align}
    \mathcal{B}_k^{(0)}(\tau)&=\left(a^i_k-a^j_k\right)^2\frac{1-e^{-\kappa_k\tau}}{\kappa_k};\label{Bk0}\\
    \mathcal{B}_k^{(1)}(\tau)&= \left(a^i_k-a^j_k\right)^3\rho_k\xi_k\left(\frac{1}{\kappa_k^2}-\frac{e^{-\kappa_k\tau}}{\kappa_k^2}-\frac{\tau e^{-\kappa_k\tau}}{\kappa_k}\right);\\
    \mathcal{B}_k^{(2)}(\tau)    &=\left(a^i_k-a^j_k\right)^4\frac{\xi_k^{2}}{2\kappa_k^2} \left(\frac{1-e^{-2\kappa_k\tau}}{\kappa_k}-2\tau e^{-\kappa_k\tau}\right);\\    \mathcal{B}_k^{(3)}(\tau)&=\left(a^i_k-a^j_k\right)^4\rho_k^{2}\xi_k^{2}\left(\frac{1-e^{-\kappa_k\tau}}{\kappa_k^{3}}-\frac{\tau e^{-\kappa_k\tau}}{\kappa_k^{2}}-\frac{\tau^2e^{-\kappa_k\tau}}{2\kappa_k}\right)\label{Bk21}
\end{align}
and
$\mathcal{A}_k^{(0)},\mathcal{A}_k^{(1)},\mathcal{A}_k^{(2)},\mathcal{A}_k^{(3)},k=1,..,d$
as
\begin{align}
\mathcal{A}_k^{(0)}(\tau)&= \left(a^i_k-a^j_k\right)^2\theta_k\left(\tau +\frac{e^{-\kappa_k\tau}-1}{\kappa_k}\right);\label{mathcal_Ak0}\\
\mathcal{A}_k^{(1)}(\tau)
&= \left(a^i_k-a^j_k\right)^3\theta_k\rho_k\xi_k\left(\frac{\tau}{\kappa_k} +2\frac{e^{-\kappa_k\tau}-1}{\kappa_k^{2}}+\frac{\tau e^{-\kappa_k\tau}}{\kappa_k}\right);\\
\mathcal{A}_k^{(2)}(\tau)
&= \left(a^i_k-a^j_k\right)^4\theta_k\rho_k^2\xi_k^2\left(\frac{\tau}{\kappa_k} +\frac{e^{-\kappa_k\tau}-1}{\kappa_k^3}+\frac{\tau e^{-\kappa_k\tau}}{\kappa_k^2}\right.\nonumber\\
&\left.-\frac{e^{-\kappa_k\tau}-1}{\kappa_k^2}+\frac{\tau^2e^{-\kappa_k\tau}}{2\kappa_k}-\frac{\tau e^{-\kappa_k\tau}}{\kappa_k}+\frac{e^{-\kappa_k\tau}-1}{\kappa_k}\right);\\
\mathcal{A}_k^{(3)}(\tau)
&=\left(a^i_k-a^j_k\right)^4\theta_k\rho_k^2\xi_k^2\left(\frac{\tau}{\kappa_k^2}+3\frac{e^{-\kappa_k\tau}-1}{\kappa_k^3}+2\frac{\tau
e^{-\kappa_k\tau}}{\kappa_k^2}+\frac{\tau^2e^{-\kappa_k\tau}}{2\kappa_k}\right)\label{mathcal_Ak3}.
\end{align}
Finally, the integrated variance reads
\begin{align}
v=\sigma^2\tau=\sum_{k=1}^{d}\bigl({\mathcal{A}_k^{(0)}(\tau)+\mathcal{B}_k^{(0)}(\tau)V_k}\bigr)\label{int_var}.
\end{align}
\end{proposition}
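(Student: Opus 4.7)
My plan is to perform a regular perturbation expansion of the conditional Laplace transform $G^{i,j}$ of (\ref{LTMH1})--(\ref{LTMH2}) around the vanishing vol-of-vol limit $\alpha = 0$, and then convert the resulting Fourier-integral corrections into differential operators acting on the Black--Scholes price via the usual Lewis-type ``Greeks identification''. First I would observe that under the scaling $\xi_k \mapsto \alpha \xi_k$, the variance dynamics degenerate at $\alpha = 0$ to the deterministic ODE $dV_k = \kappa_k(\theta_k - V_k)dt$, whose explicit solution is $V_k(s) = \theta_k + (V_k(t) - \theta_k)e^{-\kappa_k(s-t)}$. Integrating $\sum_k (a_k^i - a_k^j)^2 V_k(s)$ over $[t,T]$ reproduces exactly the total variance $v$ in (\ref{int_var}), with zeroth-order coefficients $\mathcal{A}_k^{(0)}, \mathcal{B}_k^{(0)}$ given by (\ref{mathcal_Ak0}) and (\ref{Bk0}). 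Thus $G^{i,j}|_{\alpha = 0}$ coincides with the Black--Scholes MGF of variance $v$, and the leading term of the expansion is $C_{\scriptscriptstyle\mathrm{BS}}(S(t), K, \sigma, \tau)$.

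Next I would Taylor-expand the closed-form coefficients (\ref{LTMH3})--(\ref{LTMH2}) to order $\alpha^2$. Under $\xi_k \mapsto \alpha \xi_k$ one has $\Delta_k = \kappa_k^2 - 2\kappa_k\omega(a_k^i - a_k^j)\rho_k\xi_k\alpha + O(\alpha^2)$, hence $\sqrt{\Delta_k} = \kappa_k + O(\alpha)$, $\lambda_k^+ = O(\alpha)$ and $\lambda_k^- = -\kappa_k + O(\alpha)$. Carrying the expansion of $B_k^{i,j}(\tau)$ and $A_k^{i,j}(\tau)$ to order $\alpha^2$ and tracking the $\omega$-powers separately yields the decomposition
\begin{align*}
A_k^{i,j}(\tau) + B_k^{i,j}(\tau) V_k
 &= \tfrac{\omega^2-\omega}{2}\bigl(\mathcal{A}_k^{(0)} + \mathcal{B}_k^{(0)} V_k\bigr)
  + \alpha\,\tfrac{\omega(\omega^2-\omega)}{2}\bigl(\mathcal{A}_k^{(1)} + \mathcal{B}_k^{(1)} V_k\bigr) \\
 &\quad + \alpha^2\,\tfrac{(\omega^2-\omega)^2}{4}\bigl(\mathcal{A}_k^{(2)} + \mathcal{B}_k^{(2)} V_k\bigr) \\
 &\quad + \alpha^2\,\tfrac{\omega^2(\omega^2-\omega)}{2}\bigl(\mathcal{A}_k^{(3)} + \mathcal{B}_k^{(3)} V_k\bigr) + O(\alpha^3),
\end{align*}
the explicit $\omega$-polynomials being chosen so that the $\mathcal{A}_k^{(j)}, \mathcal{B}_k^{(j)}$ match those defined in the statement.

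Then I would factor $G^{i,j} = G_{\scriptscriptstyle\mathrm{BS}}\cdot\exp(\Delta_G)$ with $\Delta_G = O(\alpha)$ and expand $\exp(\Delta_G) = 1 + \Delta_G + \tfrac{1}{2}\Delta_G^2 + O(\alpha^3)$. The Black--Scholes MGF satisfies $\partial_x G_{\scriptscriptstyle\mathrm{BS}} = \omega\, G_{\scriptscriptstyle\mathrm{BS}}$ and $\partial_v G_{\scriptscriptstyle\mathrm{BS}} = \tfrac{\omega^2-\omega}{2}G_{\scriptscriptstyle\mathrm{BS}}$, so by differentiation under the Fourier integral in (\ref{price}) any monomial $\omega^p\bigl((\omega^2-\omega)/2\bigr)^q$ multiplying $G_{\scriptscriptstyle\mathrm{BS}}$ inside the integrand corresponds to the operator $\partial_x^p \partial_v^q$ applied to $C_{\scriptscriptstyle\mathrm{BS}}$. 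Matching coefficients: the first-order correction $\alpha\Delta_G^{(1)}$, of the form $\omega\cdot(\omega^2-\omega)/2$, produces the $\partial^2_{xv}C_{\scriptscriptstyle\mathrm{BS}}$ contribution with coefficient $\sum_k(\mathcal{A}_k^{(1)} + \mathcal{B}_k^{(1)}V_k)$; the genuine second-order corrections split into the $\partial^2_{vv}C_{\scriptscriptstyle\mathrm{BS}}$ and $\partial^3_{xxv}C_{\scriptscriptstyle\mathrm{BS}}$ terms; and the cross-term $\tfrac{1}{2}(\alpha\Delta_G^{(1)})^2$, of the form $\omega^2\bigl((\omega^2-\omega)/2\bigr)^2\cdot\bigl(\sum_k\cdots\bigr)^2$, yields the $\partial^4_{xxvv}C_{\scriptscriptstyle\mathrm{BS}}$ contribution with the squared coefficient $\tfrac{1}{2}[\sum_k(\mathcal{A}_k^{(1)} + \mathcal{B}_k^{(1)}V_k)]^2$.

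The principal obstacle will be the algebraic bookkeeping in the second step. The prefactor $1/\xi_k^2$ in (\ref{LTMH3}) becomes $1/(\alpha\xi_k)^2$ after rescaling, so the argument of the logarithm must be expanded to order $\alpha^4$ in order to extract finite $O(1)$, $O(\alpha)$, $O(\alpha^2)$ contributions to $A_k^{i,j}$ and to verify that the spurious negative-order terms cancel (as they must, since $A_k^{i,j}|_{\alpha = 0}$ is finite). Once this bookkeeping is done, identification with the explicit $\mathcal{A}_k^{(j)}, \mathcal{B}_k^{(j)}$ of the statement is a lengthy but routine exercise, and the operator-calculus conversion of the Fourier-integral corrections into derivatives of $C_{\scriptscriptstyle\mathrm{BS}}$ is essentially mechanical in comparison.
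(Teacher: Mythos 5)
Your proposal is correct and its overall architecture --- a regular perturbation of the affine exponent in $\alpha$, followed by the Lewis-type identification of each monomial $\omega^{p}\gamma^{q}$ (with $\gamma=(\omega^{2}-\omega)/2$) under the Fourier integral with $\partial_{x}^{p}\partial_{v}^{q}C_{\scriptscriptstyle\mathrm{BS}}$ --- is the same as the paper's; your displayed decomposition reproduces exactly the paper's coefficients $B_{k,0}=\gamma\mathcal{B}_k^{(0)}$, $B_{k,1}=\omega\gamma\mathcal{B}_k^{(1)}$, $B_{k,2}=\omega^{2}\gamma\mathcal{B}_k^{(3)}+\gamma^{2}\mathcal{B}_k^{(2)}$. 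Where you genuinely diverge is the middle step. You propose to Taylor-expand the closed-form solution \eqref{LTMH3}--\eqref{LTMH2} in $\alpha$; the paper never touches the closed form, but instead substitutes the ansatz $B_k=B_{k,0}+\alpha B_{k,1}+\alpha^{2}B_{k,2}$ directly into the perturbed Riccati equation \eqref{Riccati1}, which yields a triangular hierarchy of \emph{linear} first-order ODEs solved by elementary integrating-factor computations, and then obtains $A$ by integrating $\sum_k\kappa_k\theta_k B_k$ term by term (this is literally how the $\mathcal{A}_k^{(h)}=\kappa_k\theta_k\int_0^\tau\mathcal{B}_k^{(h)}(u)\,du$ are defined). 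The ODE route buys freedom from the singular prefactor: the factor $2\kappa_k\theta_k/(\alpha\xi_k)^{2}$ in front of the logarithm never appears, so there are no negative powers of $\alpha$ to cancel and no need to push the logarithm to order $\alpha^{4}$ --- precisely the bookkeeping you correctly identify as your ``principal obstacle.'' Your route buys a consistency check against the exact transform, at the cost of that heavier and more error-prone algebra; one small sharpening: $\lambda_k^{+}$ is in fact $O(\alpha^{2})$ (its $O(\alpha)$ part cancels, since $\lambda_k^{+}\lambda_k^{-}=-\alpha^{2}\xi_k^{2}\gamma(a_k^{i}-a_k^{j})^{2}/2$ while $\lambda_k^{-}=-\kappa_k+O(\alpha)$), which is what forces the logarithm to start at order $\alpha^{2}$ and makes $A_k|_{\alpha=0}$ finite. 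Both routes land on the same $\mathcal{A}_k^{(h)},\mathcal{B}_k^{(h)}$ and the same five-term price expansion.
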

\begin{proof}
The starting point is given by the Riccati ODE \eqref{Riccati1}
expressed in terms of time-to-maturity $\tau=T-t$ and perturbed by
introducing the vol-of-vol scale parameter $\alpha$:
\begin{align}
&\frac{\partial B_k}{\partial \tau}=\frac{1}{2}B_k^2(\tau)\alpha^2\xi_k^{2}+\left(-\kappa_k+\omega\left(a^i_k-a^j_k\right)\rho_k\alpha\xi_k\right)B_k(\tau)\nonumber\\
&+\frac{\omega^2-\omega}{2}\left(a^i_k-a^j_k\right)^2,\quad
k=1,..,d.
\end{align}
We consider the following expansion in terms of $\alpha$:
$B_k(\tau)=B_{k,0}(\tau)+\alpha B_{k,1}(\tau)+\alpha^2
B_{k,2}(\tau)$. By plugging in the expansion and upon recognition
of terms we obtain the following system of ODE's:
\begin{align}
\frac{\partial B_{k,0}}{\partial \tau}&=-\kappa_kB_{k,0}(\tau)+\frac{\omega^2-\omega}{2}\left(a^i_k-a^j_k\right)^2;\\
\frac{\partial B_{k,1}}{\partial \tau}&=-\kappa_kB_{k,1}(\tau)+\omega\left(a^i_k-a^j_k\right)\rho_k\xi_kB_{k,0}(\tau);\\
\frac{\partial B_{k,2}}{\partial
\tau}&=-\kappa_kB_{k,2}(\tau)+\omega\left(a^i_k-a^j_k\right)\rho_k\xi_kB_{k,1}(\tau)+\frac{1}{2}B_{k,0}^2(\tau)\xi_k^{2}.
\end{align}
If we denote $\gamma:=\frac{\omega^2-\omega}{2}$ then the
solutions are easily computed as:
\begin{align}
B_{k,0}(\tau)&=\underbrace{B_{k,0}(0)}_{=0}e^{-\kappa_k\tau}+e^{-\kappa_k\tau}\int_{0}^{\tau}{e^{\kappa_ku}\gamma\left(a^i_k-a^j_k\right)^2du}\nonumber\\
&=\gamma\mathcal{B}_k^{(0)}(\tau);\\
B_{k,1}(\tau)&=\underbrace{B_{k,1}(0)}_{=0}e^{-\kappa_k\tau}+e^{-\kappa_k\tau}\int_{0}^{\tau}{e^{\kappa_ku}\omega\left(a^i_k-a^j_k\right)\rho_k\xi_k\gamma\mathcal{B}_k^{(0)}(u)du}\nonumber\\
&=\omega\gamma\mathcal{B}_k^{(1)}(\tau);\\
B_{k,2}(\tau)&=\underbrace{B_{k,2}(0)}_{=0}e^{-\kappa_k\tau}+e^{-\kappa_k\tau}\int_{0}^{\tau}{e^{\kappa_ku}\omega^2\gamma\left(a^i_k-a^j_k\right)\rho_k\xi_k\mathcal{B}_k^{(1)}(u) du}\nonumber\\
&+e^{-\kappa_k\tau}\int_{0}^{\tau}{e^{\kappa_ku}\gamma\frac{\xi_k^{2}}{2}\left(\mathcal{B}_k^{(0)}(u)\right)^2 du}\nonumber\\
&=\omega^2\gamma\mathcal{B}_k^{(3)}(\tau)+\gamma^2\mathcal{B}_k^{(2)}(\tau).
\end{align}
 Then we can write the function $B_k(\tau)$ as follows:
\begin{align}
B_k(\tau)=\gamma\mathcal{B}_k^{(0)}(\tau)+\alpha\omega\gamma\mathcal{B}_k^{(1)}(\tau)
+\alpha^2\left(\omega^2\gamma\mathcal{B}_k^{(3)}(\tau)+\gamma^2\mathcal{B}_k^{(2)}(\tau)\right).\label{71}
\end{align}
A direct substitution of \eqref{71} into \eqref{eq_A} allows us to
express the function $A(\tau)$:
\begin{align}
A(\tau)&=\omega\left(r^i-r^j\right)\tau+\sum_{k=1}^{d}{\kappa_k\theta_k\int_{0}^{\tau}B_k(u)du}\nonumber\\
&=\omega\left(r^i-r^j\right)\tau+\gamma\sum_{k=1}^{d}\underbrace{\kappa_k\theta_k\int_{0}^{\tau}{\mathcal{B}_k^{(0)}(u)}du}_{:=\mathcal{A}_k^{(0)}(\tau)}+\omega\gamma\alpha\sum_{k=1}^{d}\underbrace{\kappa_k\theta_k\int_{0}^{\tau}{\mathcal{B}_k^{(1)}(u)}du}_{:=\mathcal{A}_k^{(1)}(\tau)}\nonumber\\
&+\omega^2\gamma\alpha^2\sum_{k=1}^{d}\underbrace{\kappa_k\theta_k\int_{0}^{\tau}{\mathcal{B}_k^{(3)}(u)}du}_{:=\mathcal{A}_k^{(3)}(\tau)}+\alpha^2\gamma^2\sum_{k=1}^{d}\underbrace{\kappa_k\theta_k\int_{0}^{\tau}{\mathcal{B}_k^{(2)}(u)}du}_{:=\mathcal{A}_k^{(2)}(\tau)}.
\end{align}
We consider then the price in terms of Fourier transform as in
(\ref{price}) by replacing the argument
$\omega=\mathtt{i}\lambda$. A Taylor-McLaurin expansion w.r.t.
$\alpha$ gives the following:
\begin{align}\label{price_expansion1}
    C(S(t),K,\tau)&\approx\frac{e^{-r^i\tau}}{2\pi}\int_{\mathcal{Z}}{e^{\mathtt{i}\lambda\left(r^i-r^j\right)\tau+\mathtt{i}\lambda x+\gamma\sum_{k=1}^{d}\left({\mathcal{A}_k^{(0)}(\tau)+\mathcal{B}_k^{(0)}(\tau)V_k}\right)}\Phi(\lambda)d\lambda}\nonumber\\
            &+\alpha\sum_{k=1}^{d}\left({\mathcal{A}_k^{(1)}(\tau)+\mathcal{B}_k^{(1)}(\tau)V_k}\right)\nonumber\\
            &\times\frac{e^{-r^i\tau}}{2\pi}\int_{\mathcal{Z}}{\gamma\mathtt{i}\lambda e^{\mathtt{i}\lambda\left(r^i-r^j\right)\tau+\mathtt{i}\lambda x+\gamma\sum_{k=1}^{d}\left({\mathcal{A}_k^{(0)}(\tau)+\mathcal{B}_k^{(0)}(\tau)V_k}\right)}\Phi(\lambda)d\lambda}\nonumber\\
            &+\alpha^2\sum_{k=1}^{d}\left({\mathcal{A}_k^{(2)}(\tau)+\mathcal{B}_k^{(2)}(\tau)V_k}\right)\nonumber\\
            &\times\frac{e^{-r^i\tau}}{2\pi}\int_{\mathcal{Z}}{\gamma^2 e^{\mathtt{i}\lambda\left(r^i-r^j\right)\tau+\mathtt{i}\lambda x+\gamma\sum_{k=1}^{d}\left({\mathcal{A}_k^{(0)}(\tau)+\mathcal{B}_k^{(0)}(\tau)V_k}\right)}\Phi(\lambda)d\lambda}\nonumber\\
            &+\alpha^2\sum_{k=1}^{d}\left({\mathcal{A}_k^{(3)}(\tau)+\mathcal{B}_k^{(3)}(\tau)V_k}\right)\nonumber\\
            &\times\frac{e^{-r^i\tau}}{2\pi}\int_{\mathcal{Z}}{\gamma\mathtt{i}^2\lambda^2 e^{\mathtt{i}\lambda\left(r^i-r^j\right)\tau+\mathtt{i}\lambda x+\gamma\sum_{k=1}^{d}\left({\mathcal{A}_k^{(0)}(\tau)+\mathcal{B}_k^{(0)}(\tau)V_k}\right)}\Phi(\lambda)d\lambda}\nonumber\\
            &+\frac{\alpha^2}{2}\left[\sum_{k=1}^{d}\left({\mathcal{A}_k^{(1)}(\tau)+\mathcal{B}_k^{(1)}(\tau)V_k}\right)\right]^2\nonumber\\
            &\times\frac{e^{-r^i\tau}}{2\pi}\int_{\mathcal{Z}}{\gamma^2\mathtt{i}^2\lambda^2 e^{\mathtt{i}\lambda\left(r^i-r^j\right)\tau+\mathtt{i}\lambda x+\gamma\sum_{k=1}^{d}\left({\mathcal{A}_k^{(0)}(\tau)+\mathcal{B}_k^{(0)}(\tau)V_k}\right)}\Phi(\lambda)d\lambda}.\nonumber\\
\end{align}
Recall now from \eqref{int_var} the definition of the integrated
Black-Scholes variance. In the previous formula, in the first term
on the right hand side, we recognise the Black-Scholes price in
terms of the characteristic function when the integrated variance
is $v = \sigma^2\tau$:
\begin{align}
C_{\scriptscriptstyle\mathrm{BS}}\left(S(t),K,\sigma,\tau\right)=\frac{e^{-r^i\tau}}{2\pi}\int_{\mathcal{Z}}{e^{\mathtt{i}\lambda\left(r^i-r^j\right)\tau+\mathtt{i}\lambda
x+\frac{(\mathtt{i}\lambda)^2-\mathtt{i}\lambda}{2}v}\Phi(\lambda)d\lambda},
\end{align}
so that the price expansion is of the form

\begin{align}
C(S(t),K,\tau)&\approx C_{\scriptscriptstyle\mathrm{BS}}\left(S(t),K,\sigma,\tau\right)\nonumber\\
&+\alpha\sum_{k=1}^{d}\left({\mathcal{A}_k^{(1)}(\tau)+\mathcal{B}_k^{(1)}(\tau)V_k}\right)\partial^2_{xv}C_{\scriptscriptstyle\mathrm{BS}}\left(S(t),K,\sigma,\tau\right)\nonumber\\
&+\alpha^2\sum_{k=1}^{d}\left({\mathcal{A}_k^{(2)}(\tau)+\mathcal{B}_k^{(2)}(\tau)V_k}\right)\partial^2_{vv}C_{\scriptscriptstyle\mathrm{BS}}\left(S(t),K,\sigma,\tau\right)\nonumber\\
&+\alpha^2\sum_{k=1}^{d}\left({\mathcal{A}_k^{(3)}(\tau)+\mathcal{B}_k^{(3)}(\tau)V_k}\right)\partial^3_{xxv}C_{\scriptscriptstyle\mathrm{BS}}\left(S(t),K,\sigma,\tau\right)\nonumber\\
&+\frac{\alpha^2}{2}\left[\sum_{k=1}^{d}\left({\mathcal{A}_k^{(1)}(\tau)+\mathcal{B}_k^{(1)}(\tau)V_k}\right)\right]^2\partial^4_{xxvv}C_{\scriptscriptstyle\mathrm{BS}}\left(S(t),K,\sigma,\tau\right)\label{p_exp_2}.
\end{align}
From the previous expression we can deduce the relation defining
the deterministic functions
$\mathcal{B}_k^{(h)},\mathcal{A}_k^{(h)},\ h=0,...,3$.
\begin{align}
    \mathcal{B}_k^{(0)}(\tau)&=\left(a^i_k-a^j_k\right)^2\frac{1-e^{-\kappa_k\tau}}{\kappa_k};\\
    \mathcal{B}_k^{(1)}(\tau)&=\left(a^i_k-a^j_k\right)\rho_k\xi_k e^{-\kappa_k\tau}\int_{0}^{\tau}{e^{\kappa_ku}\mathcal{B}_k^{(0)}(u)du};\\
    \mathcal{B}_k^{(2)}(\tau)    &=\frac{\xi_k^{2}}{2\kappa_k}e^{-\kappa_k\tau}\int_{0}^{\tau}{e^{\kappa_ku}\left(\mathcal{B}_k^{(0)}(u)\right)^2 du};\\
    \mathcal{B}_k^{(3)}(\tau)&=\left(a^i_k-a^j_k\right)\rho_k\xi_ke^{-\kappa_k\tau}\int_{0}^{\tau}{e^{\kappa_ku}\mathcal{B}_k^{(1)}(u) du}
\end{align}
and
\begin{align}
\mathcal{A}_k^{(0)}(\tau)&=\kappa_k\theta_k\int_{0}^{\tau}{\mathcal{B}_k^{(0)}(u)du};\\
\mathcal{A}_k^{(1)}(\tau)&=\kappa_k\theta_k\int_{0}^{\tau}{\mathcal{B}_k^{(1)}(u)du};\\
\mathcal{A}_k^{(2)}(\tau)&=\kappa_k\theta_k\int_{0}^{\tau}{\mathcal{B}_k^{(2)}(u)du};\\
\mathcal{A}_k^{(3)}(\tau)&=\kappa_k\theta_k\int_{0}^{\tau}{\mathcal{B}_k^{(3)}(u)du}.
\end{align}
Computing the trivial integrals completes the proof.

\end{proof}

We can now present another formula, which does not involve the computation of option prices, and constitutes an approximation
of the implied volatility surface for a short time to maturity. This formula may constitute a useful alternative in order to get a
quicker calibration for short maturities and provides a good initial guess for the parameters in the calibration routine.
The proof is again provided in detail in the next Appendix.

\begin{proposition}\label{prop_exp2}
For a short time to maturity the implied volatility expansion in terms of the vol-of-vol scale factor $\alpha$ in the
multifactor Heston-based exchange model is given by:
\begin{align*}
    \sigma_\mathrm{imp}^2\approx &\ \sigma_0^2+\alpha\left(\sum_{k=1}^{d}{\frac{\rho_k\xi_k}{2}\left(a^i_k-a^j_k\right)^4V_k}\right)
        \frac{m_f}{\sigma_0^2}\nonumber\\
    +&\ \alpha^2\frac{m_f^2}{12\left(\sigma_0^2\right)^2}\left[\sum_{k=1}^{d}\left(1+2\rho_k^2\right)\xi_k^2\left(a^i_k-a^j_k\right)^4V_k
        -\frac{15}{4\sigma_0^2}\left(\sum_{k=1}^{d}{\rho_k\xi_k\left(a^i_k-a^j_k\right)^3V_k}\right)^2\right],
\end{align*}
where $\sigma_0^2=\left(\mathbf{a}^i-\mathbf{a}^j\right)^{\top}\mathrm{Diag}(\mathbf{V})\left(\mathbf{a}^i-\mathbf{a}^j\right)$
and $m_f=\log \left(\frac{S^{i,j}e^{(r^i-r^j)\tau}}{K^{i,j}}\right)$ denotes the forward log-moneyness.\end{proposition}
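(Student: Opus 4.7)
The plan is to invert the price expansion of Proposition 1 perturbatively in $\alpha$ by equating it with the Black--Scholes price at implied volatility, and then to take the short-maturity limit $\tau\downarrow 0$. I start with the ansatz $\sigma_{\mathrm{imp}}^2 = \sigma_0^2 + \alpha\sigma_1^2 + \alpha^2\sigma_2^2 + O(\alpha^3)$ and impose $C(S(t),K,\tau)=C_{\scriptscriptstyle\mathrm{BS}}(S(t),K,\sigma_{\mathrm{imp}},\tau)$. Writing $v_0=\sum_k(\mathcal{A}_k^{(0)}(\tau)+\mathcal{B}_k^{(0)}(\tau)V_k)$ for the zeroth-order integrated variance appearing inside the Black--Scholes term of Proposition 1, the implied total variance decomposes as $v_{\mathrm{imp}}=v_0+\alpha\sigma_1^2\tau+\alpha^2\sigma_2^2\tau+O(\alpha^3)$, and a Taylor expansion of $C_{\scriptscriptstyle\mathrm{BS}}$ around $v_0$ yields $C_{\scriptscriptstyle\mathrm{BS}}(v_{\mathrm{imp}}) = C_{\scriptscriptstyle\mathrm{BS}}(v_0) + (v_{\mathrm{imp}}-v_0)\partial_v C_{\scriptscriptstyle\mathrm{BS}}(v_0) + \tfrac12(v_{\mathrm{imp}}-v_0)^2\partial^2_{vv}C_{\scriptscriptstyle\mathrm{BS}}(v_0)+\ldots$. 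Equating this with the model-price expansion and matching powers of $\alpha$ produces a triangular system: the $\alpha^1$ balance determines $\sigma_1^2$ through the Greek ratio $\partial^2_{xv}C_{\scriptscriptstyle\mathrm{BS}}/\partial_v C_{\scriptscriptstyle\mathrm{BS}}$, and the $\alpha^2$ balance couples three further ratios $\partial^2_{vv}/\partial_v$, $\partial^3_{xxv}/\partial_v$, $\partial^4_{xxvv}/\partial_v$ with the self-interaction $\tfrac12(v_{\mathrm{imp}}-v_0)^2\partial^2_{vv}C_{\scriptscriptstyle\mathrm{BS}}$.

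The central algebraic step is to reduce the Greek ratios to closed-form rational functions of $d_\pm = (m_f\pm v/2)/\sqrt v$. This is done by exploiting the Black--Scholes PDE identity in the $(x,v)$ variables, $\partial_v C_{\scriptscriptstyle\mathrm{BS}} = \tfrac12(\partial^2_{xx}-\partial_x)C_{\scriptscriptstyle\mathrm{BS}}$, which follows from the fact that the Black--Scholes characteristic function equals $\exp(\omega x + \tfrac12(\omega^2-\omega)v)$. Iterating this identity converts every $v$-derivative into a polynomial combination of spatial derivatives, whose action on $C_{\scriptscriptstyle\mathrm{BS}}$ can then be evaluated via $\partial_x C_{\scriptscriptstyle\mathrm{BS}}=e^{-r^i\tau}F N(d_+)$ and $\partial_v C_{\scriptscriptstyle\mathrm{BS}}=e^{-r^i\tau}F\phi(d_+)/(2\sqrt v)$.

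The final step is passing to the short-$\tau$ regime by Taylor expanding the auxiliary functions at $\tau=0$. Using the leading orders $\mathcal{B}_k^{(0)}(\tau)\sim(a^i_k-a^j_k)^2\tau$, $\mathcal{B}_k^{(1)}(\tau)\sim\tfrac12(a^i_k-a^j_k)^3\rho_k\xi_k\tau^2$, together with the analogous first non-vanishing powers of $\mathcal{B}_k^{(2)},\mathcal{B}_k^{(3)}$ and the fact that each $\mathcal{A}_k^{(h)}$ is one order of $\tau$ higher than the corresponding $\mathcal{B}_k^{(h)}$ and hence subleading, one finds $v_0\to\sigma_0^2\tau$ and $d_\pm\sim m_f/(\sigma_0\sqrt\tau)$. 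Inserting these asymptotics into the Greek-ratio identities collapses the $\tau$-dependence and leaves a rational function of $m_f$ and $\sigma_0^2$ of the stated form.

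The main obstacle will be the combinatorial bookkeeping at order $\alpha^2$: the four short-$\tau$ Greek-ratio asymptotics must be kept consistently up to the correct order in $m_f/(\sigma_0\sqrt\tau)$, and their linear combination with the $(\sigma_1^2)^2$ self-interaction must reproduce simultaneously the linear contribution $(1+2\rho_k^2)\xi_k^2(a^i_k-a^j_k)^4V_k$, the negative ``skew-squared'' correction $-\tfrac{15}{4\sigma_0^2}\bigl(\sum_k\rho_k\xi_k(a^i_k-a^j_k)^3V_k\bigr)^2$, and the overall $m_f^2/12$ prefactor. Because several distinct cross-terms feed each of these pieces, a systematic strategy---reducing everything to $\partial_x$-derivatives of $C_{\scriptscriptstyle\mathrm{BS}}$ at the outset and then collecting the leading-$\tau$ contribution at each fixed power of $d_+$---is essential to avoid sign and coefficient errors.
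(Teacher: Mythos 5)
Your plan coincides with the paper's own proof: expand the integrated implied variance as $\zeta_0+\alpha\zeta_1+\alpha^2\zeta_2$, match the Taylor expansion of $C_{\scriptscriptstyle\mathrm{BS}}$ against the price expansion of Proposition~\ref{prop_exp1} order by order in $\alpha$, reduce everything to the ratios $\partial^2_{xv}C_{\scriptscriptstyle\mathrm{BS}}/\partial_v C_{\scriptscriptstyle\mathrm{BS}}$, $\partial^2_{vv}C_{\scriptscriptstyle\mathrm{BS}}/\partial_v C_{\scriptscriptstyle\mathrm{BS}}$, $\partial^3_{xxv}C_{\scriptscriptstyle\mathrm{BS}}/\partial_v C_{\scriptscriptstyle\mathrm{BS}}$, $\partial^4_{xxvv}C_{\scriptscriptstyle\mathrm{BS}}/\partial_v C_{\scriptscriptstyle\mathrm{BS}}$, and then take the leading small-$\tau$ behaviour of the $\mathcal{A}_k^{(h)},\mathcal{B}_k^{(h)}$ (with $\mathcal{A}_k^{(h)}$ subleading). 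The only cosmetic difference is that you re-derive the Greek ratios from the identity $\partial_v C_{\scriptscriptstyle\mathrm{BS}}=\tfrac12(\partial^2_{xx}-\partial_x)C_{\scriptscriptstyle\mathrm{BS}}$ whereas the paper simply quotes them from Lewis (2000); the structure and the final bookkeeping are otherwise identical.
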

\begin{proof}
We follow the procedure in \cite{dafgra11}. We suppose an
expansion for the integrated implied variance of the form $v =
\sigma^2_\mathrm{imp}\tau=\zeta_0+\alpha\zeta_1+\alpha^2\zeta_2$
and we consider the Black-Scholes formula as a function of the
integrated implied variance and the log exchange rate $x=\log S$:
$C_{\scriptscriptstyle\mathrm{BS}}\left(S(t),K,\sigma,\tau\right)=C_{\scriptscriptstyle\mathrm{BS}}\left(x(t),K,\sigma_{imp}^2\tau,\tau\right)$.
A Taylor-McLaurin expansion gives us the following:
\begin{align}
C_{\scriptscriptstyle\mathrm{BS}}\left(x(t),K,\sigma_{imp}^2\tau,\tau\right)&=C_{\scriptscriptstyle\mathrm{BS}}\left(x(t),K,\zeta_0,\tau\right)+\alpha\zeta_1\partial_{v}C_{\scriptscriptstyle\mathrm{BS}}\left(x(t),K,\zeta_0,\tau\right)\nonumber\\
&+\frac{\alpha^2}{2}\left(2\zeta_2\partial_v
C_{\scriptscriptstyle\mathrm{BS}}\left(x(t),K,\zeta_0,\tau\right)+\zeta_1^2\partial^2_{v^2}C_{\scriptscriptstyle\mathrm{BS}}\left(x(t),K,\zeta_0,\tau\right)\right).
\end{align}
By comparing this with the price expansion \eqref{p_exp_2} we
deduce that the coefficients must be of the form:
\begin{align}
    \zeta_0&=v_0;\label{zeta_0}\\
    \zeta_1&=\frac{\sum_{k=1}^{d}\left({\mathcal{A}_k^{(1)}(\tau)+\mathcal{B}_k^{(1)}(\tau)V_k}\right)\partial^2_{xv}C_{\scriptscriptstyle\mathrm{BS}}}{\partial_v C_{\scriptscriptstyle\mathrm{BS}}};\label{zeta_1}\\
    \zeta_2&=\frac{-\zeta_1^2\partial^2_{vv} C_{\scriptscriptstyle\mathrm{BS}}+2\sum_{k=1}^{d}\left({\mathcal{A}_k^{(2)}(\tau)+\mathcal{B}_k^{(2)}(\tau)V_k}\right)\partial^2_{vv}C_{\scriptscriptstyle\mathrm{BS}}}{2\partial_v C_{\scriptscriptstyle\mathrm{BS}}}\nonumber\\
        &+ \frac{2\sum_{k=1}^{d}\left({\mathcal{A}_k^{(3)}(\tau)+\mathcal{B}_k^{(3)}(\tau)V_k}\right)\partial^3_{xxv}C_{\scriptscriptstyle\mathrm{BS}}
        +\left[\sum_{k=1}^{d}\left({\mathcal{A}_k^{(1)}(\tau)+\mathcal{B}_k^{(1)}(\tau)V_k}\right)\right]^2\partial^4_{xxvv}C_{\scriptscriptstyle\mathrm{BS}}}
        {2\partial_v C_{\scriptscriptstyle\mathrm{BS}}} \label{zeta_2},
\end{align}
where the Black-Scholes formula
$C_{\scriptscriptstyle\mathrm{BS}}\left(x(t),K,\sigma_{imp}^2\tau,\tau\right)$
is evaluated at the point $\left(x,K,v_0,\tau\right)$. In order to
find the values of $\zeta_1,\zeta_2$, we differentiate
\eqref{Bk0}-\eqref{Bk21} thus obtaining the following ODE's:
\begin{align*}
\frac{\partial \mathcal{B}_k^{(0)}}{\partial \tau}&=-\kappa_k\mathcal{B}_k^{(0)}(\tau)+\left(a^i_k-a^j_k\right)^2;\\
\frac{\partial \mathcal{B}_k^{(1)}}{\partial \tau}&=-\kappa_k\mathcal{B}_k^{(1)}(\tau)+\left(a^i_k-a^j_k\right)\rho_k\xi_k\mathcal{B}_k^{(0)}(\tau);\\
\frac{\partial \mathcal{B}_k^{(2)}}{\partial \tau}&=-\kappa_k\mathcal{B}_k^{(2)}(\tau)+\frac{1}{2}\xi_k^2\mathcal{B}_k^{(0)}(\tau)^2;\\
\frac{\partial \mathcal{B}_k^{(3)}}{\partial
\tau}&=-\kappa_k\mathcal{B}_k^{(3)}(\tau)+\left(a^i_k-a^j_k\right)\rho_k\xi_k\mathcal{B}_k^{(1)}(\tau).
\end{align*}
We consider a Taylor-McLaurin expansion in terms of $\tau$:
\begin{align}
\mathcal{B}_k^{(0)}(\tau)&=\left(a^i_k-a^j_k\right)^2\tau-\frac{\tau^2}{2}\kappa_k\left(a^i_k-a^j_k\right)^2;\\
\mathcal{B}_k^{(1)}(\tau)&=\frac{\tau^2}{2}\left(a^i_k-a^j_k\right)^3\rho_k\xi_k-\frac{2}{3}\tau^3\kappa_k\left(a^i_k-a^j_k\right)^3\rho^k\xi^k;\\
\mathcal{B}_k^{(2)}(\tau)&=\frac{\tau^3}{6}\xi_k^2\left(a^i_k-a^j_k\right)^4;\\
\mathcal{B}_k^{(3)}(\tau)&=\frac{\tau^3}{6}\left(a^i_k-a^j_k\right)^4\rho_k^2\xi_k^2.
\end{align}
Noting from \eqref{mathcal_Ak0}-\eqref{mathcal_Ak3} that
$\mathcal{A}_k^{(i)}$ are one order in $\tau$ higher than the
corresponding $\mathcal{B}_{k}^{(i)}$, the following
approximations hold:
\begin{align}
\sum_{k=1}^{d}\left({\mathcal{A}_k^{(0)}(\tau)+\mathcal{B}_k^{(0)}(\tau)V_k}\right)&=\sum_{k=1}^{d}{\left(a^i_k-a^j_k\right)^2V_k}\tau+o(\tau);\label{start2}\\
\sum_{k=1}^{d}\left({\mathcal{A}_k^{(1)}(\tau)+\mathcal{B}_k^{(1)}(\tau)V_k}\right)&=\sum_{k=1}^{d}{\rho_k\xi_k\left(a^i_k-a^j_k\right)^3V_k}\frac{\tau^2}{2}+o(\tau^2);\\
\sum_{k=1}^{d}\left({\mathcal{A}_k^{(2)}(\tau)+\mathcal{B}_k^{(2)}(\tau)V_k}\right)&=\sum_{k=1}^{d}{\xi_k^2\left(a^i_k-a^j_k\right)^4V_k}\frac{\tau^3}{6}+o(\tau^3);\\
\sum_{k=1}^{d}\left({\mathcal{A}_k^{(3)}(\tau)+\mathcal{B}_k^{(3)}(\tau)V_k}\right)&=\sum_{k=1}^{d}{\rho_k^2\xi_k^2\left(a^i_k-a^j_k\right)^4V_k\frac{\tau^3}{6}}+o(\tau^3).
\end{align}
We introduce two variables: the log-forward moneyness $m_f=\log\left(\frac{Se^{(r^i-r^j)\tau}}{K}\right)$ and \\
$V=\left(\mathbf{a}^i-\mathbf{a}^j\right)^{\top}\mathrm{Diag}(\mathbf{V})\left(\mathbf{a}^i-\mathbf{a}^j\right)\tau$.
Then, from \cite{lewis2000}, we consider the following ratios
among the derivatives of the Black-Scholes formula:
\begin{align}
\frac{\partial^2_{xv}C_{\scriptscriptstyle\mathrm{BS}}\left(x,K,V,\tau\right)}{\partial_{v}C_{\scriptscriptstyle\mathrm{BS}}\left(x,K,V,\tau\right)}&=\frac{1}{2}+\frac{m_f}{V};\\
\frac{\partial^2_{vv}C_{\scriptscriptstyle\mathrm{BS}}\left(x,K,V,\tau\right)}{\partial_{v}C_{\scriptscriptstyle\mathrm{BS}}\left(x,K,V,\tau\right)}&=\frac{m_f^2}{2V^2}-\frac{1}{2V}-\frac{1}{8};\\
\frac{\partial^3_{xxv}C_{\scriptscriptstyle\mathrm{BS}}\left(x,K,V,\tau\right)}{\partial_{v}C_{\scriptscriptstyle\mathrm{BS}}\left(x,K,V,\tau\right)}&=\frac{1}{4}+\frac{m_f-1}{V}+\frac{m_f^2}{V^2};\\
\frac{\partial^4_{xxvv}C_{\scriptscriptstyle\mathrm{BS}}\left(x,K,V,\tau\right)}{\partial_{v}C_{\scriptscriptstyle\mathrm{BS}}\left(x,K,V,\tau\right)}&=\frac{m_f^4}{2V^4}+\frac{m_f^2\left(m_f-1\right)}{2V^3}.\label{end2}
\end{align}
Upon substitution of \eqref{start2}-\eqref{end2} into
\eqref{zeta_0}-\eqref{zeta_2}, we obtain the values for
$\zeta_i,i=0,1,2$ allowing us to express the expansion of the
implied volatility.
\begin{align}
\zeta_0&=\left(\mathbf{a}^i-\mathbf{a}^j\right)^{\top}\mathrm{Diag}(\mathbf{V})\left(\mathbf{a}^i-\mathbf{a}^j\right)\tau;\\
\zeta_1&=\left(\sum_{k=1}^{d}{\frac{\rho_k\xi_k}{2}\left(a^i_k-a^j_k\right)^3V_k}\right)\frac{m_f}{\left(\mathbf{a}^i-\mathbf{a}^j\right)^{\top}\mathrm{Diag}(\mathbf{V})\left(\mathbf{a}^i-\mathbf{a}^j\right)}\tau;\\
\zeta_2&=\frac{m_f^2}{\left(\left(\mathbf{a}^i-\mathbf{a}^j\right)^{\top}\mathrm{Diag}(\mathbf{V})\left(\mathbf{a}^i-\mathbf{a}^j\right)\right)^2}\tau\Bigg[\frac{1}{12}\left(\sum_{k=1}^{d}{\xi_k^2\left(a^i_k-a^j_k\right)^4V_k}\right)\Bigg.\nonumber\\
&\Bigg.+\frac{1}{6}\left(\sum_{k=1}^{d}{\rho_k^2\xi_k^2\left(a^i_k-a^j_k\right)^4V_k}\right)-\frac{5}{16}\frac{\left(\sum_{k=1}^{d}{\rho_{k}\xi_{k}\left(a^i_k-a^j_k\right)^3V_k}\right)^2}{\left(\mathbf{a}^i-\mathbf{a}^j\right)^{\top}\mathrm{Diag}(\mathbf{V})\left(\mathbf{a}^i-\mathbf{a}^j\right)}\Bigg].
\end{align}

\end{proof}
\newpage
\bibliography{biblio}
\bibliographystyle{apalike}
\bibliographystyle{apa}

\end{document}